\newtheorem{thm}{Theorem}
\newtheorem{lemma}{Lemma}
\newtheorem{remark}{Remark}
\newcommand{\be}{\begin{equation}}
\newcommand{\ee}{\end{equation}}
\newcommand{\ba}{\begin{eqnarray}}	
\newcommand{\ea}{\end{eqnarray}}
\newcommand{\bee}{\begin{equation*}}
\newcommand{\eee}{\end{equation*}}
\newcommand{\baa}{\begin{eqnarray*}}
\newcommand{\eaa}{\end{eqnarray*}}
\newcommand{\essinf}{\operatornamewithlimits{ess\,inf}}
\DeclareMathOperator*{\argmin}{arg\,min}
\DeclareMathOperator{\sgn}{sgn}
\newcommand{\R}{\mathbb{R}}
\newcommand{\F}{\mathbb{F}}
\newcommand{\X}{\mathbb{X}}
\newcommand{\D}{\mathcal{D}}
\begin{document}
\title{Estimation of a Two-component Mixture Model with Applications to Multiple Testing}
\author{Rohit Kumar Patra and Bodhisattva Sen \\ Columbia University, USA}
\date{}
\maketitle
\begin{abstract}
We consider a two-component mixture model with one known component. We develop methods for estimating the mixing proportion and the unknown distribution nonparametrically, given i.i.d.~data from the mixture model, using ideas from shape restricted function estimation. We establish the consistency of our estimators. We find the rate of convergence and asymptotic limit of the estimator for the mixing proportion. Completely automated distribution-free honest finite sample lower confidence bounds are developed for the mixing proportion. Connection to the problem of multiple testing is discussed. The identifiability of the model, and the estimation of the density of the unknown distribution are also addressed. We compare the proposed estimators, which are easily implementable, with some of the existing procedures through simulation studies and analyse two data sets, one arising from an application in astronomy and the other from a microarray experiment.
\end{abstract}
{\bf Keywords:} Cram\'{e}r-von Mises statistic, cross-validation, functional delta method, identifiability, local false discovery rate, lower confidence bound, microarray experiment, projection operator, shape restricted function estimation.

\section{Introduction}
Consider a mixture model with two components, i.e.,
\begin{equation}
\label{eq:MixMod}
	F(x) = \alpha F_s(x) + (1 - \alpha) F_b(x),
\end{equation}
where the cumulative distribution function (CDF) $F_b$ is known, but the mixing proportion $\alpha \in [0,1]$ and the CDF $F_s$ ($\ne F_b$) are unknown. Given a random sample from $F$, we wish to (nonparametrically) estimate $F_s$ and the parameter $\alpha$.

This model appears in many contexts. In multiple testing problems (microarray analysis, neuroimaging) the $p$-values, obtained from the numerous (independent) hypotheses tests, are uniformly distributed on [0,1], under $H_0$, while their distribution associated with $H_1$ is unknown; see e.g., \cite{EfronLargeScaleInf10} and \cite{RobinEtAl07}. Translated to the setting of (\ref{eq:MixMod}), $F_b$ is the uniform distribution and the goal is to estimate the proportion of false null hypotheses $\alpha$ and the distribution of the $p$-values under the alternative. In addition, a reliable estimator of $\alpha$ is important when we want to assess or control multiple error rates, such as the false discovery rate of~\cite{BenjHoch95}.

In contamination problems, the distribution $F_b$, for which reasonable assumptions can be made, may be contaminated by an arbitrary distribution $F_s$, yielding a sample drawn from $F$ as in (\ref{eq:MixMod}); see e.g., \cite{McPeel00}. For example, in astronomy, such situations arise quite often: when observing some variable(s) of interest (e.g., metallicity, radial velocity) of stars in a distant galaxy, foreground stars from the Milky Way, in the field of view, contaminate the sample; the galaxy (``signal'') stars can be difficult to distinguish from the foreground stars as we can only observe the stereographic projections and not the three dimensional position of the stars (see \cite{WalkerEtAl09}). Known physical models for the foreground stars help us constrain $F_b$, and the focus is on estimating the distribution of the variable for the signal stars, i.e., $F_s$. We discuss such an application in more detail in Section~\ref{sec:astro}. Such problems also arise in High Energy physics where often the signature of new physics is evidence of a significant-looking peak at some position on top of a rather smooth background distribution; see e.g., \cite{Lyons08}.


Most of the previous work on this problem assume some constraint on the form of the unknown distribution $F_s$, e.g., it is commonly assumed that the distributions belong to certain parametric models, which lead to techniques based on maximum likelihood (see e.g., \cite{Cohen67} and \cite{Lindsay83}), minimum chi-square (see e.g., \cite{Day69}), method of moments (see e.g., \cite{LindsayBasak93}), and moment generating functions (see e.g., \cite{QuandtRamsey78}). \cite{Bordes06} assume that both the components belong to an unknown symmetric location-shift family. \cite{Jin08} and \cite{CaiJin10} use empirical characteristic functions to estimate $F_s$ under a semiparametric normal mixture model. In multiple testing, this problem has been addressed by various authors and different estimators and confidence bounds for $\alpha$ have been proposed in the literature under certain assumptions on $F_s$ and its density, see e.g., \cite{Storey02}, \cite{GenoWass04}, \cite{MeinRice06}, \cite{MeinBuhl05}, \cite{CelisseRobin10} and \cite{LangaasEtAl05}. For the sake of brevity, we do not discuss the above references here but come back to this application in Section~\ref{sec:MultTest}. 

In this paper we provide a methodology to estimate $\alpha$ and $F_s$ (nonparametrically), without assuming any constraint on the form of $F_s$.  The main contributions of our paper can be summarised in the following.
\begin{itemize}
\item  We investigate the identifiability of~\eqref{eq:MixMod} in complete generality. 

\item  When $F$ is a continuous CDF, we develop an honest finite sample lower confidence bound for the mixing proportion $\alpha$. We believe that this is the first attempt to construct a distribution-free lower confidence bound for $\alpha$ that is also tuning parameter-free. 

\item Two different estimators of $\alpha$ are proposed and studied. We derive the rate of convergence and asymptotic limit for one of the proposed estimators.

\item A nonparametric estimator of $F_s$ using ideas from shape restricted function estimation is proposed and its consistency is proved. Further, if $F_s$ has a non-increasing density $f_s$, we can also consistently estimate $f_s$.

\end{itemize}

The paper is organised as follows. In Section~\ref{sec:Mod&Iden} we address the identifiability of the model given in~\eqref{eq:MixMod}. In Section~\ref{sec:Est} we propose an estimator of $\alpha$ and investigate its theoretical properties, including its consistency, rate of convergence and asymptotic limit. In Section~\ref{sec:Lowrbnd} we develop a completely automated distribution-free honest finite sample lower confidence bound for $\alpha$. As the performance of the estimator proposed in Section~\ref{sec:Est} depends on the choice of a tuning parameter, in Section~\ref{sec:Choose_c_n} we study a tuning parameter-free heuristic estimator of $\alpha$.  We discuss the estimation of $F_s$ and its density $f_s$ in Section~\ref{sec:FandDensity}. Connection to the multiple testing problem is developed in Section~\ref{sec:MultTest}. In Section~\ref{sec:Simul} we compare the finite sample performance of our procedures, including a plug-in and cross-validated choice of the tuning parameter for the estimator proposed in Section~\ref{sec:Est}, with other methods available in the literature through simulation studies, and provide a clear recommendation to the practitioner. Two real data examples, one arising in astronomy and the other from a microarray experiment, are analysed in Section~\ref{sec:RealData}. Appendix~\ref{sec:proofs} gives the proofs of the results in the paper. 

\section{The model and identifiability}
\label{sec:Mod&Iden}
\subsection{When $\alpha$ is known}
Suppose that we observe an i.i.d.~sample $X_1, X_2, \ldots, X_n$ from $F$ as in (\ref{eq:MixMod}). If $\alpha \in (0,1]$ were known, a naive estimator of $F_s$ would be
\be
 \hat{F}_{s,n}^{\alpha} = \frac{\mathbb{F}_n-(1-\alpha)F_b}{\alpha},
 \label{eq:naive}
\ee
where $\mathbb{F}_n$ is the empirical CDF of the observed sample, i.e., $\mathbb{F}_n(x)  = \sum_{i=1}^n \mathbf{1}\{X_i \le x\}/n$. Although this estimator is consistent, it does not satisfy the basic requirements of a CDF: $\hat{F}_{s,n}^{\alpha}$ need not be non-decreasing or lie between 0 and 1. This naive estimator can be improved by imposing the known shape constraint of monotonicity. This can be accomplished by minimising
\begin{equation}\label{eq:L2Dist}
\int \{W(x) - \hat{F}_{s,n}^{\alpha}(x)\}^2 \ d\mathbb{F}_n(x) \equiv \frac{1}{n} \sum_{i=1}^n \{W(X_i) - \hat{F}_{s,n}^{\alpha}(X_i)\}^2
\end{equation}
over all CDFs $W$. Let $\check{F}_{s,n}^{\alpha}$ be a CDF that minimises (\ref{eq:L2Dist}). The above optimisation problem is the same as minimising $\| \bm{\theta} - \mathbf{V}\|^2
$ over $\bm{\theta} = (\theta_1, \ldots, \theta_n) \in \Theta_{inc}$ where
\be\label{eq:Theta_Inc}
\Theta_{inc} = \{\bm{\theta} \in \mathbb{R}^n:  0 \le \theta_1 \le \theta_2 \le \ldots \le \theta_n \le 1\}, \nonumber
\ee
$\mathbf{V} = (V_1,V_2,\ldots, V_n)$, $V_i := \hat F_{s,n}^{\alpha}(X_{(i)})$, $i=1,2,\ldots,n$, $X_{(i)}$ being the $i$-th order statistic of the sample, and $\|\cdot \|$ denotes the usual Euclidean norm in $\mathbb{R}^n$. The estimator $\hat{\bm{\theta}}$ is uniquely defined by the projection theorem (see e.g., Proposition 2.2.1 on page 88 of \cite{Bertsekas03}); it is the Euclidean projection of $\mathbf{V}$ on the closed convex set $\Theta_{inc} \subset \mathbb{R}^n$. $\hat{\bm{\theta}}$ is related to $\check F_{s,n}^{\alpha}$ via $\check F_{s,n}^{\alpha}(X_{(i)}) = \hat \theta_i$, and can be easily computed using the pool-adjacent-violators algorithm (PAVA); see Section 1.2 of \cite{RWD88}. Thus, $\check{F}_{s,n}^{\alpha}$ is uniquely defined at the data points $X_i$, for all $i=1,\ldots,n$, and can be defined on the entire real line by extending it to a piece-wise constant right continuous function with possible jumps only at the data points. The following result, derived easily from Chapter 1 of \cite{RWD88}, characterises $\check{F}_{s,n}^{\alpha}$.
\begin{lemma}\label{lemma:Fcheck}
Let $\tilde F_{s,n}^\alpha$ be the isotonic regression (see e.g., page 4 of \cite{RWD88}) of the set of points $\{\hat{F}_{s,n}^{\alpha}(X_{(i)})\}_{i=1}^n$. Then $\tilde F_{s,n}^\alpha$ is characterised as the right-hand slope of the greatest convex minorant of the set of points $\{i/n, \sum_{j=0}^i \hat{F}_{s,n}^{\alpha}(X_{(j)})\}_{i=0}^n$. The restriction of $\tilde{F}_{s,n}^{\alpha}$ to $[0,1]$, i.e., $\check{F}_{s,n}^{\alpha} = \min \{\max \{\tilde{F}_{s,n}^{\alpha}, 0\}, 1\},$ minimises (\ref{eq:L2Dist}) over all CDFs.
\end{lemma}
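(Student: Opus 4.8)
The statement bundles two claims: the greatest-convex-minorant (GCM) characterisation of the unconstrained fit $\tilde F_{s,n}^{\alpha}$, and the assertion that clamping it to $[0,1]$ solves the constrained problem \eqref{eq:L2Dist}. The first claim needs no new work. Minimising \eqref{eq:L2Dist} over \emph{nondecreasing} sequences (ignoring the box constraint $\theta_i\in[0,1]$) is exactly the equal-weight isotonic regression of $\mathbf V$, and the fact that its solution is the vector of right-hand slopes of the GCM of the cumulative-sum diagram is the classical characterisation in Chapter~1 of \cite{RWD88}; the common factor $1/n$ in \eqref{eq:L2Dist} is irrelevant, and rescaling both axes of the diagram by a common constant does not change the slopes. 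So I would simply invoke that result to identify $\tilde F_{s,n}^{\alpha}$, and devote the argument to the second claim.

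Write $\mathbf g=(\tilde F_{s,n}^{\alpha}(X_{(1)}),\dots,\tilde F_{s,n}^{\alpha}(X_{(n)}))$ for the unconstrained isotonic regression of $\mathbf V$, and $\check{\bm\theta}$ for its coordinatewise clamp to $[0,1]$, so that $\check\theta_i=\min\{\max\{g_i,0\},1\}$ is the value of $\check F_{s,n}^{\alpha}$ at $X_{(i)}$. The goal is to show $\check{\bm\theta}$ is the Euclidean projection of $\mathbf V$ onto $\Theta_{inc}$, equivalently the minimiser of \eqref{eq:L2Dist} over all CDFs (the reduction to projecting $\mathbf V$ onto $\Theta_{inc}$ being already noted in the text). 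First I would check feasibility: clamping a nondecreasing vector to $[0,1]$ leaves it nondecreasing and in the box, so $\check{\bm\theta}\in\Theta_{inc}$. Optimality I would verify through the projection inequality for a closed convex set: it suffices to prove $\langle \mathbf V-\check{\bm\theta},\ \bm\theta-\check{\bm\theta}\rangle\le 0$ for every $\bm\theta\in\Theta_{inc}$.

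The plan is to split $\mathbf V-\check{\bm\theta}=(\mathbf V-\mathbf g)+(\mathbf g-\check{\bm\theta})$ and show each resulting inner product is nonpositive. For the second piece, since $\check{\bm\theta}$ is the projection of $\mathbf g$ onto the box $[0,1]^n$ and $\bm\theta\in\Theta_{inc}\subseteq[0,1]^n$, the projection inequality for the box gives $\langle \mathbf g-\check{\bm\theta},\ \bm\theta-\check{\bm\theta}\rangle\le 0$ at once. For the first piece I would use that $K:=\{\bm\theta\in\R^n:\theta_1\le\cdots\le\theta_n\}$ is a closed convex cone with $\mathbf g$ the projection of $\mathbf V$ onto it; hence $\langle\mathbf V-\mathbf g,\mathbf g\rangle=0$ and $\langle\mathbf V-\mathbf g,\bm\theta\rangle\le 0$ for all $\bm\theta\in K$ (take $\bm\phi=2\mathbf g$ and $\bm\phi=0$ in the cone's projection inequality, then write $\langle\mathbf V-\mathbf g,\bm\theta\rangle=\langle\mathbf V-\mathbf g,\bm\theta-\mathbf g\rangle+\langle\mathbf V-\mathbf g,\mathbf g\rangle$). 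In particular this holds for $\bm\theta\in\Theta_{inc}\subseteq K$. It then remains to show the correction $\langle \mathbf V-\mathbf g,\ \check{\bm\theta}\rangle$ vanishes, so that $\langle \mathbf V-\mathbf g,\ \bm\theta-\check{\bm\theta}\rangle=\langle \mathbf V-\mathbf g,\ \bm\theta\rangle\le 0$. Here I would invoke the level-set structure of isotonic regression from \cite{RWD88}: $\mathbf g$ is constant on each of its solution blocks $B$, equal to the average of the corresponding $V_i$, so $\sum_{i\in B}(V_i-g_i)=0$; and since $\mathbf g$ is constant on $B$, so is $\check{\bm\theta}$, with some value $c$. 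Thus $\sum_{i\in B}\check\theta_i(V_i-g_i)=c\sum_{i\in B}(V_i-g_i)=0$ on each block, whence $\langle \mathbf V-\mathbf g,\ \check{\bm\theta}\rangle=0$. Combining the two pieces yields the projection inequality and identifies $\check{\bm\theta}$ as the minimiser.

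The one delicate point I would set up most carefully is the vanishing of the cross term: it relies on $\check{\bm\theta}$ being constant on exactly the blocks on which $\mathbf g$ is constant. This holds because coordinatewise clamping is a fixed monotone map, so each block — having a single $\mathbf g$-value — is sent as a whole to $0$, to $1$, or left unchanged; the separation $c_1<\cdots<c_k$ of the distinct block values guarantees no block is split. Everything else is the standard projection/polar-cone machinery together with the block-average identity, both from Chapter~1 of \cite{RWD88}, and the GCM characterisation of $\tilde F_{s,n}^{\alpha}$ is quoted directly from there.
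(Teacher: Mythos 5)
Your proof is correct. A point of comparison worth noting: the paper gives no written proof of this lemma at all --- it is dismissed as ``derived easily from Chapter 1 of \cite{RWD88}'' --- so your argument is a self-contained reconstruction of precisely the facts that citation stands for: the greatest-convex-minorant characterisation of isotonic regression, the block-average identity $\sum_{i\in B}(V_i-g_i)=0$ on solution blocks, and the fact that coordinatewise clamping is the Euclidean projection onto the box $[0,1]^n$. Your decomposition $\mathbf V-\check{\bm\theta}=(\mathbf V-\mathbf g)+(\mathbf g-\check{\bm\theta})$, the polar-cone identities $\langle \mathbf V-\mathbf g,\mathbf g\rangle=0$ and $\langle \mathbf V-\mathbf g,\bm\theta\rangle\le 0$ on the monotone cone, and the vanishing cross term $\langle \mathbf V-\mathbf g,\check{\bm\theta}\rangle=0$ together constitute the standard proof of the truncation property of isotonic regression (the same property the paper invokes elsewhere via Theorem 1.3.4 of \cite{RWD88}), so the route is faithful to what the authors intended, just written out. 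One small simplification: the point you flag as delicate --- that no block is ``split'' by clamping --- needs no argument about the separation of the distinct block values $c_1<\cdots<c_k$. Clamping is a function of the value alone, so $\check{\bm\theta}$ is automatically constant on every block on which $\mathbf g$ is constant; that is all the cross-term computation requires (blocks may merge after clamping, e.g.\ several negative-valued blocks all mapping to $0$, but this is harmless).
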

Isotonic regression and the PAVA are very well studied in the statistical literature with many text-book length treatments; see e.g., \cite{RWD88} and \cite{BarlowEtAl72}. If skillfully implemented, PAVA has a computational complexity of $O(n)$ (see \cite{Grotzinger84}).

\subsection{Identifiability of $F_s$} \label{sec:Ident}
When $\alpha$ is unknown, the problem is considerably harder; in fact, it is non-identifiable. If (\ref{eq:MixMod}) holds for some $F_b$ and $\alpha$ then the mixture model can be re-written as $$ F=(\alpha+\gamma) \left( \frac{\alpha}{\alpha+\gamma} F_s +\frac{\gamma}{\alpha+\gamma} F_b \right) +(1 - \alpha - \gamma)F_b,$$ for $0\leq \gamma \leq 1 - \alpha$, and the term $(\alpha F_s + \gamma F_b)/({\alpha+\gamma})$ can be thought of as the nonparametric component. A trivial solution occurs when we take $\alpha + \gamma = 1$, in which case (\ref{eq:L2Dist}) is minimised when $W = \mathbb{F}_n$. Hence, $\alpha$ is not uniquely defined. To handle the identifiability issue, we redefine the mixing proportion as
\be
 \alpha_0 := \inf \left\{ \gamma \in (0,1]: [F-(1-\gamma)F_b]/{\gamma} \mbox{  is a CDF} \right\}.
\label{eq:alpha.est}
\ee
Intuitively, this definition makes sure that the ``signal'' distribution $F_s$ does not include any contribution from the known ``background'' $F_b$. 

In this paper we consider the estimation of $\alpha_0$ as defined in (\ref{eq:alpha.est}). Identifiability of mixture models has been discussed in many papers, but generally with parametric assumptions on the model. \cite{GenoWass04} discuss identifiability when $F_b$ is the uniform distribution and $F$ has a density. \cite{HunterEtAl07} and \cite{Bordes06} discuss identifiability for location shift mixtures of symmetric distributions. Most authors try to find conditions for the identifiability of their model, while we go a step further and quantify the non-identifiability by calculating $\alpha_0$ and investigating the difference between $\alpha$ and $\alpha_0$. In fact, most of our results are valid even when~\eqref{eq:MixMod} is non-identifiable. 

Suppose that we start with a fixed $F_s, F_b$ and $\alpha$ satisfying~\eqref{eq:MixMod}. As seen from the above discussion we can only hope to estimate $\alpha_0$, which, from its definition in (\ref{eq:alpha.est}), is smaller than $\alpha$, i.e., $\alpha_0 \le  \alpha$. A natural question that arises now is: under what condition(s) can we guarantee that the problem is {\it identifiable}, i.e., $\alpha_0 =  \alpha$? The following lemma gives the connection between $\alpha$ and $\alpha_0$.
\begin{lemma} \label{lemma:Non-Identifiability}
Let $F$ be as in \eqref{eq:MixMod} and $\alpha_0$ as defined in \eqref{eq:alpha.est}. Then 
 \begin{equation} \label{eq:AltDefAlpha}
\alpha_0=\alpha -\sup \left\{ 0 \leq \epsilon \leq 1  : \alpha F_s-\epsilon F_b \mbox{  is a sub-CDF} \right\},
\end{equation}
where sub-CDF is a non-decreasing right-continuous function taking values between 0 and 1. In particular, $\alpha_0<\alpha$ if and only if  there exists $\epsilon \in (0,1)$ such that $\alpha F_s-\epsilon F_b$ is a sub-CDF. Furthermore, $\alpha_0=0$ if and only if $F=F_b.$
\end{lemma}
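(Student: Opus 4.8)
The plan is to turn the infimum defining $\alpha_0$ in \eqref{eq:alpha.est} into a supremum over a nested family of monotonicity constraints by a change of variables. Substituting the mixture representation \eqref{eq:MixMod} into the candidate $G_\gamma:=\{F-(1-\gamma)F_b\}/\gamma$ gives, for $\gamma\in(0,1]$,
\[
 G_\gamma=\frac{\alpha F_s+(\gamma-\alpha)F_b}{\gamma}=\frac{H_\epsilon}{\gamma},\qquad H_\epsilon:=\alpha F_s-\epsilon F_b,\quad \epsilon:=\alpha-\gamma.
\]
Because $F_s$ and $F_b$ are right-continuous with limits $0$ and $1$ at $\mp\infty$, the function $G_\gamma$ automatically inherits right-continuity and satisfies $G_\gamma(-\infty)=0$ and $G_\gamma(+\infty)=1$; hence, since $\gamma>0$, $G_\gamma$ is a CDF if and only if $H_\epsilon$ is non-decreasing. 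This identifies the feasibility of $\gamma$ with a pure shape condition on $H_\epsilon$.

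Next I would record two elementary facts. First, for $\epsilon\ge 0$ a non-decreasing $H_\epsilon$ automatically satisfies $0\le H_\epsilon\le \alpha-\epsilon\le 1$ (from $H_\epsilon(-\infty)=0$, $H_\epsilon(+\infty)=\alpha-\epsilon$), and right-continuity is inherited from $F_s,F_b$, so on $\epsilon\ge 0$ the conditions ``$H_\epsilon$ non-decreasing'' and ``$H_\epsilon$ is a sub-CDF'' coincide; in particular every such $\epsilon$ obeys $\epsilon\le\alpha$. Second, the feasible set is nested downward: if $H_\epsilon$ is non-decreasing and $0\le\epsilon'\le\epsilon$, then $H_{\epsilon'}=H_\epsilon+(\epsilon-\epsilon')F_b$ is a sum of non-decreasing functions, hence non-decreasing. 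Thus $S:=\{0\le\epsilon\le 1:\alpha F_s-\epsilon F_b \text{ is a sub-CDF}\}$ is an interval containing $0$ (note $H_0=\alpha F_s$ is a sub-CDF), bounded above by $\alpha$, and I set $\epsilon^\ast:=\sup S$.

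Now I would reassemble the infimum. For $\gamma\ge\alpha$ the function $G_\gamma$ is a convex combination of the CDFs $F_s$ and $F_b$ and so is always a CDF; for $0<\gamma<\alpha$, feasibility holds iff $\epsilon=\alpha-\gamma\in S$, i.e., by the downward-nested structure, iff $\gamma\ge\alpha-\epsilon^\ast$. Hence the admissible values of $\gamma$ form an interval with left endpoint $\alpha-\epsilon^\ast$, whose infimum over $(0,1]$ equals $\alpha-\epsilon^\ast$; this is \eqref{eq:AltDefAlpha}. The point requiring the most care is the boundary bookkeeping: the constraint $\gamma>0$ is strict, so when $\epsilon^\ast=\alpha$ the value $\alpha-\epsilon^\ast=0$ is not attained, yet the formula still returns the correct $\alpha_0=0$. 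I expect this edge case, together with checking that the inf-to-sup conversion is exact (a short pointwise-limit argument shows $S$ is closed, $H_{\epsilon^\ast}$ being non-decreasing as a limit of non-decreasing $H_\epsilon$), to be the only genuinely delicate part.

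Finally I would read off the two consequences. Since $\epsilon^\ast\ge 0$ always, $\alpha_0<\alpha\iff\epsilon^\ast>0$, and by the nested structure $\epsilon^\ast>0$ is equivalent to the existence of some $\epsilon\in(0,1)$ with $\alpha F_s-\epsilon F_b$ a sub-CDF, giving the stated characterisation. For the last claim, $\alpha_0=0\iff\epsilon^\ast=\alpha$; in that case $H_{\epsilon^\ast}=\alpha F_s-\alpha F_b$ is non-decreasing with $H_{\epsilon^\ast}(\pm\infty)=0$, forcing $H_{\epsilon^\ast}\equiv 0$, i.e., $\alpha F_s=\alpha F_b$, whence $F=\alpha F_s+(1-\alpha)F_b=F_b$. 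The converse is immediate: if $F=F_b$ then $G_\gamma\equiv F_b$ is a CDF for every $\gamma\in(0,1]$, so the infimum in \eqref{eq:alpha.est} is $0$.
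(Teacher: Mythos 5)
Your proof is correct and follows essentially the same route as the paper's: the substitution $\epsilon=\alpha-\gamma$ together with the observation that $\{F-(1-\gamma)F_b\}/\gamma$ is a valid CDF exactly when $\alpha F_s-\epsilon F_b$ is a sub-CDF, which converts the defining infimum into the stated supremum. The only difference is one of rigor, not of route: you make explicit the downward-nesting and closedness of the feasible set $S$ (and the $\epsilon^\ast=\alpha$ edge case), details which the paper's chain of equalities and its concluding ``it is easy to see'' step leave implicit.
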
	

In the following we separately identify $\alpha_0$ for any distribution, be it continuous or discrete or a mixture of the two, with a series of lemmas proved in Appendix \ref{sec:Ident_Cont}. By an application of  the Lebesgue decomposition theorem in conjunction with the Jordan decomposition theorem (see page 142, Chapter V, Section $3a^*$ of \cite{Feller2Old}), we have that any CDF $G$ can be uniquely represented as a weighted sum of a piecewise constant CDF $G^{(d)},$ an absolutely continuous CDF $G^{(a)},$ and a continuous but singular CDF $G^{(s)},$ i.e., $G = \eta_1 G^{(a)}+\eta_2 G^{(d)}+\eta_3 G^{(s)}$, where $\eta_i \ge 0$, for $i = 1,2,3$, and $ \eta_1 + \eta_2 + \eta_3 =1$. However, from a practical point of view, we can assume $\eta_3=0,$ since singular functions almost never occur in practice; see e.g., \cite{Parzen60}. Hence, we may assume 
\be \label{eq:genCDF} G = \eta G^{(a)}+(1-\eta)G^{(d)},\ee
 where $(1-\eta)$ is the sum total of all the point masses of $G$. Let $d(G)$ denote the set of all jump discontinuities of $G$, i.e., $d(G) = \{x \in \mathbb{R}: G(x) - G(x-) >0\}$. Let us define $J_G: d(G) \rightarrow [0,1]$ to be a function defined only on the jump points of $G$ such that $J_G(x) = G(x)-G(x-)$ for all $x \in d(G)$. The following result addresses the identifiability issue when both $F_s$ and $F_b$ are discrete CDFs. 


\begin{lemma}\label{lemma:Identifiability for discrete}
Let $F_s$ and $F_b$ be discrete CDFs. If $d(F_b) \not \subset d(F_s),$ then $\alpha_0=\alpha$, i.e.,~\eqref{eq:MixMod} is identifiable.  If $d(F_b)\subset d(F_s),$  then $ \alpha_0= \alpha \left \{1- \inf_{x\in d(F_b)} J_{F_s}(x)/{J_{F_b}(x)}\right\}.$ Thus, $\alpha_0=\alpha$ if and only if $\inf_{x\in d(F_b)} {J_{F_s}(x)}/{J_{F_b}(x)}=0.$
\end{lemma}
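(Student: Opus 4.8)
The plan is to reduce the identifiability question to the characterization of $\alpha_0$ already established in Lemma~\ref{lemma:Non-Identifiability}. By that lemma, $\alpha_0 = \alpha - \epsilon^*$ where $\epsilon^* := \sup\{0 \le \epsilon \le 1 : \alpha F_s - \epsilon F_b \text{ is a sub-CDF}\}$, so the entire proof amounts to computing $\epsilon^*$ when $F_s$ and $F_b$ are both discrete. First I would translate the sub-CDF condition into pointwise conditions on the jumps. Since $F_b$ is discrete (piecewise constant), the function $g_\epsilon := \alpha F_s - \epsilon F_b$ has jumps exactly at points of $d(F_s) \cup d(F_b)$, and between consecutive jump points $g_\epsilon$ is constant; right-continuity is automatic. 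Thus $g_\epsilon$ is a sub-CDF if and only if it is non-decreasing and stays in $[0,1]$.

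Next I would extract the monotonicity constraint, which is where the two cases split. The increment of $g_\epsilon$ at a point $x$ is $\alpha J_{F_s}(x) - \epsilon J_{F_b}(x)$ (interpreting $J$ as $0$ off the respective jump sets). Non-decreasingness requires this increment to be nonnegative at every $x \in d(F_s) \cup d(F_b)$. At points of $d(F_s) \setminus d(F_b)$ the increment is $\alpha J_{F_s}(x) \ge 0$ automatically; at points of $d(F_b) \setminus d(F_s)$ it is $-\epsilon J_{F_b}(x)$, which is negative for any $\epsilon > 0$. Hence if $d(F_b) \not\subset d(F_s)$, there is a point carrying positive mass of $F_b$ but none of $F_s$, forcing $\epsilon^* = 0$ and therefore $\alpha_0 = \alpha$. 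This disposes of the first assertion.

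When $d(F_b) \subset d(F_s)$, the only binding constraints come from $x \in d(F_b)$, where we need $\alpha J_{F_s}(x) - \epsilon J_{F_b}(x) \ge 0$, i.e., $\epsilon \le \alpha J_{F_s}(x)/J_{F_b}(x)$ for every such $x$. The largest $\epsilon$ satisfying all these simultaneously is $\epsilon^* = \alpha \inf_{x \in d(F_b)} J_{F_s}(x)/J_{F_b}(x)$, from which the formula $\alpha_0 = \alpha\{1 - \inf_{x \in d(F_b)} J_{F_s}(x)/J_{F_b}(x)\}$ follows, and the final ``if and only if'' is immediate. The one point I would check carefully is that the $[0,1]$ range constraint is not active: I would verify that the $\epsilon$ determined by the monotonicity bound still yields $g_{\epsilon^*} \ge 0$ everywhere (this holds because $g_{\epsilon^*}$ is non-decreasing with $g_{\epsilon^*}(-\infty) = 0$) and $g_{\epsilon^*} \le \alpha \le 1$ at $+\infty$, so monotonicity is genuinely the only binding condition.

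The main obstacle I anticipate is bookkeeping around the infimum when $d(F_b)$ is countably infinite: the pointwise constraints $\epsilon \le \alpha J_{F_s}(x)/J_{F_b}(x)$ must be shown to hold at the infimum value $\epsilon^*$ itself, so that $\epsilon^*$ is attained as a supremum over admissible $\epsilon$ (not merely approached). Because each individual constraint defines a closed half-line $\{\epsilon \le c_x\}$ and their intersection over all $x \in d(F_b)$ is again closed, the supremum $\epsilon^*$ is indeed admissible, so no attainment issue arises; I would state this intersection-of-closed-sets argument explicitly to justify replacing the supremum by the infimum of the ratios.
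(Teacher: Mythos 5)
Your proof is correct and takes essentially the same route as the paper: both reduce the problem to the characterization of $\alpha_0$ in Lemma~\ref{lemma:Non-Identifiability}, translate the sub-CDF condition on $\alpha F_s - \epsilon F_b$ into the pointwise jump constraints $\alpha J_{F_s}(x) \ge \epsilon J_{F_b}(x)$, and split into the two cases according to whether $d(F_b) \subset d(F_s)$. Your additional checks (that the $[0,1]$ range constraint is never binding and that the supremum is attained because the admissible set of $\epsilon$ is an intersection of closed half-lines) merely make explicit points the paper's terser argument leaves implicit.
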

Next, let us assume that both $F_s$ and $F_b$ are absolutely continuous CDFs. 
\begin{lemma}\label{lemma:Identifiability for absolutely continuous}
Suppose that $F_s$ and $F_b$ are absolutely continuous, i.e., they have densities $f_s$ and $f_b$, respectively. Then $$\alpha_0=  \alpha \left\{1- \essinf \frac{f_s}{f_b} \right\},$$ where, for any function $g$, $\essinf g = \sup \{a \in \mathbb{R}: \mathfrak{m}(\{x: g(x) <a\})=0\}$, $\mathfrak{m}$ being the Lebesgue measure. As a consequence, $\alpha_0 <  \alpha$ if and only if there exists $c > 0$ such that $f_s\ge c f_b$, almost everywhere w.r.t.~$\mathfrak{m}$.
\end{lemma}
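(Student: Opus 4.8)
The plan is to read off the answer from Lemma~\ref{lemma:Non-Identifiability}, which asserts $\alpha_0 = \alpha - S$ with
\[
S := \sup\{0 \le \epsilon \le 1 : \alpha F_s - \epsilon F_b \text{ is a sub-CDF}\}.
\]
So the whole task reduces to evaluating $S$ under absolute continuity. The first step is to rewrite the sub-CDF constraint in terms of densities. Set $H_\epsilon := \alpha F_s - \epsilon F_b$; since $F_s,F_b$ are absolutely continuous, so is $H_\epsilon$, it is continuous (hence automatically right-continuous), and $H_\epsilon' = \alpha f_s - \epsilon f_b$ almost everywhere $\mathfrak{m}$. Using the standard fact that an absolutely continuous function is non-decreasing if and only if its a.e.-derivative is nonnegative a.e., the monotonicity of $H_\epsilon$ is equivalent to the pointwise inequality $\alpha f_s \ge \epsilon f_b$ a.e. I would then check that, under this inequality, the requirement that $H_\epsilon$ take values in $[0,1]$ is automatic: $H_\epsilon(-\infty)=0$ together with monotonicity gives $H_\epsilon \ge 0$, while $H_\epsilon(+\infty)=\alpha-\epsilon\le 1$; moreover nonnegativity forces $\alpha-\epsilon\ge 0$, so $\epsilon\le\alpha$ is itself a consequence rather than a separate constraint. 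Thus the sub-CDF condition collapses to the single requirement $\alpha f_s \ge \epsilon f_b$ a.e.

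The second step is to solve this inequality for the largest admissible $\epsilon$. Writing $r := f_s/f_b$ on $\{f_b>0\}$ and $r := +\infty$ on $\{f_b=0\}$ (where the inequality is vacuous, so this convention does not affect anything), the condition becomes $\epsilon \le \alpha\, r(x)$ for a.e.\ $x$. By the definition of $\essinf$ recorded in the statement, a constant $\epsilon$ satisfies $\epsilon/\alpha \le r$ a.e.\ precisely when $\epsilon/\alpha \le \essinf r$, so the supremum over admissible $\epsilon$ is $\alpha\,\essinf r = \alpha\,\essinf(f_s/f_b)$; this value is itself admissible because $\mathfrak{m}(\{r < \essinf r\})=0$ (a countable union of null sets). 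Finally I would verify that the upper bound $\epsilon\le 1$ is never binding: if $c := \essinf(f_s/f_b)$, then $f_s \ge c f_b$ a.e., and integrating yields $1=\int f_s \ge c\int f_b = c$, whence $\alpha c \le \alpha \le 1$. Combining, $S = \alpha\,\essinf(f_s/f_b)$, and therefore $\alpha_0 = \alpha - S = \alpha\{1 - \essinf(f_s/f_b)\}$. The stated consequence is then immediate: $\alpha_0 < \alpha \iff S>0 \iff \essinf(f_s/f_b)>0 \iff f_s \ge c f_b$ a.e.\ for some $c>0$.

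The parts requiring the most care are the handling of the set $\{f_b=0\}$ (dealt with by the $+\infty$ convention, which leaves $\essinf r$ unchanged) and the precise equivalence between monotonicity of the absolutely continuous function $H_\epsilon$ and the a.e.\ sign of $\alpha f_s-\epsilon f_b$; both are routine but should be stated cleanly. The one genuinely substantive observation is that $\essinf(f_s/f_b)\le 1$ automatically, via integration of both densities, which is what guarantees that the value-in-$[0,1]$ half of the sub-CDF definition imposes no restriction beyond monotonicity and hence lets the clean formula emerge.
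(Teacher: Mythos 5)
Your proposal is correct and follows essentially the same route as the paper: invoke Lemma~\ref{lemma:Non-Identifiability}, translate the sub-CDF condition into the a.e.\ density inequality $\alpha f_s \ge \epsilon f_b$, and identify the supremum over admissible $\epsilon$ as $\alpha \essinf (f_s/f_b)$. The paper states this as a terse chain of equalities, whereas you additionally verify the points it leaves implicit (that the values-in-$[0,1]$ requirement is automatic, that $\{f_b=0\}$ causes no trouble, and that $\essinf(f_s/f_b)\le 1$ by integration, so the cap $\epsilon \le 1$ never binds) --- worthwhile details, but the argument is the same.
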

The above lemma states that if there does not exist any $c>0$ for which $f_s(x) \ge c f_b(x)$, for almost every $x$, then $\alpha_0 = \alpha$ and we can estimate the mixing proportion correctly. Note that, in particular, if the support of $F_s$ is strictly contained in that of $F_b$, then the problem is identifiable and we can estimate $\alpha$. 

In Appendix~\ref{sec:Ident_Cont} we apply the above two lemmas to  two discrete (Poisson and binomial) distributions and two absolutely continuous (exponential and normal) distributions to obtain the exact relationship between $\alpha$ and $\alpha_0$. In the following lemma, proved in greater generality in Appendix~\ref{sec:Ident_Cont}, we give conditions under which a general CDF $F$, that can be represented as in \eqref{eq:genCDF}, is identifiable.
\begin{lemma}\label{lemma:Identifiabilty}
Suppose that $
F=\kappa F^{(a)}+ (1-\kappa) F^{(d)},$ where $F^{(a)}$ is an absolutely continuous CDF and $F^{(d)}$ is a piecewise constant CDF, for some $\kappa \in (0,1)$.  Then  \eqref{eq:MixMod} is identifiable, if either $F^{(a)}$ or $F^{(d)}$ are identifiable.
 \end{lemma}
 

\section{Estimation}
\label{sec:Est}
\subsection{Estimation of the mixing proportion $\alpha_0$}\label{sec:EstAlpha}
In this section we consider the estimation of $\alpha_0$ as defined in \eqref{eq:AltDefAlpha}. For the rest of the paper, unless otherwise noted,  we assume 
\[ X_1, X_2, \ldots, X_n\text{ is an i.i.d.~sample from }F\text{ as in \eqref{eq:MixMod}}.\] 

Recall the definitions of $\hat{F}_{s,n}^{\gamma}$  and $\check{F}_{s,n}^{\gamma}$, for $\gamma \in (0,1]$; see (\ref{eq:naive}) and (\ref{eq:L2Dist}). When $\gamma=1$, we have $\hat{F}_{s,n}^{\gamma} =\mathbb{F}_n =\check{F}_{s,n}^{\gamma}$ as $\hat{F}_{s,n}^{\gamma}$ (for $\gamma =1$) is a CDF. Whereas, when $\gamma$ is much smaller than $\alpha_0$ the regularisation of $\hat{F}_{s,n}^{\gamma}$ modifies it, and thus $\hat{F}_{s,n}^{\gamma}$ and $\check{F}_{s,n}^{\gamma}$ are quite different. We would like to compare the naive and isotonised estimators $\hat{F}_{s,n}^{\gamma}$ and $\check{F}_{s,n}^{\gamma}$, respectively, and choose the smallest $\gamma$ for which their distance is still small.
This leads to the following estimator of $\alpha_0$:
\be
\hat{\alpha}_0^{c_n}= \inf \left\{ {\gamma \in (0,1]} :  \gamma  d_n(\hat{F}_{s,n}^{\gamma},\check{F}_{s,n}^\gamma) \le \frac{c_n}{\sqrt{n}} \right\},
\label{eq:EstAlpha}
\ee where $c_n$ is a sequence of constants and $d_n$ stands for the $L_2(\mathbb{F}_n)$ distance, i.e., if $g, h: \mathbb{R} \rightarrow \mathbb{R}$ are two functions, then $d_n^2(g,h)={\int \{g(x)-h(x)\}^2 \ d\mathbb{F}_n(x)}.$ It is easy to see that
\be \label{eq:equiv}
d_n(\mathbb{F}_n, \gamma \check{F}_{s,n}^\gamma +(1-\gamma)F_b)= \gamma d_n(\hat{F}_{s,n}^{\gamma},\check{F}_{s,n}^\gamma).
\ee
For simplicity of notation, using (\ref{eq:equiv}), we define $\gamma  d_n(\hat{F}_{s,n}^{\gamma},\check{F}_{s,n}^\gamma)$ for $\gamma =0$ as 
\be
 \lim_{\gamma \rightarrow 0+} \gamma   d_n(\hat{F}_{s,n}^{\gamma},\check{F}_{s,n}^\gamma) = d_n(\mathbb{F}_n, F_b). \label{eq:DistNull}
 \ee
 This convention is followed in the rest of the paper.

The choice of $c_n$ is important, and in the following sections we address this issue in detail. We derive conditions on $c_n$ that lead to consistent estimators of $\alpha_0$. We will also show that particular (distribution-free) choices of $c_n$ will lead to honest  lower confidence bounds for $\alpha_0$.

Next, we prove a result which implies that, in the multiple testing problem, estimators of $\alpha_0$ do not depend on whether we use $p$-values or $z$-values to perform our analysis. Let  $\Psi:\R \to \R$ be a known continuous non-decreasing function. We define $\Psi^{-1} (y) :=\inf \{t \in \R : y \leq \Psi(t)\},$ and $Y_i:= \Psi^{-1}(X_i).$ It is easy to see that  $Y_1,Y_2, \ldots, Y_n$ is an i.i.d.~sample from $G:=\alpha F_s \circ \Psi + (1-\alpha) F_b \circ\Psi.$ Suppose  now that we work with $Y_1,Y_2, \ldots, Y_n$, instead of $X_1, X_2, \ldots, X_n$, and want to estimate $\alpha$. We can define $\alpha_0^Y$ as in \eqref{eq:alpha.est} but with $\{ G, F_b \circ\Psi\}$ instead of  $\{F, F_b\}$. The following result shows that $\alpha_0$ and its estimators proposed in this paper are invariant under such monotonic transformations.
\begin{thm} \label{thm:Distribution_free}
Let $\mathbb{G}_n$ be the empirical CDF of $Y_1, Y_2,\ldots, Y_n$. Also, let $\hat{G}_{s,n}$ and $ \check{G}^\gamma_{s,n}$ be as defined in \eqref{eq:naive} and~\eqref{eq:L2Dist}, respectively, but with $\{\mathbb{G}_n, F_b \circ\Psi\}$ instead of  $\{\mathbb{F}_n, F_b\}$. Then $\alpha_0=\alpha_0^Y$ and  $\gamma   d_n(\hat{F}_{s,n}^{\gamma},\check{F}_{s,n}^\gamma)= \gamma   d_n(\hat{G}_{s,n}^{\gamma},\check{G}_{s,n}^\gamma)$ for all $\gamma \in (0,1]$.
 \end{thm}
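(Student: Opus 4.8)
The plan is to prove the two assertions separately: first the invariance of the estimand, $\alpha_0=\alpha_0^Y$, and then the invariance of the statistic, $\gamma\,d_n(\hat F_{s,n}^{\gamma},\check F_{s,n}^\gamma)=\gamma\,d_n(\hat G_{s,n}^{\gamma},\check G_{s,n}^\gamma)$. The common starting point is the duality between $\Psi$ and its generalized inverse: since $\Psi$ is continuous and non-decreasing and $\Psi^{-1}(x)=\inf\{t:x\le\Psi(t)\}$, one has $\Psi^{-1}(x)\le t\iff x\le\Psi(t)$. This gives $\p(Y_i\le y)=\p(X_i\le\Psi(y))=F(\Psi(y))$, so $G=F\circ\Psi=\alpha(F_s\circ\Psi)+(1-\alpha)(F_b\circ\Psi)$ as claimed; moreover, because $\Psi$ is \emph{continuous} (hence has no flat-induced gaps in its range), $\Psi(\Psi^{-1}(x))=x$ for every $x$ in the range of $\Psi$. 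As in the motivating $p$-value to $z$-value transform, I would take the data and the relevant support to lie in $\mathrm{range}(\Psi)$.

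For the estimand I would work from the alternative formula of Lemma~\ref{lemma:Non-Identifiability}, writing $\alpha_0=\alpha-\sup A$ with $A=\{0\le\epsilon\le1: g_\epsilon \text{ is a sub-CDF}\}$ and $g_\epsilon:=\alpha F_s-\epsilon F_b$, and the analogous $\alpha_0^Y=\alpha-\sup A^Y$ with $A^Y=\{\epsilon: g_\epsilon\circ\Psi \text{ is a sub-CDF}\}$, using the identity $\alpha(F_s\circ\Psi)-\epsilon(F_b\circ\Psi)=g_\epsilon\circ\Psi$. The problem thus reduces to showing $\sup A=\sup A^Y$. The inclusion $A\subseteq A^Y$ is the easy half: the composition of the non-decreasing $g_\epsilon$ with the non-decreasing $\Psi$ is non-decreasing, continuity of $\Psi$ preserves right-continuity, and the range $[0,1]$ is inherited; this already yields $\alpha_0\ge\alpha_0^Y$.

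The reverse inclusion is where I expect the real work. Given that $g_\epsilon\circ\Psi$ is a sub-CDF, I must recover monotonicity and nonnegativity of $g_\epsilon$ itself, but composition with $\Psi$ only constrains $g_\epsilon$ on $\mathrm{range}(\Psi)$, and off that set $g_\epsilon$ is unconstrained, so a priori $A^Y$ could be strictly larger than $A$. I would control this using $\Psi\circ\Psi^{-1}=\mathrm{id}$ on $\mathrm{range}(\Psi)$ together with the fact that all the mass of $F_s$ and $F_b$, hence every point at which monotonicity of $g_\epsilon$ can fail, lies in the closure of $\mathrm{range}(\Psi)$; right-continuity of $g_\epsilon$ then extends the conclusion to the endpoints. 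This is precisely the point at which continuity of $\Psi$ and the richness of its range are indispensable, and it is the main obstacle to a fully general statement.

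For the statistic I would argue at the data points. Since $\Psi^{-1}$ is non-decreasing, $Y_{(i)}=\Psi^{-1}(X_{(i)})$ are the order statistics of the $Y$-sample, so $\mathbb{G}_n(Y_{(i)})=i/n=\mathbb{F}_n(X_{(i)})$, and $(F_b\circ\Psi)(Y_{(i)})=F_b(\Psi(\Psi^{-1}(X_{(i)})))=F_b(X_{(i)})$ by the identity above. Hence the naive estimators of \eqref{eq:naive} agree at the data, $\hat G_{s,n}^{\gamma}(Y_{(i)})=\hat F_{s,n}^{\gamma}(X_{(i)})$; that is, the input vectors $\mathbf V$ to the two projection problems in \eqref{eq:L2Dist}--\eqref{eq:Theta_Inc} are identical. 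By Lemma~\ref{lemma:Fcheck} the isotonic projection (the PAVA fit) depends only on this vector and on the abscissae $i/n$, so the fitted values coincide, $\check G_{s,n}^{\gamma}(Y_{(i)})=\check F_{s,n}^{\gamma}(X_{(i)})$. As $d_n^2$ and the corresponding $L_2(\mathbb{G}_n)$ distance are both averages of the squared discrepancies over the data points, they are equal, and multiplying by $\gamma$ gives the claim. The only routine items left are the treatment of ties and the convention \eqref{eq:DistNull} at $\gamma=0$.
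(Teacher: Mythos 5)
Your proposal is correct and runs on the same engine as the paper's own proof---the relation $\Psi^{-1}(x)\le t\Leftrightarrow x\le\Psi(t)$ and the identity $\Psi(\Psi^{-1}(y))=y$ on the range of $\Psi$---but it is routed through different lemmas, in ways that mostly work in your favour. For $\alpha_0=\alpha_0^Y$ the paper argues directly from \eqref{eq:alpha.est} (after reducing, without loss of generality, to $F_b=\mathrm{Uniform}(0,1)$): it proves the analogue of the inclusion you call hard, $A^Y\subseteq A$, by precisely the substitution $t_i=\Psi^{-1}(y_i)$ that you sketch---invoking $\Psi\circ\Psi^{-1}=\mathrm{id}$ with no comment on the range caveat you make explicit---and then dismisses the reverse inclusion with ``similarly''. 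Your detour through Lemma~\ref{lemma:Non-Identifiability} genuinely simplifies that reverse inclusion: sub-CDFs impose no limit conditions at $\pm\infty$, and $g_\epsilon=\alpha F_s-\epsilon F_b$ is right-continuous for free, so composition with $\Psi$ settles it in two lines; in the paper's CDF formulation this is exactly the direction where the limits $0$ and $1$ silently require the range assumption. Conversely, what you flag as ``the main obstacle'' is the step the paper writes out in a few lines by your own proposed mechanism (pin $g_\epsilon$ down on $\mathrm{range}(\Psi)$ via the identity; off the range $g_\epsilon$ is constant under your standing support assumption), so expanding your sketch poses no real difficulty and is not a gap. For the statistic, both proofs first establish $\mathbb{G}_n=\mathbb{F}_n\circ\Psi$ and $\hat G_{s,n}^{\gamma}(Y_i)=\hat F_{s,n}^{\gamma}(X_i)$; the paper then maps each competitor CDF $B$ of the $Y$-problem to $W=B\circ\Psi^{-1}$ in the $X$-problem, which by itself gives only one inequality between the two minima and glosses over the fact that $B\circ\Psi^{-1}$ need not be right-continuous (the generalized inverse $\Psi^{-1}$ is left-continuous), whereas your appeal to Lemma~\ref{lemma:Fcheck}---the isotonic projection depends only on the vector $(V_1,\ldots,V_n)$ and the design points $i/n$---identifies the two finite-dimensional problems outright and settles both directions at once. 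Finally, the tie issue you defer is routine for a reason worth stating: since $\Psi$ is continuous, $\Psi^{-1}$ is strictly increasing wherever it is finite, so $Y_i=Y_j$ if and only if $X_i=X_j$, and the two projection problems have identical tie structure.
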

\subsection{Consistency of $\hat{\alpha}_0^{c_n}$}
We start with two elementary results on the behaviour of our criterion function  $\gamma d_n(\check{F}_{s,n}^\gamma,\hat{F}_{s,n}^\gamma)$.

\begin{lemma}\label{lemma1}
For $1 \ge \gamma \ge \alpha_0$,  $\gamma d_n(\check{F}_{s,n}^\gamma,\hat{F}_{s,n}^\gamma) \leq d_n(F,\mathbb{F}_n).$ Thus, 
\be
\gamma d_n(\hat{F}_{s,n}^{\gamma},\check{F}_{s,n}^\gamma) \stackrel{a.s.}{\rightarrow}  \left\{ \begin{array}{l} 0,~~ ~~~\gamma-\alpha_0 \geq 0,~~\\ >0,~~\gamma-\alpha_0 < 0. \end{array} \right.
\label{eq:claim3}
\ee

\end{lemma}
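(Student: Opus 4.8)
The plan is to rewrite the criterion function as a projection distance and then treat the two regimes separately. Using the identity~\eqref{eq:equiv} together with the fact that $\check{F}_{s,n}^\gamma$ minimises $\int\{W - \hat{F}_{s,n}^{\gamma}\}^2\, d\mathbb{F}_n$ over all CDFs $W$, I would first record the reformulation
\be
\gamma\, d_n(\hat{F}_{s,n}^{\gamma},\check{F}_{s,n}^\gamma) = \min_{W \text{ a CDF}} d_n\!\left(\mathbb{F}_n,\ \gamma W + (1-\gamma)F_b\right),
\ee
which holds because $\mathbb{F}_n = \gamma \hat{F}_{s,n}^{\gamma} + (1-\gamma)F_b$ makes the objective equal to $\gamma^2 \int (W - \hat{F}_{s,n}^{\gamma})^2\, d\mathbb{F}_n$. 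A useful structural remark is that the feasible sets are nested: for $\gamma_1 < \gamma_2$, any $\gamma_1 W + (1-\gamma_1)F_b$ equals $\gamma_2 W' + (1-\gamma_2)F_b$ with the CDF $W' = (\gamma_1/\gamma_2)W + (1-\gamma_1/\gamma_2)F_b$, so the criterion is non-increasing in $\gamma$. For the first inequality, observe that when $\gamma \ge \alpha_0$ the function $H_\gamma := (F - (1-\gamma)F_b)/\gamma$ is a genuine CDF; indeed, by the same convex-combination argument the admissible $\gamma$ (those for which $H_\gamma$ is a CDF) form the interval $[\alpha_0,1]$, consistent with the definition~\eqref{eq:alpha.est} and the identifiability lemmas. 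Plugging $W = H_\gamma$, for which $\gamma H_\gamma + (1-\gamma)F_b = F$ identically, into the minimisation yields at once $\gamma\, d_n(\hat{F}_{s,n}^{\gamma},\check{F}_{s,n}^\gamma) \le d_n(\mathbb{F}_n, F)$.

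The a.s.\ statement for $\gamma \ge \alpha_0$ is then immediate: by the Glivenko--Cantelli theorem $d_n(\mathbb{F}_n, F)^2 = \int (\mathbb{F}_n - F)^2\, d\mathbb{F}_n \le \|\mathbb{F}_n - F\|_\infty^2 \to 0$ a.s., and since the criterion is non-negative it converges to $0$ a.s.

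For $\gamma < \alpha_0$ I would pass to the population analogue $C(\gamma) := \min_{W} d_F(F, \gamma W + (1-\gamma)F_b)$, where $d_F^2(g,h) = \int (g-h)^2\, dF$, and prove two things: (i) the empirical criterion converges a.s.\ to $C(\gamma)$; and (ii) $C(\gamma) > 0$. For (i), writing $g_W = \gamma W + (1-\gamma)F_b$, I would split $\int(\mathbb{F}_n - g_W)^2\, d\mathbb{F}_n - \int(F - g_W)^2\, dF$ into a term bounded uniformly by $2\|\mathbb{F}_n - F\|_\infty$ and a term $\int (F - g_W)^2\, d(\mathbb{F}_n - F)$; the latter must be controlled \emph{uniformly} over all CDFs $W$. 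Since $\{W\}$ is a uniform Glivenko--Cantelli class (uniformly bounded monotone functions, with polynomial bracketing numbers) and $g_W \mapsto (F-g_W)^2$ is a fixed Lipschitz transformation of $[0,1]$-valued functions, the class $\{(F-g_W)^2\}$ is Glivenko--Cantelli, giving $\sup_W |\int (F-g_W)^2\, d(\mathbb{F}_n - F)| \to 0$ a.s.\ and hence $C_n(\gamma) \to C(\gamma)$ a.s. For (ii), the set of non-decreasing functions is closed and convex in $L_2(F)$, so $C(\gamma)=0$ would force $H_\gamma$ to equal a non-decreasing function $F$-a.e.; since $\gamma < \alpha_0$ means $H_\gamma$ is not a CDF, once this non-monotonicity is seen to be visible on the support of $F$, strict positivity $C(\gamma) > 0$ follows.

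The main obstacle is the uniform convergence in (i): unlike $d_n(\mathbb{F}_n,F)$, the quantity $C_n(\gamma)$ is an infimum over the infinite-dimensional class of CDFs, so pointwise Glivenko--Cantelli does not suffice and one genuinely needs the envelope class $\{(F-g_W)^2 : W \text{ a CDF}\}$ to be Glivenko--Cantelli uniformly in $W$. The second delicate point is converting ``$H_\gamma$ is not a CDF'' (a statement on all of $\R$) into ``$H_\gamma$ is not $F$-a.e.\ equal to a monotone function'' (a statement on $\mathrm{supp}(F)$), which is precisely what the strict positivity of $C(\gamma)$ requires; this is where the exact definition of $\alpha_0$ in~\eqref{eq:alpha.est} and the support structure of $F$ must be used carefully.
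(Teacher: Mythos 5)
Your treatment of the first inequality and of the case $\gamma \ge \alpha_0$ coincides with the paper's own proof: you plug the candidate $W = F_s^\gamma := (F-(1-\gamma)F_b)/\gamma$, which is a CDF for $\gamma \ge \alpha_0$, into the minimisation, obtain $\gamma d_n(\hat F_{s,n}^\gamma,\check F_{s,n}^\gamma)\le d_n(\mathbb{F}_n,F)$, and finish with Glivenko--Cantelli (both you and the paper assert attainment at $\gamma=\alpha_0$ without proof; this is a minor, fixable omission via pointwise limits as $\gamma\downarrow\alpha_0$). For $\gamma<\alpha_0$ your plan is a rigorisation of the paper's two-line sketch, and your step (i) is sound and is genuinely more careful than the paper: the paper only notes that $\hat F_{s,n}^\gamma\to F_s^\gamma$ pointwise, that $F_s^\gamma$ is not a CDF while $\check F_{s,n}^\gamma$ is, and concludes ``thus'' the distance has a positive limit, never addressing that the criterion is an infimum over the infinite-dimensional class of CDFs; your uniform Glivenko--Cantelli control of $\{(F-\gamma W-(1-\gamma)F_b)^2 : W \mbox{ a CDF}\}$ is exactly what is needed to pass from the empirical criterion to its population version $C(\gamma)$.

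The genuine gap is your step (ii), the strict positivity $C(\gamma)>0$ for $\gamma<\alpha_0$, which you flag as ``delicate'' but never prove; and in the stated generality it is not provable, because ``$F_s^\gamma$ is not a CDF'' is a statement about all of $\R$ while $C(\gamma)>0$ requires the monotonicity (or range) violation to be charged by $F$. Concretely, take $F=\delta_0$ (point mass at $0$) and $F_b=\delta_{-1}$: then $F_s^\gamma$ is negative on $[-1,0)$ for every $\gamma<1$, so $\alpha_0=1$ by \eqref{eq:alpha.est}; yet every observation equals $0$, $\hat F_{s,n}^\gamma(0)=1$, and any CDF $W$ with $W(0)=1$ gives $d_n(\hat F_{s,n}^\gamma,W)=0$, so the criterion is identically zero for every $\gamma$ and every $n$. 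Hence both $C(\gamma)=0$ and the claimed positive almost-sure limit in \eqref{eq:claim3} fail. The same example defeats the paper's own argument (one function failing to be a CDF and the other being a CDF says nothing about a distance computed only at the data points), so you have correctly isolated the one step that the paper hand-waves; but as a proof attempt yours is incomplete at precisely that step. Closing it requires an additional hypothesis under which $F$-a.e.\ agreement with a CDF forces $F_s^\gamma$ to be a CDF everywhere --- for instance, $F$ continuous and strictly increasing on an interval containing the support of $F_b$, which covers the paper's applications but is nowhere stated as an assumption.
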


\begin{lemma}\label{lemma2}
The set $A_n := \{\gamma \in [ 0,1 ] : \sqrt{n} \gamma d_n(\hat{F}_{s,n}^{\gamma},\check{F}_{s,n}^\gamma) \le c_n \} $ is convex. Thus, $A_n=[\hat{\alpha}_0^{c_n},1].$
\end{lemma}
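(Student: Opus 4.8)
The plan is to recast the criterion function as the distance from a fixed point to a moving convex set, and then prove that this distance is a \emph{convex} function of $\gamma$; convexity immediately gives that every sublevel set is an interval. Writing $\mathbf{u} = (\mathbb{F}_n(X_{(i)}))_{i=1}^n$ and $\mathbf{b} = (F_b(X_{(i)}))_{i=1}^n$ for the (fixed) vectors of the empirical and known CDFs evaluated at the order statistics, I first claim that
\be
\sqrt{n}\,\gamma\, d_n(\hat{F}_{s,n}^{\gamma},\check{F}_{s,n}^\gamma) \;=\; \mathrm{dist}\!\left(\mathbf{u},\, S_\gamma\right), \qquad S_\gamma := (1-\gamma)\mathbf{b} + \gamma\,\Theta_{inc}, \nonumber
\ee
where $\mathrm{dist}$ is Euclidean distance in $\mathbb{R}^n$. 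This follows from the identity~\eqref{eq:equiv} together with Lemma~\ref{lemma:Fcheck}: since $\mathbb{F}_n(X_i) - \gamma W(X_i) - (1-\gamma)F_b(X_i) = \gamma\{\hat{F}_{s,n}^{\gamma}(X_i) - W(X_i)\}$, minimising $d_n(\mathbb{F}_n, \gamma W + (1-\gamma)F_b)$ over CDFs $W$ is the same optimisation as the projection defining $\check{F}_{s,n}^\gamma$, and the minimiser is $\gamma\check{F}_{s,n}^\gamma + (1-\gamma)F_b$, whose vector of values at the data points lies in $S_\gamma$. Note $\Theta_{inc}$ is exactly the set of value-vectors $(W(X_{(1)}),\dots,W(X_{(n)}))$ realisable by CDFs $W$, so $S_\gamma$ is precisely the feasible set, and $\mathbf{u}$, $\mathbf{b}$ do not depend on $\gamma$.

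Next I would show that the graph
\be
\mathcal{A} := \{(\gamma,\phi) : \gamma \in [0,1],\ \phi \in S_\gamma\} \subset \mathbb{R}^{n+1} \nonumber
\ee
is convex. This is the key computation. Take $(\gamma,\phi),(\gamma',\phi')\in\mathcal{A}$ with $\phi = (1-\gamma)\mathbf{b}+\gamma\theta$ and $\phi' = (1-\gamma')\mathbf{b}+\gamma'\theta'$ for $\theta,\theta'\in\Theta_{inc}$, and set $\bar\gamma = \lambda\gamma + (1-\lambda)\gamma'$ for $\lambda\in[0,1]$. Collecting terms,
\be
\lambda\phi + (1-\lambda)\phi' = (1-\bar\gamma)\mathbf{b} + \bar\gamma\,\tilde\theta, \qquad \tilde\theta := \tfrac{\lambda\gamma}{\bar\gamma}\theta + \tfrac{(1-\lambda)\gamma'}{\bar\gamma}\theta', \nonumber
\ee
and since the two coefficients are nonnegative and sum to $1$, $\tilde\theta$ is a convex combination of $\theta,\theta'$, hence lies in the convex set $\Theta_{inc}$. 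Thus $\lambda\phi+(1-\lambda)\phi' \in S_{\bar\gamma}$, proving $\mathcal{A}$ is convex (the degenerate case $\bar\gamma = 0$, where $S_0=\{\mathbf{b}\}$, is handled directly and matches the convention~\eqref{eq:DistNull}).

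Now the function $(\gamma,\phi)\mapsto \|\mathbf{u}-\phi\|$ is jointly convex (it is a norm of an affine map in $\phi$ and constant in $\gamma$), so the partial minimisation $G(\gamma) := \inf\{\|\mathbf{u}-\phi\| : (\gamma,\phi)\in\mathcal{A}\} = \mathrm{dist}(\mathbf{u},S_\gamma)$ is a convex function of $\gamma$ on $[0,1]$, because minimising a jointly convex function over one block of variables constrained to a convex set yields a convex marginal. Since $G(\gamma) = \sqrt{n}\,\gamma\, d_n(\hat{F}_{s,n}^{\gamma},\check{F}_{s,n}^\gamma)$ is convex, the sublevel set $A_n = \{\gamma : G(\gamma)\le c_n\}$ is convex, i.e.\ an interval. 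Finally, at $\gamma=1$ we have $\hat{F}_{s,n}^{1} = \check{F}_{s,n}^{1} = \mathbb{F}_n$, so $G(1)=0\le c_n$ and $1\in A_n$; as $G$ is continuous, $A_n$ is closed and therefore $A_n = [\inf A_n, 1] = [\hat{\alpha}_n,1]$ by the definition~\eqref{eq:EstAlpha}. I expect the main obstacle to be the convexity of the graph $\mathcal{A}$ (the displayed rearrangement and the verification that the convex-combination weights are legitimate); once that is in hand, the partial-minimisation and sublevel-set arguments are standard.
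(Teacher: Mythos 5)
Your proof is correct. It is worth spelling out how it relates to the paper's own argument, since the packaging differs even though the engine is the same. The paper proves convexity of $A_n$ directly: it takes $\gamma_1,\gamma_2\in A_n$, $\gamma_3=\eta\gamma_1+(1-\eta)\gamma_2$, uses the affine identity $\eta(\gamma_1\hat F^{\gamma_1}_{s,n})+(1-\eta)(\gamma_2\hat F^{\gamma_2}_{s,n})=\gamma_3\hat F^{\gamma_3}_{s,n}$, observes that $[\eta(\gamma_1\check F^{\gamma_1}_{s,n})+(1-\eta)(\gamma_2\check F^{\gamma_2}_{s,n})]/\gamma_3$ is a valid CDF, and finishes with the minimising property of $\check F^{\gamma_3}_{s,n}$ plus the triangle inequality in $L_2(\mathbb{F}_n)$ (this is the inequality \eqref{eq:ConvexCrit}). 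Your reweighting step --- writing $\lambda\phi+(1-\lambda)\phi'=(1-\bar\gamma)\mathbf{b}+\bar\gamma\tilde\theta$ with $\tilde\theta=\frac{\lambda\gamma}{\bar\gamma}\theta+\frac{(1-\lambda)\gamma'}{\bar\gamma}\theta'$ --- is exactly the same algebraic observation, with the mixture weights $\eta\gamma_1/\gamma_3$ and $(1-\eta)\gamma_2/\gamma_3$ appearing in both proofs; what differs is how you convert it into the conclusion. You pass to the finite-dimensional picture of Lemma~\ref{lemma:Fcheck}, identify $\sqrt{n}\,\gamma\,d_n(\hat F^{\gamma}_{s,n},\check F^{\gamma}_{s,n})$ with $\mathrm{dist}(\mathbf{u},S_\gamma)$ for the moving convex set $S_\gamma=(1-\gamma)\mathbf{b}+\gamma\Theta_{inc}$, prove the graph of $\gamma\mapsto S_\gamma$ is convex, and invoke the standard fact that partial minimisation of a jointly convex function is convex. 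This buys you strictly more than the paper extracts in this lemma: you get convexity of the criterion $G$ itself (which the paper establishes separately, in Lemma~\ref{lemma:DecCritFn}, reusing \eqref{eq:ConvexCrit}), and your Lipschitz/continuity observation justifies closedness of $A_n$ --- hence that the infimum $\hat\alpha_n$ is attained and $A_n=[\hat\alpha_n,1]$ --- a point the paper's ``Thus'' leaves implicit. The paper's route, in exchange, is shorter and never needs the reduction to $\mathbb{R}^n$ or the partial-minimisation principle. One small caveat in your write-up: the conclusion $1\in A_n$ tacitly uses $c_n\ge 0$, the same tacit assumption the paper makes.
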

The following result shows that for a broad range of choices of $c_n$, our estimation procedure is consistent. 
\begin{thm}\label{thm:ConsAlpha} If  $c_n = o(\sqrt{n})$ and $c_n \rightarrow \infty$, then
$\hat{\alpha}_0^{c_n} \stackrel{P}{\rightarrow} \alpha_0 $.
\end{thm}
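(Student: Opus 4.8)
The plan is to exploit the two structural facts already in hand: the interval form of the acceptance set, $A_n = [\hat\alpha_n, 1]$ from Lemma~\ref{lemma2}, together with the dichotomous limiting behaviour of the criterion function $\gamma d_n(\hat{F}_{s,n}^{\gamma},\check{F}_{s,n}^\gamma)$ in Lemma~\ref{lemma1}. Because $A_n$ is an interval anchored at $1$, establishing $\hat\alpha_n \stackrel{P}{\rightarrow} \alpha_0$ reduces to two one-sided statements: for every small $\epsilon>0$ with $\alpha_0\pm\epsilon\in[0,1]$, (i) $\alpha_0+\epsilon \in A_n$ with probability tending to one, which forces $\hat\alpha_n \le \alpha_0+\epsilon$, and (ii) $\alpha_0-\epsilon \notin A_n$ with probability tending to one, which forces $\hat\alpha_n \ge \alpha_0-\epsilon$. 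The essential simplification is that I only ever evaluate the criterion at the two \emph{fixed} points $\gamma=\alpha_0\pm\epsilon$, so I never need a uniform-in-$\gamma$ argument; the interval structure of Lemma~\ref{lemma2} performs all the interpolation.

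For the upper half (i) I would invoke the inequality in Lemma~\ref{lemma1} valid for every $\gamma\ge\alpha_0$, namely $\gamma d_n(\hat{F}_{s,n}^{\gamma},\check{F}_{s,n}^\gamma)\le d_n(F,\mathbb{F}_n)$, applied at $\gamma=\alpha_0+\epsilon$. Multiplying by $\sqrt{n}$ gives $\sqrt{n}\,(\alpha_0+\epsilon)\,d_n(\hat{F}_{s,n}^{\alpha_0+\epsilon},\check{F}_{s,n}^{\alpha_0+\epsilon}) \le \sqrt{n}\,d_n(F,\mathbb{F}_n) \le \sqrt{n}\,\|\mathbb{F}_n-F\|_\infty$, where the last bound uses $d_n^2(F,\mathbb{F}_n)=\tfrac1n\sum_i\{F(X_i)-\mathbb{F}_n(X_i)\}^2\le\|\mathbb{F}_n-F\|_\infty^2$. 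The quantity $\sqrt{n}\,\|\mathbb{F}_n-F\|_\infty$ is $O_p(1)$ by the Dvoretzky--Kiefer--Wolfowitz inequality (or Donsker's theorem), hence eventually dominated by $c_n$ because $c_n\to\infty$. Thus $\p(\alpha_0+\epsilon\in A_n)\to 1$.

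For the lower half (ii) I would use the strict part of Lemma~\ref{lemma1}: at the fixed value $\gamma=\alpha_0-\epsilon<\alpha_0$ the criterion $\gamma d_n(\hat{F}_{s,n}^{\gamma},\check{F}_{s,n}^\gamma)$ converges almost surely to a strictly positive limit $\delta_\epsilon>0$. Consequently $\sqrt{n}\,(\alpha_0-\epsilon)\,d_n(\hat{F}_{s,n}^{\alpha_0-\epsilon},\check{F}_{s,n}^{\alpha_0-\epsilon})=\sqrt{n}\,(\delta_\epsilon+o(1))\to\infty$ almost surely. Since $c_n=o(\sqrt{n})$, i.e. $c_n/\sqrt{n}\to 0$, the deterministic threshold grows strictly slower than this diverging signal, so with probability tending to one the criterion at $\alpha_0-\epsilon$ exceeds $c_n$, meaning $\alpha_0-\epsilon\notin A_n$. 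Combined with $A_n=[\hat\alpha_n,1]$ this yields $\hat\alpha_n\ge\alpha_0-\epsilon$ with probability tending to one, and together with (i) we conclude $\p(|\hat\alpha_n-\alpha_0|>\epsilon)\to 0$.

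The genuine content sits in (ii) and is entirely inherited from Lemma~\ref{lemma1}: that the limit $\delta_\epsilon$ is bounded away from zero for $\gamma<\alpha_0$ is precisely where the definition of $\alpha_0$ in~\eqref{eq:alpha.est} and the population-level gap enter, so the hard analytic work has been front-loaded into that lemma rather than appearing here. The two conditions on $c_n$ then play strictly complementary and both indispensable roles: $c_n\to\infty$ absorbs the $O_p(1)$ stochastic fluctuation $\sqrt{n}\,d_n(F,\mathbb{F}_n)$ in (i), while $c_n=o(\sqrt{n})$ keeps the threshold below the diverging deterministic signal $\sqrt{n}\,\delta_\epsilon$ in (ii). The only care needed is with the boundary cases: if $\alpha_0=1$ the upper bound is vacuous since $\hat\alpha_n\le 1$ always, and if $\alpha_0=0$ the lower bound is vacuous, with the criterion read at $\gamma\to 0+$ through the convention~\eqref{eq:DistNull}; for all $\epsilon$ small enough that $\alpha_0\pm\epsilon\in[0,1]$ the argument applies verbatim.
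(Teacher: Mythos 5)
Your proposal is correct and takes essentially the same approach as the paper's own proof: both one-sided bounds rest on Lemma~\ref{lemma1} (the inequality $\gamma\, d_n(\hat{F}_{s,n}^{\gamma},\check{F}_{s,n}^\gamma) \le d_n(F,\mathbb{F}_n)$ for $\gamma \ge \alpha_0$, and the strictly positive a.s.\ limit for $\gamma < \alpha_0$) combined with the interval structure of $A_n$ from Lemma~\ref{lemma2}, with $c_n \to \infty$ and $c_n = o(\sqrt{n})$ playing exactly the roles you describe. The only cosmetic difference is that you certify $\sqrt{n}\, d_n(F,\mathbb{F}_n) = O_P(1)$ via the Dvoretzky--Kiefer--Wolfowitz bound, whereas the paper appeals to the Cram\'{e}r--von Mises asymptotics; this changes nothing of substance.
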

A proper choice of $c_n$ is important and crucial for the performance of $\hat{\alpha}_0^{c_n}.$ We suggest doing  cross-validation to find the optimal tuning parameter $c_n$.  In Section~\ref{sec:cv} we detail this approach and illustrate its good finite sample performance through  simulation examples; see Tables \ref{tab:langassIND}-\ref{tab:JinDEP}, Section~\ref{sec:comparison}, and Appendix~\ref{sec:perfor_cont}. However, cross-validation can be computationally expensive. Another useful choice for $c_n$ is to take  $c_n=0.1\log\log n.$ After extensive simulations, we observe that  $c_n=0.1 \log \log n$ has good finite sample performance for estimating $\alpha_0;$ see Section \ref{sec:Simul} and  Appendix \ref{sec:perfor_cont} for more details.
\subsection{Rate of convergence and asymptotic limit} \label{sec:Asymptotics}
We first discuss the case $\alpha_0=0$. In this situation, under minimal assumptions, we show that as the sample size grows, $\hat{\alpha}_0^{c_n}$ exactly equals $\alpha_0$ with probability converging to 1. 
\begin{lemma} \label{lemma:LimDistnull}
When $\alpha_0=0$, if $c_n\rightarrow \infty$ as $n\rightarrow \infty$, then $P(\hat{\alpha}_0^{c_n} =0) \rightarrow 1.$
\end{lemma}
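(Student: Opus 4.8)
The plan is to show that when $\alpha_0 = 0$, the event $\{\hat\alpha_n = 0\}$ occurs with probability tending to $1$. By Lemma~\ref{lemma2}, the acceptance set $A_n = \{\gamma \in [0,1] : \sqrt{n}\,\gamma\, d_n(\hat F_{s,n}^\gamma, \check F_{s,n}^\gamma) \le c_n\}$ is an interval of the form $[\hat\alpha_n, 1]$. Hence $\hat\alpha_n = 0$ is \emph{equivalent} to $0 \in A_n$, i.e.\ to the single inequality obtained at $\gamma = 0$. Using the convention~\eqref{eq:DistNull}, the defining quantity at $\gamma = 0$ is $\lim_{\gamma \to 0+}\gamma\, d_n(\hat F_{s,n}^\gamma, \check F_{s,n}^\gamma) = d_n(\mathbb{F}_n, F_b)$. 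So the entire problem reduces to estimating a single scalar: I must show that
\be
P\!\left(\sqrt{n}\, d_n(\mathbb{F}_n, F_b) \le c_n\right) \to 1. \nonumber
\ee

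First I would invoke the hypothesis $\alpha_0 = 0$. By the last assertion of Lemma~\ref{lemma:Non-Identifiability}, $\alpha_0 = 0$ holds if and only if $F = F_b$. This is the key structural simplification: the true mixture distribution is exactly the known background, so the empirical CDF $\mathbb{F}_n$ is estimating $F_b$ itself. Consequently $d_n(\mathbb{F}_n, F_b)$ measures the (empirical-$L_2$) discrepancy between the empirical CDF and its own population CDF, which should be of order $n^{-1/2}$.

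Next I would make the order-$n^{-1/2}$ claim precise. Writing out the definition, $\sqrt{n}\, d_n(\mathbb{F}_n, F_b)$ satisfies
\be
n\, d_n^2(\mathbb{F}_n, F_b) = \sum_{i=1}^n \{\mathbb{F}_n(X_i) - F_b(X_i)\}^2 = \sum_{i=1}^n \{\mathbb{F}_n(X_{(i)}) - F_b(X_{(i)})\}^2, \nonumber
\ee
so that $n\, d_n^2(\mathbb{F}_n, F_b) \le n \sup_x \{\mathbb{F}_n(x) - F_b(x)\}^2 = \left(\sqrt{n}\,\|\mathbb{F}_n - F_b\|_\infty\right)^2$. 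Since $F = F_b$, the Dvoretzky--Kiefer--Wolfowitz inequality (or simply the fact that $\sqrt{n}\,\|\mathbb{F}_n - F_b\|_\infty = O_P(1)$ by Donsker's theorem) gives $\sqrt{n}\, d_n(\mathbb{F}_n, F_b) = O_P(1)$. Therefore $\sqrt{n}\, d_n(\mathbb{F}_n, F_b)$ is bounded in probability, while $c_n \to \infty$ by assumption; the event $\{\sqrt{n}\, d_n(\mathbb{F}_n, F_b) \le c_n\}$ thus has probability converging to $1$, and combining with the equivalence established via Lemma~\ref{lemma2} yields $P(\hat\alpha_n = 0) \to 1$.

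I do not expect a serious obstacle here, since the argument is essentially a reduction to the stochastic boundedness of the empirical process. The one point requiring mild care is the passage to the $\gamma = 0$ boundary: I must confirm that membership $0 \in A_n$ is correctly governed by the limiting quantity in~\eqref{eq:DistNull} rather than by some spurious behavior as $\gamma \to 0+$, but this is exactly what the stated convention~\eqref{eq:DistNull} guarantees, together with the monotone/convex structure from Lemma~\ref{lemma2} that forces $A_n$ to contain $0$ precisely when the boundary inequality holds. No lower bound on $c_n$ (such as $c_n = o(\sqrt n)$) is needed for this direction, which is consistent with the lemma's hypotheses requiring only $c_n \to \infty$.
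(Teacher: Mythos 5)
Your proposal is correct and follows essentially the same route as the paper's proof: reduce $\{\hat{\alpha}_n = 0\}$ via Lemma~\ref{lemma2} and the convention~\eqref{eq:DistNull} to the single event $\{\sqrt{n}\, d_n(\mathbb{F}_n, F_b) \le c_n\}$, use that $\alpha_0 = 0$ forces $F = F_b$, and conclude from $\sqrt{n}\, d_n(\mathbb{F}_n, F_b) = O_P(1)$ together with $c_n \to \infty$. The only (minor) difference is that the paper justifies the $O_P(1)$ bound by citing Theorem~\ref{thm:ConvDist}, whereas you derive it directly from the pointwise bound $d_n(\mathbb{F}_n, F_b) \le \sup_x |\mathbb{F}_n(x) - F_b(x)|$ and Dvoretzky--Kiefer--Wolfowitz/Donsker, which is a slightly more elementary and self-contained justification of the same fact.
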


For the rest of this section we assume that $\alpha_0>0$. The following theorem gives the rate of convergence of $\hat{\alpha}_0^{c_n}$. 
\begin{thm} \label{lemma:RateLeftSide}
Let $r_n := \sqrt{n}/c_n$. If $c_n\rightarrow \infty$ and $c_n = o(n^{1/4})$ as $n\rightarrow \infty$, then $r_n(\hat{\alpha}_0^{c_n}-\alpha_0) = O_P(1).$
\end{thm}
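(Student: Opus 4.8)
\emph{Plan of proof.} Write $\tilde d_n(\gamma):=\gamma\, d_n(\hat F_{s,n}^{\gamma},\check F_{s,n}^{\gamma})$. By \eqref{eq:equiv} and the definition of $\check F_{s,n}^{\gamma}$ as the minimiser of $d_n(\hat F_{s,n}^{\gamma},W)$ over CDFs $W$, the quantity $\tilde d_n(\gamma)=d_n(\mathbb{F}_n,\mathcal{C}_\gamma)$ is the $L_2(\mathbb{F}_n)$ distance from $\mathbb{F}_n$ to the convex set $\mathcal{C}_\gamma:=\{\gamma W+(1-\gamma)F_b: W \text{ a CDF}\}$. The upper half of the rate is essentially free: by Lemma~\ref{lemma1}, $\sqrt n\,\tilde d_n(\alpha_0)\le \sqrt n\, d_n(F,\mathbb{F}_n)$, and $n\,d_n^2(F,\mathbb{F}_n)=\int\{\sqrt n(\mathbb{F}_n-F)\}^2\,d\mathbb{F}_n$ is a Cram\'er--von Mises functional of the empirical process, hence $O_P(1)$; since $c_n\to\infty$, $P(\alpha_0\in A_n)\to 1$ and $\hat\alpha_n\le\alpha_0$ with probability tending to one. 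The whole content of the theorem is therefore the lower bound $\alpha_0-\hat\alpha_n=O_P(c_n/\sqrt n)$.

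First I would fix $K>0$, set $\gamma_K:=\alpha_0-Kc_n/\sqrt n$ (which lies in $(0,1)$ for large $n$ because $\alpha_0>0$ and $c_n/\sqrt n\to 0$), and reduce the claim to a single scale. By Lemma~\ref{lemma2}, $A_n=[\hat\alpha_n,1]$, so $\{\hat\alpha_n<\gamma_K\}\subseteq\{\gamma_K\in A_n\}=\{\sqrt n\,\tilde d_n(\gamma_K)\le c_n\}$ on the event $\gamma_K\in(0,1]$. Hence $P(\hat\alpha_n<\gamma_K)\le P(\tilde d_n(\gamma_K)\le c_n/\sqrt n)+o(1)$, and it suffices to show that for $K$ large enough $P(\tilde d_n(\gamma_K)>c_n/\sqrt n)\to 1$.

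To bound $\tilde d_n(\gamma_K)$ from below I would pass through the population criterion. Let $c^{\ast}_\gamma$ be the $L_2(F)$ projection of $F$ onto $\mathcal{C}_\gamma$, put $\psi_\gamma:=F-c^{\ast}_\gamma$, and write $d_F(\gamma):=\|\psi_\gamma\|_{L_2(F)}$ for the population analogue of the criterion. The projection inequality gives $\langle F-c,\psi_\gamma\rangle_{L_2(F)}\ge d_F(\gamma)^2$ for all $c\in\mathcal{C}_\gamma$; testing the empirical distance against the fixed direction $\psi_{\gamma_K}$ and using Cauchy--Schwarz in $L_2(\mathbb{F}_n)$ yields
\be
\tilde d_n(\gamma_K)=\|\mathbb{F}_n-\hat c\|_{L_2(\mathbb{F}_n)}\ \ge\ \frac{\langle \mathbb{F}_n-\hat c,\ \psi_{\gamma_K}\rangle_{L_2(\mathbb{F}_n)}}{\|\psi_{\gamma_K}\|_{L_2(\mathbb{F}_n)}},\nonumber
\ee
where $\hat c\in\mathcal{C}_{\gamma_K}$ is the empirical minimiser. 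The numerator decomposes into the population margin $\langle F-\hat c,\psi_{\gamma_K}\rangle_{L_2(F)}\ge d_F(\gamma_K)^2$ and empirical-process remainders $\int(\mathbb{F}_n-F)\psi_{\gamma_K}\,d\mathbb{F}_n$ and $\int(F-\hat c)\psi_{\gamma_K}\,d(\mathbb{F}_n-F)$, which I would control uniformly over $\hat c$ by a Donsker argument for the uniformly bounded, monotone-generated class $\{(F-c)\psi_{\gamma_K}:c\in\mathcal{C}_{\gamma_K}\}$. After dividing by $\|\psi_{\gamma_K}\|_{L_2(\mathbb{F}_n)}=d_F(\gamma_K)(1+o_P(1))$ this gives $\tilde d_n(\gamma_K)\ge d_F(\gamma_K)-R_n$; the assumption $c_n=o(n^{1/4})$, i.e. $c_n^2=o(\sqrt n)$, is precisely what makes the second-order remainder $R_n$ of smaller order than the linear target $c_n/\sqrt n$.

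The principal obstacle is the remaining \emph{curvature lemma}: $d_F(\gamma)\ge c_0(\alpha_0-\gamma)$ for $\gamma$ in a left neighbourhood of $\alpha_0$ and some $c_0>0$. I would first observe that $\gamma\mapsto d_F(F,\mathcal{C}_\gamma)$ is convex — if $c_i\in\mathcal{C}_{\gamma_i}$ then $\lambda c_1+(1-\lambda)c_2\in\mathcal{C}_{\lambda\gamma_1+(1-\lambda)\gamma_2}$, since the induced mixing function is a convex combination of CDFs — and that it vanishes on $[\alpha_0,1]$ because $(F-(1-\gamma)F_b)/\gamma$ is a genuine CDF there. A convex function that is identically zero to the right of $\alpha_0$ lies above the minorant $|D^-|(\alpha_0-\gamma)$, with $D^-$ its left derivative at $\alpha_0$, so the lemma reduces to showing $D^-<0$. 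Writing $d_F(\gamma)=\gamma\,\|g_\gamma-\Pi g_\gamma\|_{L_2(F)}$ with $F_{s,0}:=(F-(1-\alpha_0)F_b)/\alpha_0$, $g_\gamma:=F_{s,0}+\tfrac{\alpha_0-\gamma}{\gamma}(F_{s,0}-F_b)$ and $\Pi$ the $L_2(F)$ projection onto monotone functions, $D^-<0$ amounts to the perturbation direction $F_{s,0}-F_b$ failing to lie in the tangent cone of the monotone cone at the boundary point $F_{s,0}$; extremality of $\alpha_0$ (Lemma~\ref{lemma:Non-Identifiability}) is what places $F_{s,0}$ on that boundary and makes $F_{s,0}-F_b$ strictly decreasing on its active region. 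Granting the curvature lemma, $\tilde d_n(\gamma_K)\ge c_0Kc_n/\sqrt n-R_n>c_n/\sqrt n$ with probability tending to one once $K>1/c_0$, which closes the argument. I expect the hard step to be exactly this quantification of the non-identifiability gap as a \emph{linear} (rather than merely strictly positive) separation, since that is what pins the rate at $r_n=\sqrt n/c_n$.
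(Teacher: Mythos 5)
Your proposal is correct, and for the hard (left-side) half it takes a genuinely different route from the paper's. The two arguments share the same skeleton: the right side is handled exactly as you sketch (this is the paper's Lemma~\ref{lemma:RateRightSide}, via Lemma~\ref{lemma1} and the Cram\'er--von Mises bound); both reduce the left side to showing that the criterion at $\gamma_n=\alpha_0-Kc_n/\sqrt n$ exceeds $c_n/\sqrt n$ with probability tending to one; and both rest on the same strict positivity $\int V^2\,dF>0$ with $V=(F_s^{\alpha_0}-F_b)-\Pi(F_s^{\alpha_0}-F_b \mid T_\mathcal{F}(F_s^{\alpha_0}))$, proved from the extremality of $\alpha_0$ (Lemma~\ref{lemma:Non-Identifiability}): your curvature constant $c_0=|D^-|$ is exactly $\{\int V^2\,dF\}^{1/2}$, and your reduction of $D^-<0$ to ``$F_s^{\alpha_0}-F_b$ is not a tangent direction at $F_s^{\alpha_0}$'' is precisely how the paper shows $V\ne 0$ at the end of the proof of Lemma~\ref{lemma:Diffpart2}. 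Where you genuinely differ is the empirical-to-population bridge. The paper decomposes $\sqrt n\,\gamma_n d_n(\hat F_{s,n}^{\gamma_n},\check F_{s,n}^{\gamma_n})$ into $U_n$ plus $\sqrt n\,\gamma_n d(\hat F_{s,n}^{\gamma_n},\bar F_{s,n}^{\gamma_n})$, kills $U_n$ by comparing the \emph{squared} distances $d_n^2$ and $d^2$ uniformly over manageable classes (Lemma~\ref{lemma:Diff1}) --- this is exactly where $c_n=o(n^{1/4})$ is consumed, since the relevant envelope contains a term of order $c_n^2/\sqrt n$ --- and then computes the limit of the population term by the functional delta method for the projection operator (Lemma~\ref{lemma:Diffpart2}). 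You avoid squaring altogether: the Cauchy--Schwarz margin bound against the fixed residual $\psi_{\gamma_K}$ produces remainders of order $n^{-1/2}d_F(\gamma_K)$, which are $o_P(d_F(\gamma_K)^2)$ as soon as $Kc_n\to\infty$. Two consequences are worth noting. First, your attribution of the role of $c_n=o(n^{1/4})$ is inaccurate for your own argument: carried out carefully, your route needs only $c_n\to\infty$ and $c_n=o(\sqrt n)$, so it would in fact prove a slightly stronger statement; the $o(n^{1/4})$ hypothesis is an artifact of the paper's Lemma~\ref{lemma:Diff1}. Second, what the paper's delta-method machinery buys is the exact degenerate limit of Theorem~\ref{lemma:LimDist}; your one-sided margin argument gives tightness with the sharp constant (you need exactly $K>\{\int V^2\,dF\}^{-1/2}$) but not, by itself, the two-sided convergence. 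The details you would still owe are standard and shared with the paper: existence of the $L_2(F)$ projection onto $\mathcal{C}_\gamma$ (closure of the class of CDFs in $L_2(F)$), and the directional differentiability of the distance function used to identify $|D^-|$ as $\mathrm{dist}(F_s^{\alpha_0}-F_b,\,T_\mathcal{F}(F_s^{\alpha_0}))$, for which the paper's citation of Theorem 1 of \cite{Fils-VilletardEtAl08} suffices.
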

The proof of the above result is involved and we give the details in Appendix~\ref{sec:proof_lemma:RateLeftSide}.
\begin{remark}
\cite{GenoWass04} show that the estimators of $\alpha_0$ proposed by \cite{HengartnerStark95} and \cite{Swanepoel99} have convergence rates of $(n/\log n)^{1/3}$ and $n^{2/5}/(\log n)^\delta$, for $\delta >0$, respectively. Morover, both results require smoothness assumptions on $F$  -- \cite{HengartnerStark95} require $F$ to be concave with a density that is Lipschitz of order 1, while \cite{Swanepoel99} requires even stronger smoothness conditions on the density.  \cite{NguyenMatias13} prove that when the density of $F_s^{\alpha_0}$ vanishes at a set of points of measure zero and satisfies certain regularity assumptions, then any $\sqrt{n}$-consistent estimator of $\alpha_0$ will not have finite variance in the limit (if such an estimator exists). 
\end{remark}
We can take $r_n = \sqrt{n}/c_n$ arbitrarily close to $\sqrt{n}$ by choosing $c_n$ that increases to infinity very slowly. If we take $c_n = \log \log n$, we get an estimator that has a rate of convergence $\sqrt{n}/\log \log n$. In fact, as the next result shows, $r_n(\hat{\alpha}_0^{c_n} -\alpha_0)$ converges to a degenerate limit. In Section \ref{sec:PerfEst}, we analyse the effect of $c_n$ on the finite sample  performance of $\hat{\alpha}_0^{c_n}$ for estimating $\alpha_0$ through simulations and advocate a proper choice of the tuning parameter $c_n.$
\begin{thm} \label{lemma:LimDist}
When $\alpha_0>0$, if $r_n \rightarrow \infty$, $c_n= o(n^{1/4})$ and $c_n\rightarrow \infty$, as $n \rightarrow \infty$, then
\bee 
r_n(\hat{\alpha}_0^{c_n} -\alpha_0) \stackrel{P}{\rightarrow} c, \label{eq:LimDist}
\eee
where $c <0$ is a constant that depends on $\alpha_0$, $F$ and $F_b$.
\end{thm}

\section{Lower confidence bound for $\alpha_0$} \label{sec:Lowrbnd}
The asymptotic limit of the estimator $\hat{\alpha}_0^{c_n}$ discussed in Section~\ref{sec:Est} depends on unknown parameters (e.g., $\alpha_0, F$) in a complicated fashion and is of little practical use. Our goal in this sub-section is to construct a finite sample (honest) lower confidence bound $\hat \alpha_L$ with the property
\be
P(\alpha_0 \ge \hat \alpha_L) \ge 1 - \beta,
\label{eq:BoundAlpha}
\ee
for a specified confidence level $(1-\beta)$ ($0 < \beta <1$), that is valid for any $n$ and is tuning parameter free. Such a lower bound would allow one to assert, with a specified level of confidence, that the proportion of ``signal'' is at least $\hat \alpha_L$.

It can also be used to test the hypothesis that there is no ``signal'' at level $\beta$ by rejecting when $\hat \alpha_L >0$. The problem of no ``signal' is known as the homogeneity problem in the statistical literature. It is easy to show that $\alpha_0 = 0$ if and only if $F = F_b$. Thus, the hypothesis of no ``signal'' or homogeneity can be addressed by testing whether $\alpha_0 = 0$ or not. There has been a considerable amount of work on the homogeneity problem, but most of the papers make parametric model assumptions. \cite{Lindsay95} is an authoritative monograph on the homogeneity problem but the components are assumed to be from a known exponential family. \cite{Walther01} and \cite{Walther02} discuss the homogeneity  problem under the assumption that the densities are log-concave. \cite{DonohoJin04} and~\cite{CaiJin10} discuss the problem of detecting sparse heterogeneous mixtures under parametric settings using the `higher criticism' statistic; see Appendix~\ref{sec:Donhojin} for more details.

It will be seen that our approach will lead to an exact lower confidence bound when $\alpha_0 = 0$, i.e., $P(\hat \alpha_L = 0) = 1 - \beta$. The methods of \cite{GenoWass04} and \cite{MeinRice06} usually yield conservative lower bounds.

\begin{thm}\label{thm:BoundAlpha}
Let $H_n$ be the CDF of $\sqrt{n} d_n(\mathbb{F}_n,F)$. Let $\hat \alpha_L$ be defined as in \eqref{eq:EstAlpha} with $c_n= H_n^{-1}(1-\beta)$. Then \eqref{eq:BoundAlpha} holds. Furthermore if $\alpha_0=0$, then $P(\hat{\alpha}_L=0)=1-\beta$, i.e., it is an exact lower bound.
\end{thm}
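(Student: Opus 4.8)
The plan is to exploit the convexity of the acceptance set from Lemma~\ref{lemma2} together with the deterministic bound of Lemma~\ref{lemma1}, so that the coverage event is squeezed between events governed by a single pivotal quantity, namely $\sqrt{n}\,d_n(\mathbb{F}_n,F)$, whose law is precisely $H_n$. First I would reformulate the coverage event. With $c_n=H_n^{-1}(1-\beta)$ in~\eqref{eq:EstAlpha}, Lemma~\ref{lemma2} gives
\be
A_n=\left\{\gamma\in[0,1]:\sqrt{n}\,\gamma\,d_n(\hat F_{s,n}^{\gamma},\check F_{s,n}^{\gamma})\le c_n\right\}=[\hat\alpha_L,1]. \nonumber
\ee
Since $\alpha_0\in[0,1]$, the event $\{\alpha_0\ge\hat\alpha_L\}$ coincides with $\{\alpha_0\in A_n\}$, that is, with $\{\sqrt{n}\,\alpha_0\,d_n(\hat F_{s,n}^{\alpha_0},\check F_{s,n}^{\alpha_0})\le c_n\}$ (read through the convention~\eqref{eq:DistNull} when $\alpha_0=0$).

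Next I would control this criterion from above. Taking $\gamma=\alpha_0$ in Lemma~\ref{lemma1} yields the deterministic inequality $\alpha_0\,d_n(\hat F_{s,n}^{\alpha_0},\check F_{s,n}^{\alpha_0})\le d_n(\mathbb{F}_n,F)$, so the event $\{\sqrt{n}\,d_n(\mathbb{F}_n,F)\le c_n\}$ is contained in the reformulated coverage event. Hence
\be
P(\alpha_0\ge\hat\alpha_L)\ge P\!\left(\sqrt{n}\,d_n(\mathbb{F}_n,F)\le c_n\right)=H_n(c_n)=H_n\!\left(H_n^{-1}(1-\beta)\right)\ge1-\beta, \nonumber
\ee
the final inequality being the standard property of the generalized inverse of a CDF. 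This establishes~\eqref{eq:BoundAlpha}, and notably requires no continuity or smoothness of $F$.

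For the exactness claim I would specialise to $\alpha_0=0$, which by Lemma~\ref{lemma:Non-Identifiability} is equivalent to $F=F_b$. Then the inclusion above becomes an identity: using~\eqref{eq:DistNull}, $\hat\alpha_L=0$ holds iff $\sqrt{n}\,d_n(\mathbb{F}_n,F_b)=\sqrt{n}\,d_n(\mathbb{F}_n,F)\le c_n$, so $P(\hat\alpha_L=0)=H_n(c_n)$. It then remains to upgrade $H_n(c_n)\ge1-\beta$ to equality, which needs $H_n$ to be continuous at its $(1-\beta)$-quantile. I would obtain this from the representation $n\,d_n^2(\mathbb{F}_n,F)=\sum_{i=1}^n(R_i/n-U_i)^2$, valid for continuous $F$, where $U_i=F(X_i)$ are i.i.d.\ uniform and $R_i$ is the rank of $U_i$; being a continuous functional of the $U_i$, the statistic $\sqrt{n}\,d_n(\mathbb{F}_n,F)$ has a continuous distribution, so $H_n(H_n^{-1}(1-\beta))=1-\beta$. (The same representation shows $H_n$ does not depend on $F$, which is what makes $c_n$ computable by simulation and the bound genuinely distribution-free.)

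The coverage inequality itself is immediate once the two prior lemmas are in hand, so the only delicate point is this continuity argument underpinning exactness; it is also the reason the bound is \emph{exact} only at $\alpha_0=0$, since for $\alpha_0>0$ the inequality in Lemma~\ref{lemma1} is typically strict and the procedure becomes conservative.
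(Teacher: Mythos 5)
Your proposal is correct and follows essentially the same route as the paper: both identify the coverage event with $\{\sqrt{n}\,\alpha_0\,d_n(\hat F_{s,n}^{\alpha_0},\check F_{s,n}^{\alpha_0})\le c_n\}$ via Lemma~\ref{lemma2}, bound that criterion by the pivotal quantity $\sqrt{n}\,d_n(\mathbb{F}_n,F)$ (your appeal to Lemma~\ref{lemma1} is exactly the paper's inline comparison of $\check F_{s,n}^{\alpha_0}$ with $F_s^{\alpha_0}$), and handle $\alpha_0=0$ through $F=F_b$ and the convention~\eqref{eq:DistNull}. Your explicit continuity argument for $H_n$, needed to upgrade $H_n\bigl(H_n^{-1}(1-\beta)\bigr)\ge 1-\beta$ to an equality in the exactness claim, is a point the paper leaves implicit (it assumes $F$ continuous so that $H_n$ is continuous and distribution-free), and is a welcome refinement rather than a different approach.
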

The proof of the  above theorem can be found in Appendix~\ref{sec:LwrBndAsym_proof}. Note that $H_n$ is distribution-free (i.e., it does not depend on $F_s$ and $F_b$) when $F$ is a continuous CDF and can be readily approximated by Monte Carlo simulations using a sample of uniforms. For moderately large $n$ (e.g., $n \geq 500$) the distribution $H_n$ can be very well approximated by that of the Cram\'{e}r-von Mises statistic, defined as $$\sqrt{n} d(\mathbb{F}_n,F) := \sqrt{\int n \{\mathbb{F}_n(x) - F(x)\}^2 dF(x)}.$$ Letting $G_n$ be the CDF of $\sqrt{n} d(\mathbb{F}_n,F)$, we have the following result.

\begin{thm}\label{thm:ConvDist} $ \sup_{x \in \mathbb{R}} |H_n(x) - G_n(x)| \rightarrow 0 \mbox{ as } n \rightarrow \infty.$
\end{thm}

Hence in practice, for moderately large $n$, we can take $c_n$ to be the $(1-\beta)$-quantile of $G_n$ or its asymptotic limit, which are readily available (e.g., see \cite{Ander52}). When $F$ is a continuous CDF, the asymptotic $95\%$ quantile of $G_n$ is $0.6792$, and is used in our data analysis. Note that 
$$P(\alpha_0 \ge \hat \alpha_L)= P(\sqrt{n} \alpha_0 d_n (\hat{F}_{s,n}^{\alpha_0}, \check{F}_{s,n}^{\alpha_0}) \geq H_n^{-1}(1-\beta)).$$ The following theorem gives the explicit asymptotic limit of $P(\alpha_0 \ge \hat \alpha_L)$ but it is not useful for practical purposes as it involves the unknown $F_s^{\alpha_0}$ and $F$.

\begin{thm} \label{lemma:LwrBndAsym}
Assume that $\alpha_0 >0$. Then $\sqrt{n} \alpha_0 d_n (\hat{F}_{s,n}^{\alpha_0}, \check{F}_{s,n}^{\alpha_0}) \stackrel{d}{\rightarrow} U,$ where  $U$ is a random variable whose distribution depends only  on $\alpha_0,F,$ and $F_b.$  
\end{thm}
The proof of the above theorem and the explicit from of $U$ can be found in Appendix~\ref{sec:proofs}.  The proof of Theorem \ref{thm:ConvDist} and a detailed discussion on the performance of the lower confidence bound for detecting heterogeneity in the {\it moderately sparse} signal regime considered in \cite{DonohoJin04} can be found in Appendix~\ref{sec:Donhojin}.
%
%
\section{A heuristic estimator of $\alpha_0$}
\label{sec:Choose_c_n}
In simulations, we observe that the finite sample performance of \eqref{eq:EstAlpha} is affected by the choice of $c_n$ (for an extensive simulation study on this see Section~\ref{sec:PerfEst}). This motivates us to  propose a method to estimate $\alpha_0$ that is completely automated and has good finite sample performance. We start with a lemma that describes the shape of our criterion function, and will motivate our procedure.
\begin{lemma}\label{lemma:DecCritFn}
$\gamma  d_n(\hat{F}_{s,n}^{\gamma},\check{F}_{s,n}^\gamma) $ is a non-increasing convex function of $\gamma$ in $(0,1)$.
\end{lemma}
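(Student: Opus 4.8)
The plan is to rewrite the criterion function as the $L_2(\mathbb{F}_n)$-distance from the fixed function $\mathbb{F}_n$ to a convex set that grows with $\gamma$, and then read off both properties from that representation. Write $\|f\|_n := \{\int f^2\, d\mathbb{F}_n\}^{1/2}$, so that $d_n(g,h) = \|g-h\|_n$ and the triangle inequality is available. Since $\check{F}_{s,n}^\gamma$ minimises $d_n(\hat{F}_{s,n}^\gamma, W)$ over all CDFs $W$, identity \eqref{eq:equiv} gives
$$\gamma\, d_n(\hat{F}_{s,n}^\gamma, \check{F}_{s,n}^\gamma) = \min_{W \in \mathcal{F}} \|\mathbb{F}_n - \{(1-\gamma)F_b + \gamma W\}\|_n,$$
where $\mathcal{F}$ is the class of all CDFs. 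The minimum is attained: after restricting to the values at the order statistics, this is exactly the Euclidean projection onto the closed convex set $\Theta_{inc}\subset\mathbb{R}^n$ from Section~\ref{sec:Mod&Iden}.

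The key step I would take next is to translate by $F_b$. Set $Z := \mathbb{F}_n - F_b$ and $\mathcal{D} := \{W - F_b : W \in \mathcal{F}\}$, a convex set containing the origin (take $W = F_b$). Then the criterion function
$$\phi(\gamma) := \gamma\, d_n(\hat{F}_{s,n}^\gamma, \check{F}_{s,n}^\gamma) = \min_{v \in \mathcal{D}} \|Z - \gamma v\|_n$$
is the distance from $Z$ to the scaled set $\gamma\mathcal{D}$. Everything now follows from convexity of $\mathcal{D}$ together with $0 \in \mathcal{D}$, so that the $\gamma$-dependence enters only through the scaling $\gamma\mathcal{D}$.

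For monotonicity, for $0 < \gamma_1 \le \gamma_2 \le 1$ and any $v_1 \in \mathcal{D}$, the point $v := (\gamma_1/\gamma_2)v_1 + (1 - \gamma_1/\gamma_2)\cdot 0$ lies in $\mathcal{D}$ by convexity and satisfies $\gamma_2 v = \gamma_1 v_1$; hence $\gamma_1\mathcal{D} \subseteq \gamma_2\mathcal{D}$ and $\phi(\gamma_2) \le \phi(\gamma_1)$. For convexity, I would fix $\gamma_1, \gamma_2 \in (0,1)$, $\lambda \in [0,1]$, put $\gamma := \lambda\gamma_1 + (1-\lambda)\gamma_2$, and let $v_1, v_2 \in \mathcal{D}$ attain $\phi(\gamma_1), \phi(\gamma_2)$. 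The convex combination $v := \tfrac{\lambda\gamma_1}{\gamma}v_1 + \tfrac{(1-\lambda)\gamma_2}{\gamma}v_2$ (coefficients nonnegative, summing to one) lies in $\mathcal{D}$ and satisfies $\gamma v = \lambda\gamma_1 v_1 + (1-\lambda)\gamma_2 v_2$, so by $Z = \lambda Z + (1-\lambda)Z$ and the triangle inequality,
$$\phi(\gamma) \le \|Z - \gamma v\|_n = \|\lambda(Z - \gamma_1 v_1) + (1-\lambda)(Z - \gamma_2 v_2)\|_n \le \lambda\,\phi(\gamma_1) + (1-\lambda)\,\phi(\gamma_2).$$

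I do not expect a genuine obstacle here. The only points requiring care are the reformulation via \eqref{eq:equiv} and the existence (and uniqueness) of the projections, both of which reduce — through evaluation at the order statistics — to the Euclidean projection onto $\Theta_{inc}$ already used in the known-$\alpha$ discussion. The single idea that makes the statement transparent is the translation by $F_b$: once the feasible set is a convex set containing the origin, monotonicity is simply star-shapedness of $\gamma\mathcal{D}$ in $\gamma$, and convexity is the standard argument that a convex combination of the two minimisers, suitably reweighted, is feasible for the intermediate $\gamma$.
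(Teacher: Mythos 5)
Your proof is correct and is essentially the paper's own argument: after undoing your translation by $F_b$, the feasible points you construct --- $(\gamma_1/\gamma_2)v_1$ for monotonicity and $\tfrac{\lambda\gamma_1}{\gamma}v_1+\tfrac{(1-\lambda)\gamma_2}{\gamma}v_2$ for convexity --- are exactly the candidate CDFs $(\gamma_1/\gamma_2)\check{F}_{s,n}^{\gamma_1}+(1-\gamma_1/\gamma_2)F_b$ and $\bigl[\lambda\gamma_1\check{F}_{s,n}^{\gamma_1}+(1-\lambda)\gamma_2\check{F}_{s,n}^{\gamma_2}\bigr]/\gamma$ that the paper plugs into the definition of $\check{F}_{s,n}^{\gamma}$, followed by the same triangle inequality. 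The reformulation of the criterion as the distance from $Z=\mathbb{F}_n-F_b$ to the scaled convex set $\gamma\mathcal{D}$ is a clean repackaging that makes the structure transparent, but it is not a different route.
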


Writing $$\hat{F}_{s,n}^{\gamma} = \frac{\mathbb{F}_n - F}{\gamma} + \left\{ \frac{\alpha_0}{\gamma} F_s^{\alpha_0} + \left(1 - \frac{\alpha_0}{\gamma} \right) F_b\right\},$$ we see that for $\gamma \ge \alpha_0$, the second term in the right hand side is a CDF. Thus, for $\gamma \ge \alpha_0$, $\hat{F}_{s,n}^{\gamma}$ is very close to a CDF as $\mathbb{F}_n -F =O_P(n^{-1/2})$, and hence $\check{F}_{s,n}^{\gamma}$ should also be close to $\hat{F}_{s,n}^{\gamma}$. Whereas, for $\gamma < \alpha_0$, $\hat{F}_{s,n}^{\gamma}$ is not close to a CDF, and thus the distance $\gamma d_n(\hat{F}_{s,n}^{\gamma},\check{F}_{s,n}^{\gamma})$ is appreciably large. Therefore, at $\alpha_0$, we have a ``regime'' change:  $\gamma d_n(\hat{F}_{s,n}^{\gamma},\check{F}_{s,n}^{\gamma})$ should have a slowly decreasing segment to the right of $\alpha_0$ and a steeply non-increasing segment to the left of $\alpha_0$.  
\begin{figure}[h!]

  \captionsetup[subfigure]{labelformat=empty}
\subfloat[]{
\includegraphics[width=2.4in, height=1.6in]{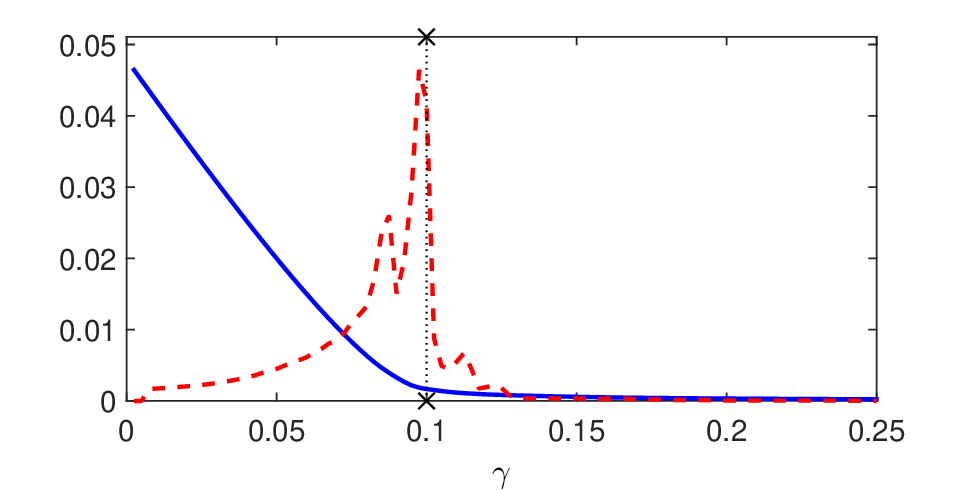}}
\subfloat[]{
\includegraphics[width=2.4in, height=1.6in]{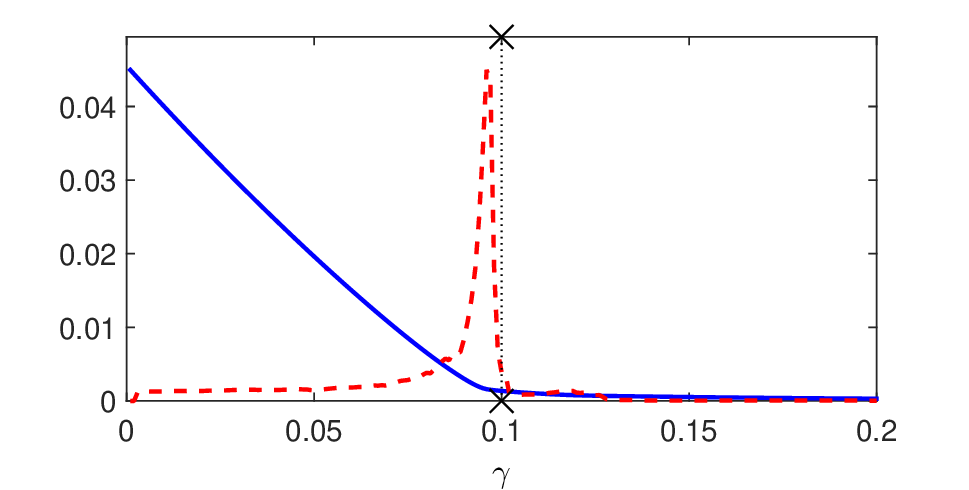}}\\[-6ex]
\caption{Plots of $\gamma d_n(\hat{F}_{s,n}^{\gamma},\check{F}_{s,n}^{\gamma})$ (in solid blue) overlaid with its (scaled) second derivative (in dashed red) for $\alpha_0=0.1$ and $n=5000$. Left panel: setting I; right panel: setting II.}
\label{fig:PlotCritFunc} 
\end{figure}
Fig.~\ref{fig:PlotCritFunc} shows two typical such plots of the function $\gamma  d_n(\hat{F}_{s,n}^{\gamma},\check{F}_{s,n}^{\gamma})$, where the left panel corresponds to a mixture of $N(2,1)$ with $N(0,1)$ (setting I) and in the right panel  we have a mixture of Beta(1,10) and Uniform$(0,1)$ (setting II). We will use these two settings to illustrate our methodology in the rest of this section and also in Section~\ref{sec:LowrbndSim}.

Using the above heuristics, we can see that the ``elbow'' of the function should provide a good estimate of $\alpha_0$; it is the point that has the maximum curvature, i.e., the point where the second derivative is maximal. We denote this estimator by $\tilde \alpha_0$. Notice that both the estimators $\tilde \alpha_0$ and $\hat{\alpha}_0^{c_n}$ are derived from $\gamma d_n( \hat{F}_{s,n}^\gamma ,\check{F}_{s,n}^\gamma)$, as a function of $\gamma$, albeit they look at two different aspects of the function.

In the above plots we have used numerical methods to approximate the second derivative of $\gamma d_n(\hat{F}_{s,n}^{\gamma},\check{F}_{s,n}^{\gamma})$ (using the method of double differencing). We advocate plotting the function $\gamma d_n(\hat{F}_{s,n}^{\gamma},\check{F}_{s,n}^{\gamma})$ as $\gamma$ varies between 0 and 1. In most cases, plots similar to Fig.~\ref{fig:PlotCritFunc} would immediately convey to the practitioner the most appropriate choice of $\tilde \alpha_0$. In some cases though, there can be multiple peaks in the second derivative, in which case some discretion on the part of the practitioner might be required. It must be noted that the idea of finding the point where the second derivative is large to detect an ``elbow'' or ``knee'' of a function is not uncommon; see e.g., \cite{SC04}. However, in Section \ref{sec:comparison} and Appendix \ref{sec:perfor_cont}, we show some simulation examples where $\tilde{\alpha}_0$ fails to consistently estimate the ``elbow'' of $\gamma d_n(\hat{F}_{s,n}^{\gamma},\check{F}_{s,n}^{\gamma}).$ 

\section{Estimation of the distribution function and its density} \label{sec:FandDensity}
\subsection{Estimation of $F_s$}
\label{sec:EstConcaveFs}

Let us assume for the rest of this section that \eqref{eq:MixMod} is identifiable, i.e., $\alpha = \alpha_0,$ and $\alpha_0>0$. Thus $F_s^{\alpha_0} = F_s$. Once we have a consistent estimator $\check \alpha_n$ (which may or may not be $\hat{\alpha}_0^{c_n}$ as discussed in the previous sections) of $\alpha_0$, a natural nonparametric estimator of $F_s$ is $\check F_{s,n}^{\check \alpha_n}$, defined as the minimiser of (\ref{eq:L2Dist}). In the following theorem  we show that, indeed, $\check F_{s,n}^{\check \alpha_n}$ is uniformly consistent for estimating $F_s$. We also derive the rate of convergence of $\check F_{s,n}^{\check \alpha_n}$.

\begin{thm}\label{thm:ConsF_sn}
Suppose that $\check \alpha_n \stackrel{P}{\rightarrow} \alpha_0$. Then, as $n \rightarrow \infty$, $\sup_{x \in \R} | \check F_{s,n}^{\check \alpha_n}(x) - F_s(x) | \stackrel{P}{\rightarrow} 0.$ Furthermore, if  $q_n( \check \alpha_n- \alpha_0) =O_P(1),$ where $q_n = o(\sqrt{n})$, then $\sup_{x \in \mathbb{R}} q_n | \check F_{s,n}^{\check \alpha_n}(x) - F_s(x) | = O_P(1).$ Additionally, for $\hat{\alpha}_0^{c_n}$ as defined in \eqref{eq:EstAlpha}, we have \[ \sup_{x \in \mathbb{R}}    | r_n (\hat{F}_{s,n}^{\hat{\alpha}_0^{c_n}} -F_s)(x)  - Q(x)| \stackrel{P}{\rightarrow} 0 \quad \text{and} \quad  r_n d(\check{F}_{s,n}^{\hat{\alpha}_0^{c_n}} ,F_s)\stackrel{P}{\rightarrow} c\] for a function $Q:\R \to \R$ and a constant $c >0$ depending only on $\alpha_0, F$, and $F_b$. 
 \end{thm}
An immediate consequence of Theorem \ref{thm:ConsF_sn} is that $d_n(\check F_{s,n}^{\check \alpha_n}, \hat F_{s,n}^{\check \alpha_n}) \stackrel{P}{\rightarrow} 0$ as $n \rightarrow \infty$. Left panel of Fig.~\ref{fig:LCMFDensexampleB} shows our estimator $\check F_{s,n}^{\check \alpha_n}$ along with the true $F_s$ for the same data set used in the right panel of Fig.~\ref{fig:PlotCritFunc}.
\begin{figure}[h!]

  \captionsetup[subfigure]{labelformat=empty}
\centering
\subfloat[]{
\label{fig:LCMCDFexample:b} 
\includegraphics[width=2.4in, height=1.3in]{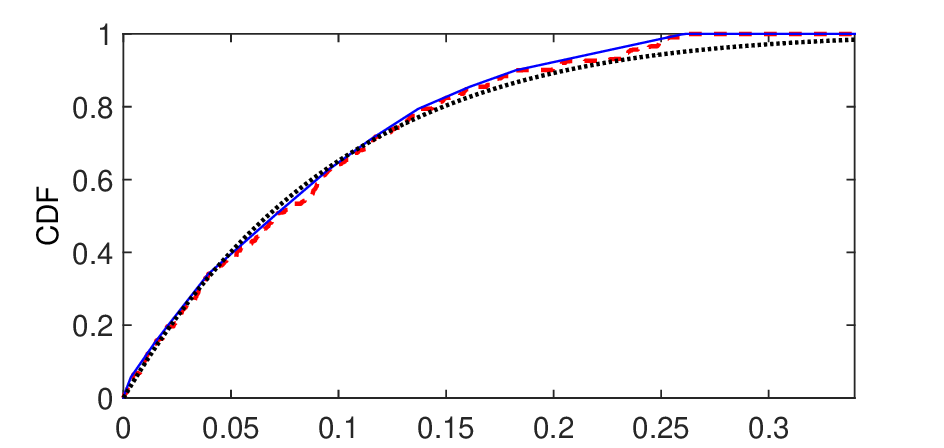}}
\subfloat[]{
\label{fig:PlotDensityexample:b} 
\includegraphics[width=2.4in, height=1.6in]{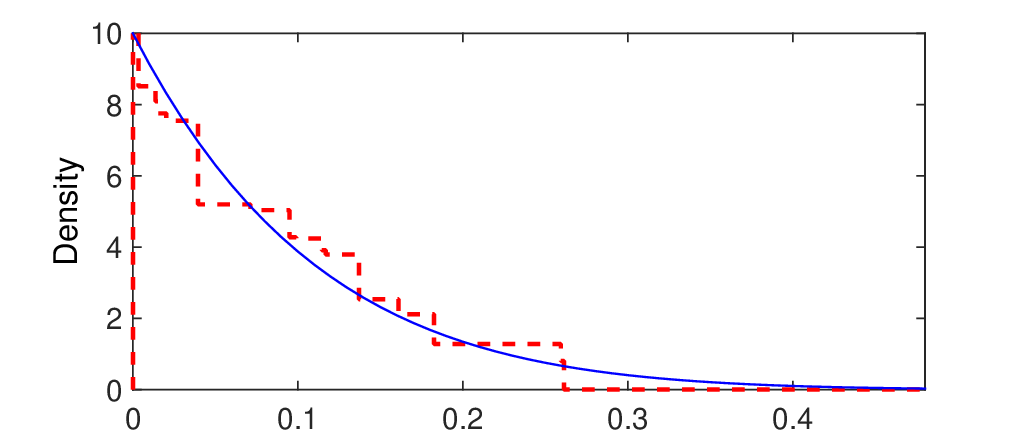}}\\[-5ex]
\caption{Left panel: Plots of $\check F_{s,n}^{\tilde{\alpha}_0}$ (in dashed red), $F_{s,n}^{\dagger}$  (in solid blue) and $F_s$ (in dotted black) for setting II; right panel: plots of $f_{s,n}^{\dagger}$  (in dashed red) and $f_s$ (in solid blue) for setting II. }
\label{fig:LCMFDensexampleB}  

\end{figure}
\subsection{Estimating the density of $F_s$}
Suppose now that $F_s$ has a density $f_s$. Obtaining nonparametric estimators of $f_s$ can be difficult as it requires smoothing and usually involves the choice of tuning parameter(s) (e.g., smoothing bandwidths), and especially so in our set-up.

In this sub-section we describe a tuning parameter free approach to estimating $f_s$, under the additional assumption that $f_s$ is non-increasing. The assumption that $f_s$ is non-increasing, i.e., $F_s$ is concave on its support, is natural in many situations (see Section~\ref{sec:MultTest} for an application in the multiple testing problem) and has been investigated by several authors, including \cite{Grenander56}, \cite{LangaasEtAl05} and \cite{GenoWass04}. Without loss of generality, we assume that $f_s$ is non-increasing on $[0,\infty)$.

For a bounded function $g: [0,\infty) \rightarrow \R$, let us represent the least concave majorant (LCM) of $g$ by $LCM[g]$. Thus, $LCM[g]$ is the smallest concave function that lies above $g$. Define $F_{s,n}^{\dagger} := LCM[\check F_{s,n}^{\check \alpha_n}]$. Note that $F_{s,n}^{\dagger}$ is a valid CDF. We can now estimate $f_s$ by $f_{s,n}^{\dagger}$, where $f_{s,n}^{\dagger}$ is the piece-wise constant function obtained by taking the left derivative of $F_{s,n}^{\dagger}$. In the following result we show that both $F_{s,n}^{\dagger}$ and $f_{s,n}^{\dagger}$ are consistent estimators of their population versions.
\begin{thm}\label{thm:Consden_sn}
Assume that $F_s(0) = 0$ and that $F_s$ is concave on $[0,\infty)$. If $\check \alpha_n \stackrel{P}{\rightarrow} \alpha_0$, then, as $n \rightarrow \infty$,
\be\label{eq:SupConvF_snDagger}
\sup_{x \in \R} | F_{s,n}^{\dagger}(x) - F_s(x) | \stackrel{P}{\rightarrow} 0.
\ee
Further, if for any $x >0$, $f_s(x)$ is continuous at $x$, then, $f_{s,n}^{\dagger}(x) \stackrel{P}{\rightarrow} f_s(x).$
\end{thm}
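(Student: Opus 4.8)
The plan is to prove the two assertions separately: first the uniform consistency \eqref{eq:SupConvF_snDagger} of $F_{s,n}^{\dagger}$, and then the pointwise consistency of its slope $f_{s,n}^{\dagger}$. The first part rests on a single structural fact, while the second is where the real work lies.

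For \eqref{eq:SupConvF_snDagger}, the key observation is that the least concave majorant operator $g \mapsto LCM[g]$ is a contraction in the supremum norm: for bounded functions $g,h$ on $[0,\infty)$ one has $\sup_x |LCM[g](x) - LCM[h](x)| \le \sup_x |g(x) - h(x)|$. This is elementary, since $LCM$ is monotone and commutes with the addition of constants: writing $\epsilon := \sup_x|g(x)-h(x)|$, the bound $g \le h + \epsilon$ gives $LCM[g] \le LCM[h] + \epsilon$, and the reverse inequality follows by symmetry. Because $F_s$ is concave on $[0,\infty)$ with $F_s(0)=0$, it is its own least concave majorant, i.e.\ $LCM[F_s] = F_s$. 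Applying the contraction with $g = \check F_{s,n}^{\check \alpha_n}$ and $h = F_s$ then yields $\sup_x|F_{s,n}^{\dagger}(x) - F_s(x)| \le \sup_x|\check F_{s,n}^{\check\alpha_n}(x) - F_s(x)|$, and the right-hand side tends to $0$ in probability by the uniform consistency part of Theorem~\ref{thm:ConsF_sn} (which applies since $\check\alpha_n \stackrel{P}{\rightarrow}\alpha_0$). This settles the first claim.

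For the density, I would fix a continuity point $x>0$ of $f_s$ and exploit the concavity of $F_{s,n}^{\dagger}$ to sandwich its left derivative $f_{s,n}^{\dagger}(x)$ between two chord slopes. Since for a concave function the difference quotient is non-increasing as its interval moves to the right, the left derivative at $x$ lies below every chord slope on its left and above every chord slope on its right, so that for any $0\le u < x < v$,
\[
\frac{F_{s,n}^{\dagger}(v) - F_{s,n}^{\dagger}(x)}{v-x} \;\le\; f_{s,n}^{\dagger}(x) \;\le\; \frac{F_{s,n}^{\dagger}(x) - F_{s,n}^{\dagger}(u)}{x-u}.
\]
I would then pass to the limit with the quantifiers in the right order. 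For $u,v$ held fixed, the already-established uniform convergence of $F_{s,n}^{\dagger}$ shows that the two empirical chord slopes converge in probability to the deterministic quantities $(F_s(x)-F_s(u))/(x-u)$ and $(F_s(v)-F_s(x))/(v-x)$; and since $F_s' = f_s$ with $f_s$ continuous at $x$, these tend to $f_s(x)$ as $u\uparrow x$ and $v\downarrow x$. Hence, given $\epsilon>0$, I first choose $u<x<v$ so that both deterministic chord slopes lie within $\epsilon/2$ of $f_s(x)$, and only then let $n\to\infty$, whereupon the in-probability convergence of the empirical chord slopes forces $P(|f_{s,n}^{\dagger}(x) - f_s(x)| > \epsilon) \to 0$, giving $f_{s,n}^{\dagger}(x)\stackrel{P}{\rightarrow} f_s(x)$.

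The first assertion is essentially immediate once the contraction property of $LCM$ is recorded, so the substance of the theorem is in the second. The main obstacle is that differentiation is not continuous under uniform convergence, so the uniform consistency of $F_{s,n}^{\dagger}$ does not by itself transfer to its slope; it is concavity that rescues the argument, by trapping the random slope between two chord slopes whose limits we can compute. The delicate point is precisely the order of limits — the endpoints $u,v$ must be fixed, so that the chord slopes are deterministically close to $f_s(x)$, \emph{before} sending $n\to\infty$ — and the hypotheses that $x>0$ and that $f_s$ is continuous at $x$ are exactly what guarantee these deterministic chord slopes converge to $f_s(x)$.
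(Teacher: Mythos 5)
Your proof is correct. For the uniform-consistency part \eqref{eq:SupConvF_snDagger} your argument is essentially the paper's: the paper sets $\epsilon_n := \sup_x|\check F_{s,n}^{\check\alpha_n}(x)-F_s(x)|$, notes that $F_s+\epsilon_n$ is a concave majorant of $\check F_{s,n}^{\check\alpha_n}$, and sandwiches $\check F_{s,n}^{\check\alpha_n} \le F_{s,n}^{\dagger} \le F_s+\epsilon_n$; your ``$LCM$ is a sup-norm contraction'' lemma is exactly this computation (monotonicity of $LCM$ plus equivariance under adding constants), combined with $LCM[F_s]=F_s$. Where you genuinely diverge is the density part. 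The paper disposes of it in one line by citing a lemma on page 330 of \cite{RWD88} (in the spirit of Theorem 7.2.2 there) on convergence of slopes of concave functions; you instead prove the needed fact from scratch, trapping the left derivative $f_{s,n}^{\dagger}(x)$ between the chord slopes of $F_{s,n}^{\dagger}$ over $[u,x]$ and $[x,v]$, passing these to the deterministic chord slopes of $F_s$ via the already-established uniform convergence, and only then shrinking the window, with the quantifiers handled in the correct order (fix $u<x<v$ depending on $\epsilon$, then let $n\to\infty$). This is precisely the content of the lemma the paper cites, so your route buys self-containedness at the cost of a paragraph of real analysis, while the paper's route is shorter but rests on an external reference. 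The one ingredient worth making explicit in your write-up is why the deterministic chord slopes converge to $f_s(x)$: since $F_s$ is absolutely continuous, $(F_s(x)-F_s(u))/(x-u)=\frac{1}{x-u}\int_u^x f_s(t)\,dt$, and continuity of $f_s$ at $x$ (equivalently, monotonicity of $f_s$ plus continuity) gives the claimed limits as $u\uparrow x$ and $v\downarrow x$.
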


Computing $F_{s,n}^{\dagger}$ and $f_{s,n}^{\dagger}$ are straightforward, an application of the PAVA gives both the estimators; see e.g., Chapter 1 of \cite{RWD88}. In Fig.~\ref{fig:LCMFDensexampleB}  the left panel shows the LCM $F_{s,n}^{\dagger}$ whereas the right panel shows its derivative $f_{s,n}^{\dagger}$ along with the true density $f_s$ for the same data set used in  the right panel of Fig.~\ref{fig:PlotCritFunc}.

\section{Multiple testing problem}
\label{sec:MultTest}
The problem of estimating the proportion of false null hypotheses $\alpha_0$ is of interest in situations where a large number of hypothesis tests are performed. Recently, various such situations have arisen in applications. One major motivation is in estimating the proportion of genes that are differentially expressed in deoxyribonucleic acid (DNA) microarray experiments. However, estimating the proportion of true null hypotheses is also of interest, for example, in functional magnetic resonance imaging (see \cite{Turkheimer2001}) and source detection in astrophysics (see \cite{Milleretal01}).

Suppose that we wish to test $n$ null hypotheses $H_{01}, H_{02}, \ldots, H_{0n}$ on the basis of a data set $\X$. Let $H_i$ denote the (unobservable) binary variable that is $0$ if $H_{0i}$ is true, and 1 otherwise, $i=1,\ldots, n$. We want a decision rule $\D$ that will produce a decision of ``null'' or ``non-null'' for each of the $n$ cases. 
In their seminal work, \cite{BenjHoch95} argued that an important quantity to control is the false discovery rate (FDR) and proposed a procedure with the property FDR $\le \beta (1-\alpha_0)$, where $\beta$ is the user-defined level of the FDR procedure. When $\alpha_0$ is significantly bigger than $0$ an estimate of $\alpha_0$ can be used to yield a procedure with FDR approximately equal to $\beta$ and thus will result in an increased power. This is essentially the idea of the adapted control of FDR (see \cite{BenjHoch00}). See \cite{Storey02}, \cite{Black04}, \cite{LangaasEtAl05}, \cite{BenjaminiEtAl06}, and  \cite{DonohoJin04} for a discussion on the importance of efficient estimation of $\alpha_0$ and some proposed estimators.

Our method can be directly used to yield an estimator of $\alpha_0$ that does not require the specification of any tuning parameter, as discussed in Section~\ref{sec:Choose_c_n}. We can also obtain a completely nonparametric estimator of $F_s$, the distribution of the $p$-values arising from the alternative hypotheses. Suppose that $F_b$ has a density $f_b$ and $F_s$ has a density $f_s$. To keep the following discussion more general, we allow $f_b$ to be any known density, although in most multiple testing applications we will take $f_b$ to be Uniform$(0,1)$. The {\it local false discovery rate} (LFDR) is defined as the function $l: (0,1) \rightarrow [0,\infty)$, where
\be\label{eq:lfdr}
l(x) = P (H_i = 0| X_i = x) = \frac{(1-\alpha_0) f_b(x)}{f(x)}, \nonumber
\ee
and $f(x) = \alpha_0 f_s(x) +  (1 - \alpha_0) f_b(x)$ is the density of the observed $p$-values. The estimation of the LFDR $l$ is important because it gives the probability that a particular null hypothesis is true given the observed $p$-value for the test. The LFDR method can help us get easily interpretable thresholding methods for reporting the ``interesting'' cases (e.g., $l(x) \le 0.20$). Obtaining good estimates of $l$ can be tricky as it involves the estimation of an unknown density, usually requiring smoothing techniques; see Section 5 of \cite{EfronLargeScaleInf10} for a discussion on estimation and interpretation of $l$. From the discussion in Section~\ref{sec:EstConcaveFs}, under the additional assumption that $f_s$ is non-increasing, we have a natural tuning parameter free estimator $\hat l$ of the LFDR:
\be\label{eq:EstLFDR}
	\hat l(x)  = \frac{(1- \check \alpha_n) f_b(x)}{ \check \alpha_n f_{s,n}^{\dagger}(x) + (1 - \check \alpha_n) f_b(x) }, \qquad \mbox{ for } x \in (0,1).\nonumber
\ee
 The assumption that $f_s$ is non-increasing, i.e., $F_s$ is concave, is quite natural -- when the alternative hypothesis is true the $p$-value is generally small -- and has been investigated by several authors, including \cite{GenoWass04} and \cite{LangaasEtAl05}.


\section{Simulation}
\label{sec:Simul}

To investigate the finite sample performance of the estimators developed in this paper, we carry out several simulation experiments. We also compare the performance of these estimators with existing methods. The R language (\cite{Rlang}) codes used to implement our procedures are available at http://stat.columbia.edu/$\sim$rohit/research.html.

\subsection{Lower bounds for $\alpha_0$}\label{sec:LowrbndSim}
%
%
%
%
%
%
%
\begin{table}
\caption{\label{tab:n1000and5000}Coverage probabilities of nominal 95\% lower confidence bounds for the three methods  when $n = 1000$ and $n=5000$. }
\centering
\begin{tabular}{*{13}{c}}
\toprule
 &\multicolumn{6}{c}{$n=1000$}  &\multicolumn{6}{c}{$n=5000$}  \\
\cmidrule(r){2-7} \cmidrule(l){8-13}
&\multicolumn{3}{c}{Setting I} &\multicolumn{3}{c}{Setting II} &\multicolumn{3}{c}{Setting I} &\multicolumn{3}{c}{Setting II} \\
\cmidrule(r){2-4} \cmidrule(rl){5-7}\cmidrule(rl){8-10}\cmidrule(l){11-13}
$\alpha$ & $\hat{\alpha}_L$ & $\hat{\alpha}_L^{GW}$ & $\hat{\alpha}_L^{MR}$ &  $\hat{\alpha}_L$ & $\hat{\alpha}_L^{GW}$ & $\hat{\alpha}_L^{MR}$ & $\hat{\alpha}_L$ & $\hat{\alpha}_L^{GW}$ & $\hat{\alpha}_L^{MR}$ &  $\hat{\alpha}_L$ & $\hat{\alpha}_L^{GW}$ & $\hat{\alpha}_L^{MR}$   \\
\midrule
0 & 0.95 & 0.98 & 0.93 &0.95 & 0.98 & 0.93 & 0.95	& 0.97 & 0.93 & 0.95 & 0.97 & 0.93\\
0.01 & 0.97 & 0.98 & 0.99 & 0.97 & 0.97 & 0.99 & 0.98 & 0.98 & 0.99 & 0.98 & 0.98 & 0.99\\
0.03 & 0.98 & 0.98 & 0.99 & 0.98 & 0.98 & 0.99 & 0.98 & 0.98 & 0.99 & 0.98 & 0.98 & 0.99 \\
0.05 & 0.98 & 0.98 & 0.99 & 0.98 & 0.98 & 0.99 & 0.99 & 0.99 & 0.99 & 0.98 & 0.98 & 0.99\\
0.10 & 0.99 & 0.99 & 1.00 & 0.99 & 0.98 & 0.99 & 0.99 & 0.99 & 1.00 & 0.99 & 0.98 & 0.99\\ 
\bottomrule
\end{tabular}

\end{table}	
Although there has been some work on estimation of $\alpha_0$ in the multiple testing setting, \cite{MeinRice06} and \cite{GenoWass04} are the only papers we found that discuss methodology for constructing lower confidence bounds for $\alpha_0$. These procedures are connected and the methods in \cite{MeinRice06} are extensions of those proposed in \cite{GenoWass04}. The lower bounds proposed in both the papers approximately satisfy~\eqref{eq:BoundAlpha} and have the form $\sup_{t \in (0,1)} (\mathbb{F}_n(t)-t-\eta_{n, \beta} \delta(t))/(1-t),$ where $\eta_{n, \beta}$ is a \textit{bounding sequence} for the \textit{bounding function} $\delta(t)$ at level $\beta$; see \cite{MeinRice06}. \cite{GenoWass04} use a constant bounding function, $\delta(t)=1$, with $\eta_{n, \beta}=\sqrt{\log (2/\beta)/{2n}}$, whereas \cite{MeinRice06} suggest a class of bounding functions but observe that the \textit{standard deviation-proportional} bounding function $\delta(t) =\sqrt{t(1-t)}$ has optimal properties among a large class of possible bounding functions. We use this bounding function and a bounding sequence suggested by the authors. We denote the lower bound proposed in \cite{MeinRice06} by $\hat{\alpha}_L^{MR}$, the bound in \cite{GenoWass04} by $\hat{\alpha}_L^{GW}$, and the lower bound discussed in Section~\ref{sec:Lowrbnd} by $\hat{\alpha}_L$. To be able to use the methods of \cite{MeinRice06} and \cite{GenoWass04} in setting I, introduced in Section \ref{sec:Choose_c_n}, we transform the data such that $F_b$ is $\text{Uniform}(0,1)$ ; see Section \ref{sec:EstAlpha} for the details.

We take $\alpha \in \{0, 0.01, 0.03, 0.05, 0.10\}$ and compare the performance of the three lower bounds in the two different simulation settings discussed in Section~\ref{sec:Choose_c_n}. For each setting we take the sample size $n$ to be $1000$ and $5000$. We present the estimated coverage probabilities, obtained by averaging over $5000$ independent replications, of the lower bounds for both settings in Table \ref{tab:n1000and5000}. We can immediately see from the table that the bounds are usually quite conservative. However, it is worth pointing out that when $\alpha_0=0$, our method has exact coverage, as discussed in Section~\ref{sec:Lowrbnd}. Also, the fact that our procedure is simple, easy to implement, and completely automated, makes it very attractive.

\subsection{Estimation of $\alpha_0$}\label{sec:PerfEst}

In this sub-section, we illustrate and compare the performance of different estimators of $\alpha_0$ under two sampling scenarios. 
In scenario A, we proceed as in \cite{LangaasEtAl05}. 
Let $\mathbf{X}_j = (X_{1j}, X_{2j}, \ldots ,X_{nj})$, for $j = 1,\ldots, J$, and assume that each $\mathbf{X}_j \sim N(\mu_{n \times 1}, \Sigma_{n \times n})$ and that $\mathbf{X}_1, \mathbf{X}_2, \ldots , \mathbf{X}_J$ are independent. We test $H_{0i}:\mu_i=0$ versus $H_{1i}: \mu_i \ne 0$ for each $i=1,2,\ldots,n$. We set $\mu_i$ to zero for the true null hypotheses, whereas for the false null hypotheses, we draw $\mu_i$ from a symmetric bi-triangular density with parameters $a=\log_2 (1.2)=0.263$ and $b=\log_2(4)=2$; see page 568 of \cite{LangaasEtAl05} for the details. Let $x_{ij}$  denote a realisation of $X_{ij}$ and $\alpha$ be the proportion of false null hypotheses. Let  $\bar{x}_i= \sum_{j=1}^J x_{ij}/J$ and  $ s_i^2= \sum_{j=1}^J (x_{ij}-\bar{x}_i)^2/ (J-1)$. To test $H_{0i}$ versus $H_{1i}$, we calculate a two-sided $p$-value based on a one-sample $t$-test, with $p_i = 2 P(T_{J-1} \geq |\bar{x}_i/ \sqrt{s_i^2/J}|)$, 
 where $T_{J-1}$ is a $t$-distributed random variable with $J-1$ degrees of freedom. 

In scenario B, we generate $n+L$ independent random variables $w_1,w_2, \ldots ,w_{n+L}$ from $N(0,1)$ and set $z_i = \frac{1}{\sqrt{L+1}} \sum_{j=i}^{i+L} w_j$ for $i =1,2, \ldots ,n$. The  dependence structure of the $z_i$'s is determined by $L$. For example, $L=0$ corresponds to the case where the $z_i$'s are i.i.d.~standard normal. Let $X_i= z_i +m_i $, for $i =1,2, \ldots ,n$, where $m_i=0$ under the null, and under the alternative, $|m_i|$ is randomly generated from $\text{Uniform}(m^*, m^*+1)$ and $\sgn(m_i)$, the sign of $m_i$, is randomly generated from $\{-1,1\}$ with equal probabilities. Here $m^*$ is a suitable constant that describes the simulation setting. Let $1-\alpha$ be the proportion of true null hypotheses. 
Scenario B is inspired by the numerical studies in \cite{CaiJin10} and \cite{Jin08}.
 \begin{figure}[h!]
\centering
  \captionsetup[subfigure]{labelformat=empty}
\subfloat[]{
\includegraphics[width=2.6in,height=2.3in]{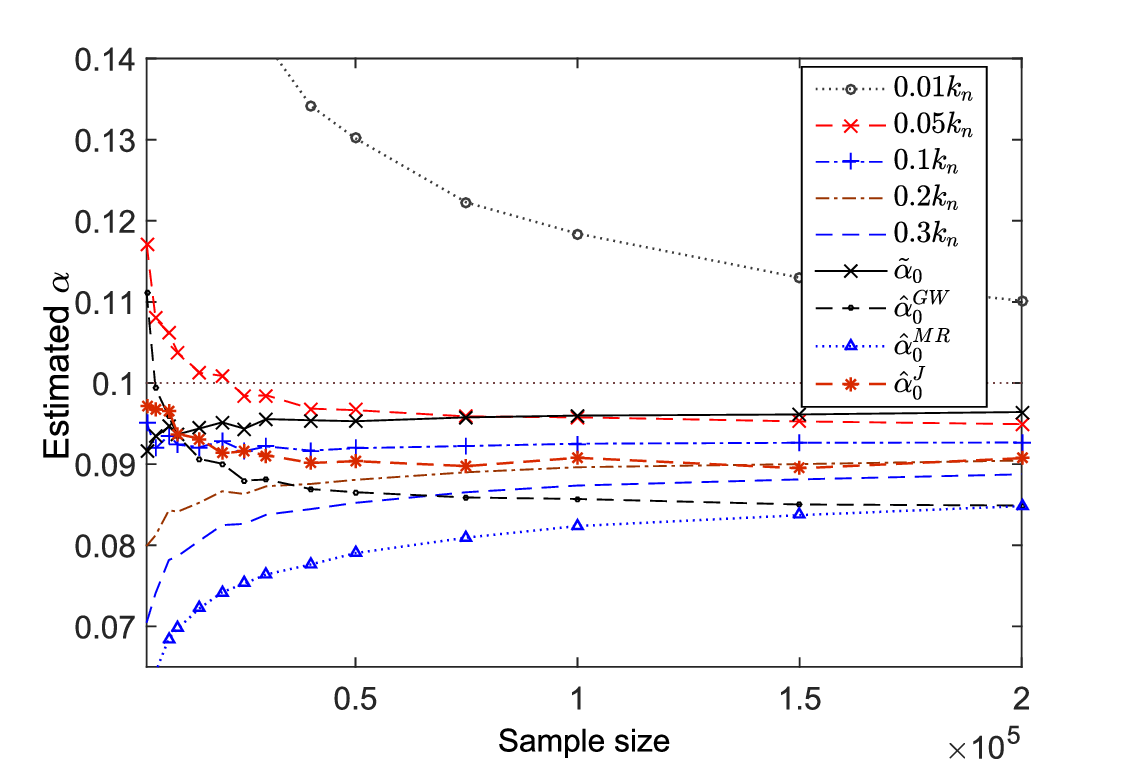}}
\subfloat[]{
\includegraphics[width=2.6in,height=2.3in]{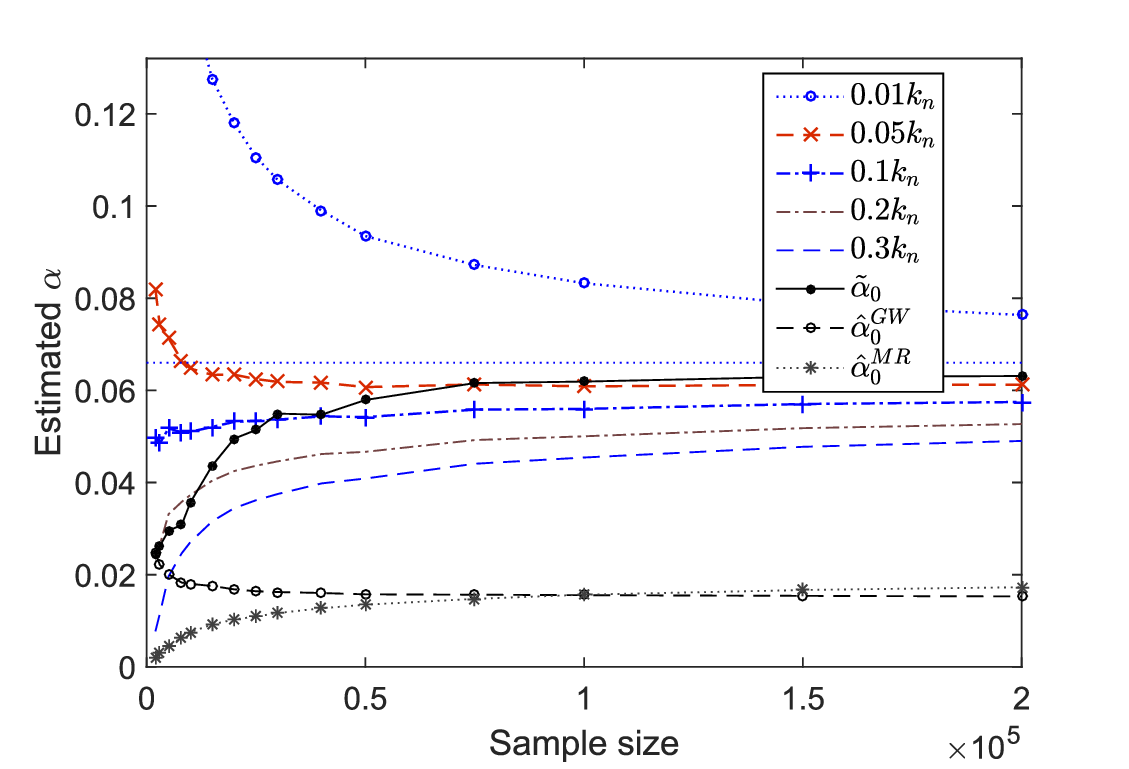}}\\[-5ex]
\caption{ Plots of the means of different estimators of $\alpha_0$, computed  over 500 independent replications, as the sample size increases from $3000$ to $2\times 10^5$; left panel: scenario A with  $\Sigma =I_{n \times n};$ right panel: scenario B with  $L=0$ and $m^*=1$. The horizontal line (in dotted blue) indicates the value of $\alpha_0.$ }
\label{fig:choosingc_n} 
\end{figure} 

We use $\hat{\alpha}_0^{S,B}$ to denote the estimator proposed by \cite{Storey02} when bootstrapping is used to choose the required tuning parameter, and denote by $\hat{\alpha}_0^{S,\lambda}$ the estimator when the value of the tuning parameter is fixed at $\lambda.$ \cite{LangaasEtAl05} proposed an estimator that is tuning parameter free but crucially uses the known shape constraint of a convex and non-increasing $f_s$; we denote it by $\hat{\alpha}_0^{L}$. We evaluate $\hat{\alpha}_0^L$ using the {\tt convest} function in the R library {\tt limma}. We also use the estimator proposed in \cite{MeinRice06} for two bounding functions: $\delta(t)=\sqrt{t(1-t)}$ and $\delta(t)=1$. For its implementation, we must choose a sequence $\{\beta_n\}$ going to zero as $n \rightarrow \infty$. \cite{MeinRice06} did not specify any particular choice of $\{\beta_n\}$ but required the sequence satisfy some conditions. We choose $\beta_n = 0.05/\sqrt{n}$ and denote the estimators by $\hat{\alpha}_0^{MR}$ when $\delta(t)=\sqrt{t(1-t)}$ and by $\hat{\alpha}_0^{GW}$ when $\delta(t)=1$ (see \cite{GenoWass04}). We also compare our results with $\hat{\alpha}_0^E$, the estimator proposed in \cite{Efron07} using the central matching method, computed using  the {\tt locfdr} function in the R library {\tt locfdr}. \cite{Jin08} and \cite{CaiJin10} propose estimators when the model is a mixture of Gaussian distributions; we denote the  estimator proposed in Section 2.2 of \cite{Jin08} by $\hat{\alpha}_0^J$ and in Section 3.1 of \cite{CaiJin10} by $\hat{\alpha}_0^{CJ}.$ Some of the competing methods require $F_b$ to be of a specific form (e.g., standard normal) in which case we transform the observed data suitably.

The estimator $\hat \alpha_0^{c_n}$ depends on the choice of $c_n$ and in the following we investigate a proper choice of $c_n$. We take $\alpha_0=0.1$ and evaluate the performance of  $\hat{\alpha}_0^{\tau \times \log \log n}$ for different values of $\tau$, as $n$ increases, for scenarios A and B. The choice $c_n=\tau \times \log \log n$, for different values of $\tau$, is suggested after extensive simulations. We also include $\tilde{\alpha}_0$, $\hat{\alpha}_0^{GW}$,  $\hat{\alpha}_0^{MR}$, and $\hat{\alpha}_0^J$ in the comparison.   For scenario A, we fix the sample size $n$ at $5000$ and   $\Sigma= I_{n \times n}$. For scenario B, we fix $n =5 \times 10^4,$ $L=0,$ and $m^*=1.$  In Fig.~\ref{fig:choosingc_n}, we illustrate  the effect of $c_n$ on estimation of $\alpha_0$ as $n$ varies from $3000$ to $10^5.$ Recall that $\tilde{\alpha}_0$ denotes the estimator proposed in Section~\ref{sec:Choose_c_n}. For both scenarios, the sample mean of the estimators of $\alpha_0$ proposed in this paper converge to the true $\alpha_0$, as the sample size grows. The methods developed in this paper perform favorably in comparison to $\hat{\alpha}_0^{GW}$,  $\hat{\alpha}_0^{MR}$, and $\hat{\alpha}_0^J.$ Since, the choice of $c_n$ dictates the finite sample performance of $\hat{\alpha}_0^{c_n},$ we propose cross-validation to find an appropriate value of the tuning parameter. 
\begin{table}
\caption{ \label{tab:langassIND} Means$\times 10$  and RMSEs$\times 100$ (in parentheses) of  estimators discussed in Section~\ref{sec:PerfEst} for scenario A with $\Sigma=I_{n \times n},$ $J=10$, $n=5000$, and $k_n= \log  \log n.$} 
\centering
\begin{tabular}{*{11}{c}}
\toprule
    $10\alpha_0 $&$\hat{\alpha}_0^{.1 k_n} $& $\hat{\alpha}_0^{CV} $ & $\tilde{\alpha}_0$ & $\hat{\alpha}_0^{GW}$ & $\hat{\alpha}_0^{MR}$ & $\hat{\alpha}_0^{S,0.5}$ &  $\hat{\alpha}_0^J$ & $\hat{\alpha}_0^{CJ}$  &  $\hat{\alpha}_0^{L}$  &  $\hat{\alpha}_0^ {E}$\\
\midrule 
 0.10  & 0.13  & 0.15  & 0.13  & 0.00  & 0.01  & 0.09  & 0.14  & 0.05  & 0.16  & 0.36 \\
          & (1.00) & (1.79) & (0.83) & (1.00) & (0.88) & (1.41) & (1.50) & (5.32) & (1.20) & (3.70) \\
    0.30  & 0.30  & 0.35  & 0.27  & 0.02  & 0.12  & 0.29  & 0.29  & 0.15  & 0.35  & 0.36 \\
          & (1.02) & (1.87) & (1.01) & (2.80) & (1.84) & (1.41) & (1.83) & (5.46) & (1.26) & (3.96) \\
    0.50  & 0.48  & 0.51  & 0.46  & 0.18  & 0.26  & 0.47  & 0.49  & 0.26  & 0.55  & 0.35 \\
          & (1.09) & (1.9) & (1.12) & (3.29) & (2.46) & (1.49) & (1.91) & (5.73) & (1.34) & (3.80) \\
    1.00  & 0.93  & 0.97  & 0.93  & 0.62  & 0.65  & 0.95  & 0.96  & 0.51  & 1.02  & 0.33 \\
          & (1.35) & (1.86) & (1.32) & (3.88) & (3.57) & (1.51) & (1.94) & (7.16) & (1.36) & (3.73) \\
\bottomrule
\end{tabular}
\end{table}
\begin{table}
	\caption{ \label{tab:JinIND} Means$\times 10$  and RMSEs$\times 100$ (in parentheses) of  estimators discussed in Section~\ref{sec:PerfEst} for scenario B with $L=0,$ $m^*=1,$ $n=5 \times 10^4$, and $k_n= \log \log n.$}
	\centering
	\begin{tabular}{*{11}{c}} \toprule
		$10\alpha_0$ & $\hat{\alpha}_0^{.1 k_n} $& $\hat{\alpha}_0^{CV}$  &$\tilde{\alpha}_0$ & $\hat{\alpha}_0^{GW}$ & $\hat{\alpha}_0^{MR}$ & $\hat{\alpha}_0^{S,B}$ &  $\hat{\alpha}_0^J$ & $\hat{\alpha}_0^{CJ}$  &  $\hat{\alpha}_0^{L}$  &  $\hat{\alpha}_0^ {E}$ \\
		\midrule
		
    0.07  & 0.03  & 0.04  & 0.08  & 0.00  & 0.00  & 0.04  & 0.11  & 0.19  & 0.03  & 0.06 \\
          & (0.44) & (0.67) & (0.28) & (0.66) & (0.66) & (0.65) & (0.96) & (2.96) & (0.38) & (0.77) \\
    0.20  & 0.14  & 0.18  & 0.16  & 0.00  & 0.01  & 0.08  & 0.28  & 0.55  & 0.07  & 0.05 \\
          & (0.73) & (0.79) & (0.62) & (1.98) & (1.89) & (2.25) & (1.33) & (4.41) & (1.26) & (1.28) \\
    0.33  & 0.25  & 0.31  & 0.28  & 0.02  & 0.04  & 0.12  & 0.48  & 0.92  & 0.12  & 0.05 \\
          & (0.89) & (0.85) & (0.95) & (3.15) & (2.91) & (3.83) & (1.77) & (6.48) & (2.14) & (1.90) \\
    0.66  & 0.55  & 0.62  & 0.58  & 0.12  & 0.14  & 0.23  & 0.95  & 1.83  & 0.23  & 0.05 \\
          & (1.21) & (1.00) & (1.48) & (5.38) & (5.25) & (7.73) & (3.04) & (11.98) & (4.34) & (3.84) \\
     \bottomrule
	\end{tabular}
\end{table}
\subsubsection{Cross-validation} \label{sec:cv} 
 In this sub-section, we use $c$ instead of $c_n$ to simplify the notation. In the following we briefly describe our cross-validation procedure. For a $K$-fold cross validation, we randomly partition the data into $K$ sets, say $\D_1,\ldots,\D_K.$ Let $\mathbb{F}_n^{k}$ be the empirical CDF of the data in $\D_k.$ Let $\hat{\alpha}_{0,-k}^{c}$ be the estimator defined in \eqref{eq:EstAlpha} using all data except those in $\D_k$ and tuning parameter $c$. Further, let $\check{F}_{s,n}^{\hat{\alpha}_{0,-k}^{c},-k}$ be the estimator of $F_s$  as defined in Lemma \ref{lemma:Fcheck} using $\hat{\alpha}_{0,-k}^{c}$ and all data except those in $\D_k.$ Define the cross-validated  estimator of $c$ as
 \be \label{eq:cv} c_{cv} := \argmin_{c\in \R} \sum_{k=1}^K \int (\mathbb{F}_n^{k}- \hat{F}^{k})^2 d\mathbb{F}_n^{k}, \ee
where  $\hat{F}^{k}:=\hat{\alpha}_{0,-k}^{c} \check{F}_{s}^{\hat{\alpha}_{0,-k}^{c},-k} +(1-\hat{\alpha}_{0,-k}^{c}) F_b.$   In all simulations in this paper, we use $K=10$ and  denote this estimator by $\hat{\alpha}_0^{CV};$ see Section 7.10 of~\cite{ESL} for a more detailed study of cross-validation and a justification for $K=10$. Fig.~\ref{fig:comparison_plots} illustrates the superior performance of $\hat{\alpha}_0^{CV}$ across different simulation settings; also see Sections~\ref{sec:SimInd} and~\ref{sec:comparison}, and Appendix~\ref{sec:perfor_cont} 
\subsubsection{Performance under independence}
\label{sec:SimInd}
In this sub-section, we take $\alpha \in \{0.01, 0.03, 0.05, 0.10\}$ and compare the performance of the different estimators under the independence setting of scenarios A and B.  In Tables~\ref{tab:langassIND} and \ref{tab:JinIND}, we give the mean and root mean squared error (RMSE)  of the estimators over 5000 independent replications. For scenario A, we fix the sample size $n$ at $5000$ and   $\Sigma= I_{n \times n}$. For scenario B, we fix $n =5 \times 10^4,$ $L=0,$ and $m^*=1.$ By an application of Lemma \ref{lemma:Identifiability for absolutely continuous}, it is easy to see that in scenario A, the model is identifiable (i.e., $\alpha_0 =\alpha$), while in scenario B, $\alpha_0=\alpha \times 0.67$. For scenario A, the sample means of $\hat{\alpha}_0^{CV},$ $\tilde{\alpha}_0,$ $\hat{\alpha}_0^J,$ $\hat{\alpha}_0^L,$ and $\hat{\alpha}_0^{0.1 k_n}$ for $k_n=\log \log n$ are comparable. However, the RMSEs of $\tilde{\alpha}_0$ and  $\hat{\alpha}_0^{0.1 k_n}$ are lower than those of $\hat{\alpha}_0^{CV},$ $\hat{\alpha}_0^J,$ and  $\hat{\alpha}_0^L.$   For scenario B, the sample means of $\tilde{\alpha}_0,$ $\hat{\alpha}_0^{CV},$ and  $\hat{\alpha}_0^{0.1 k_n}$ are comparable.  In scenario B, the performances of $\hat{\alpha}_0^J$ and $\hat{\alpha}_0^{CJ}$ are not comparable to the estimators proposed in this paper, as $\hat{\alpha}_0^J$ and  $\hat{\alpha}_0^{CJ}$ estimate $\alpha,$ while $\tilde{\alpha}_0,$ $\hat{\alpha}_0^{CV},$ and $\hat{\alpha}_0^{c_n}$ estimate $\alpha_0.$ Note that $\hat{\alpha}_0^L$ fails to estimate $\alpha_0$ because the underlying assumption inherent in their estimation procedure, that $f_s$ be non-increasing, does not hold.  In scenario A, $\hat{\alpha}_0^{S,0.5}$ has the best performance among the different values of $\lambda,$ while in scenario B, $\hat{\alpha}_0^{S,\lambda}$ has poor performance for all values of $\lambda \in [0,1].$ Furthermore, $\hat{\alpha}_0^{GW},$ $\hat{\alpha}_0^{MR},$ $\hat{\alpha}_0^{CJ}, \hat{\alpha}_0^{S,B}$ and $\hat{\alpha}_0^ {E}$ perform poorly  in both scenarios for all values of $\alpha_0.$

\subsubsection{Performance under dependence} 
\label{sec:SimDep}
The simulation settings of this sub-section are designed to investigate the effect of dependence on the performance of the estimators. For scenario A, we use the setting of \cite{LangaasEtAl05}. We take  $\Sigma$ to be a block diagonal matrix with block size 100. Within blocks, the diagonal elements (i.e., variances) are set to 1 and the off-diagonal elements (within-block correlations) are set to $\rho=0.5$. Outside of the blocks, all entries are set to 0. Tables \ref{tab:langassDEP} and \ref{tab:JinDEP} show that in both scenarios, none of the methods perform well for small values of $\alpha_0.$ However, in scenario A,  the performances of  $\hat \alpha_0^{0.1 k_n},$ $\tilde{\alpha}_0,$ and $\alpha_0^J$ are comparable, for larger values of $\alpha_0.$ In scenario B,  $\hat \alpha_0^{0.1 k_n}$ performs well for $\alpha_0= 0.033$ and $0.067$.  Observe that, as in the independence setting, $\hat{\alpha}_0^{GW},$ $\hat{\alpha}_0^{MR},$ $\hat{\alpha}_0^{S,B},$  $\hat{\alpha}_0^{CJ},$ and $\hat{\alpha}_0^ {E}$ perform poorly  in both scenarios for all values of $\alpha_0.$
\begin{table}
\caption{ \label{tab:langassDEP} Means$\times 10$  and RMSEs$\times 100$ (in parentheses) of  estimators discussed in Section~\ref{sec:PerfEst} for scenario A with $\Sigma$ as described in Section~\ref{sec:SimDep}, $J=10$, $n=5000$, and $k_n= \log \log n$.}
\begin{tabular}{*{11}{c}} \toprule
    $10\alpha_0$&  $\hat{\alpha}_0^{.1 k_n} $&$\hat{\alpha}_0^{CV} $ & $\tilde{\alpha}_0$ & $\hat{\alpha}_0^{GW}$ & $\hat{\alpha}_0^{MR}$ & $\hat{\alpha}_0^{S,0.5}$ &  $\hat{\alpha}_0^J$ & $\hat{\alpha}_0^{CJ}$  &  $\hat{\alpha}_0^{L}$  &  $\hat{\alpha}_0^ {E}$\\
\midrule    0.10  & 0.46  & 0.42  & 0.33  & 0.07  & 0.06  & 0.28  & 0.22  & 0.07  & 0.32  & 0.37 \\
          & (5.15) & (4.23) & (3.84) & (1.72) & (1.27) & (4.11) & (3.03) & (10.61) & (4.37) & (3.91) \\
    0.30  & 0.52  & 0.53  & 0.41  & 0.14  & 0.17  & 0.65  & 0.34  & 0.15  & 0.49  & 0.39 \\
          & (3.80) & (3.64) & (3.59) & (2.72) & (1.90) & (6.58) & (3.25) & (10.35) & (4.30) & (4.31) \\
    0.50  & 0.66  & 0.76  & 0.54  & 0.26  & 0.31  & 0.54  & 0.49  & 0.25  & 0.66  & 0.37 \\
          & (3.52) & (5.43) & (3.85) & (3.56) & (2.50) & (2.61) & (3.60) & (10.45) & (4.31) & (4.03) \\
    1.00  & 1.06  & 1.13  & 0.97  & 0.68  & 0.69  & 1.15  & 0.97  & 0.53  & 1.11  & 0.36 \\
          & (3.09) & (3.92) & (4.00) & (4.15) & (3.54) & (6.01) & (3.61) & (10.55) & (4.13) & (3.99) \\
         \bottomrule
\end{tabular}

\end{table}
\begin{table}
\caption{ \label{tab:JinDEP} Means$\times 10$  and RMSEs$\times 100$ (in parentheses) of  estimators discussed in Section~\ref{sec:PerfEst} for scenario B with $L=30$, $m^*=1$, $n=5 \times 10^4$, and $k_n= \log \log n$.}

\begin{tabular}{*{11}{c}}
\toprule
     $10\alpha_0$ &  $\hat{\alpha}_0^{.1 k_n} $&  $\hat{\alpha}_0^{CV} $ &$ \tilde{\alpha}_0$ & $\hat{\alpha}_0^{GW}$ & $\hat{\alpha}_0^{MR}$ & $\hat{\alpha}_0^{S,B}$ &  $\hat{\alpha}_0^J$ & $\hat{\alpha}_0^{CJ}$  &  $\hat{\alpha}_0^{L}$  &  $\hat{\alpha}_0^ {E}$\\
\midrule
 
    0.07  & 0.29  & 0.38  & 0.17  & 0.04  & 0.05  & 0.26  & 0.20  & 0.21  & 0.13  & 0.22 \\
          & (2.92) & (3.70) & (1.62) & (1.02) & (1.36) & (3.71) & (2.80) & (9.87) & (1.75) & (2.22) \\
    0.20  & 0.30  & 0.42  & 0.18  & 0.04  & 0.04  & 0.16  & 0.33  & 0.55  & 0.13  & 0.19 \\
          & (1.84) & (2.88) & (1.25) & (1.75) & (1.71) & (2.24) & (3.25) & (10.35) & (1.42) & (2.27) \\
    0.33  & 0.38  & 0.52  & 0.20  & 0.06  & 0.06  & 0.17  & 0.50  & 0.93  & 0.16  & 0.18 \\
          & (1.54) & (2.74) & (1.89) & (2.83) & (2.73) & (3.51) & (3.71) & (11.52) & (2.03) & (2.59) \\
    0.67  & 0.63  & 0.77  & 0.31  & 0.14  & 0.15  & 0.24  & 0.95  & 1.82  & 0.25  & 0.16 \\
          & (1.53) & (2.25) & (4.32) & (5.26) & (5.13) & (7.60) & (4.54) & (15.13) & (4.23) & (4.08) \\
\bottomrule  
\end{tabular}
\end{table}
 \begin{figure}[h!]
\vspace{0cm}
\centering
\includegraphics[width=5.8in,height=4.3in]{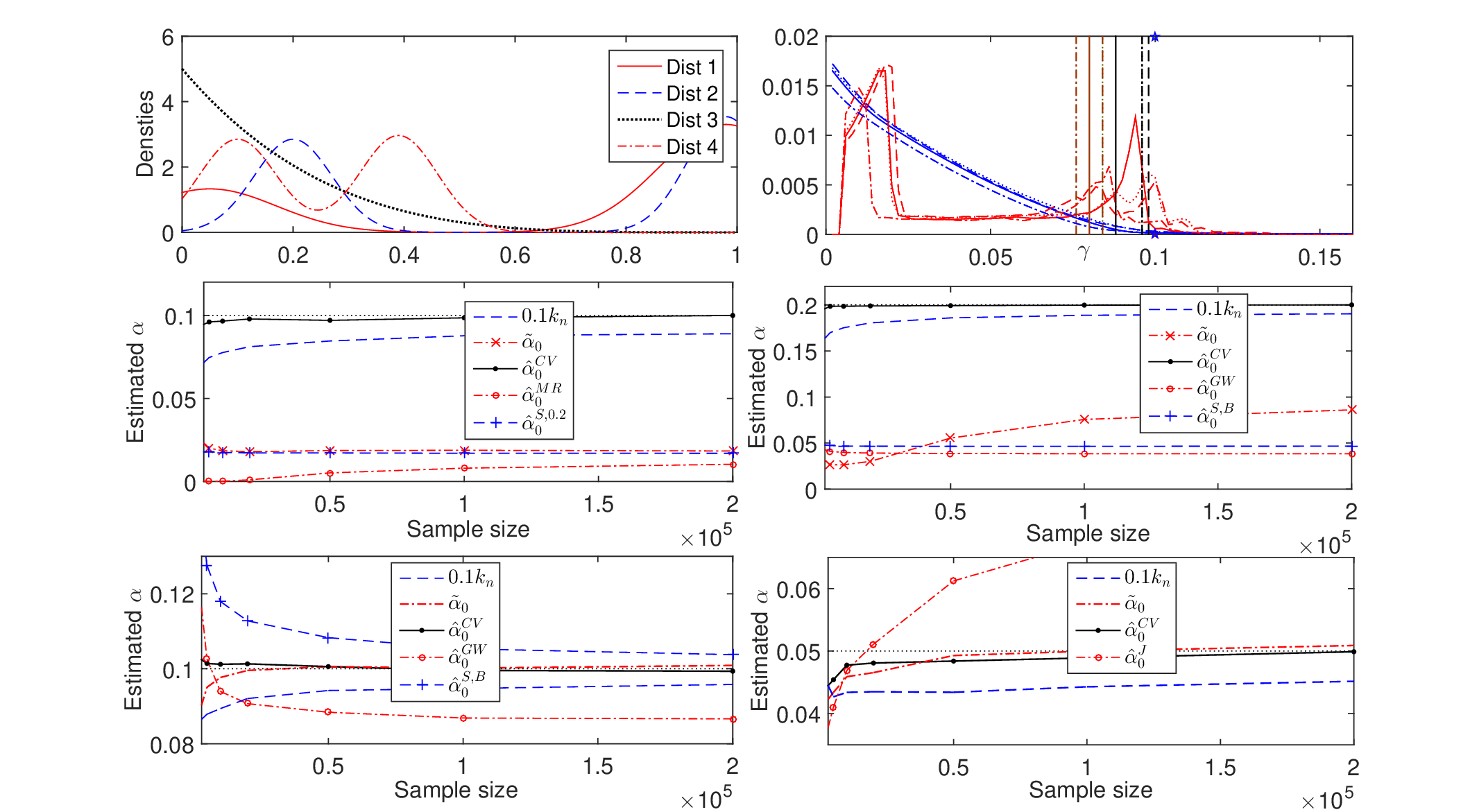}
\caption{Top row left panel: density functions for different choices of $F_s$;  top row right panel: plot of $\gamma d_n(\hat{F}_{s,n}^{\gamma},\check{F}_{s,n}^{\gamma})$ (in blue), the scaled second derivative (in red), $\hat{\alpha}_0^{CV}$ (in black), and  $\hat{\alpha}_0^{0.1 k_n}$ (in brown)  for 5 independent samples of size 5000  corresponding to ``Dist 1"; the blue star denotes $\alpha_0$. The bottom two rows show the means of different competing estimators of $\alpha_0$, computed over 500 independent samples for Dist 1-4 (left-right, top-bottom) as sample size increases from $3000$ to $2\times10^5$; in each figure the dotted black line denotes the true $\alpha_0.$}
\label{fig:comparison_plots} 
\vspace*{0cm}
\end{figure}
\subsubsection{Comparing the performance of $\hat{\alpha}_0^{c_n},$ $\hat{\alpha}_0^{CV},$ and $\tilde{\alpha}_0$} \label{sec:comparison} 
Although the heuristic estimator $\tilde \alpha_0$ performs quite well in most of the simulation settings considered, there exists scenarios where $\tilde \alpha_0$ can fail to consistently estimate $\alpha_0$. To illustrate this we consider four different CDFs $F_s$ and fix $F_b$ to be the uniform distribution on $(0,1)$ (see the top left plot of Fig.~\ref{fig:comparison_plots}) and compare the performance of $\hat{\alpha}_0^{CV}$, $\tilde{\alpha}_0$, $\hat{\alpha}_0^{0.1 k_n}$ with the best performing competing estimators (in each setting). 

We see that $\tilde{\alpha}_0$ may fail to estimate the ``elbow'' of $\gamma d_n(\hat{F}_{s,n}^{\gamma},\check{F}_{s,n}^{\gamma})$, as a function of $\gamma$, when $F_s$ has a multi-modal density (see the middle row of Fig.~\ref{fig:comparison_plots}). Observe that $\hat{\alpha}_0^{CV}$ and $\hat{\alpha}_{0}^{0.1k_n}$ perform favorably compared to all competing estimators and in the two scenarios where $\tilde{\alpha}_0$ fails to consistently estimate ${\alpha}_0$, all our competing estimators also fail. 

The first two toy examples have been carefully constructed to demonstrate situations where the point of maximum curvature ($\tilde{\alpha}_0$) is different from the ``elbow'' of the function; see the top right plot of Fig.~\ref{fig:comparison_plots} (also see Appendix \ref{sec:perfor_cont} for further such examples). 

\begin{figure}[h!]
\captionsetup[subfigure]{labelformat=empty}
\centering
\includegraphics[width=5in, height=3in]{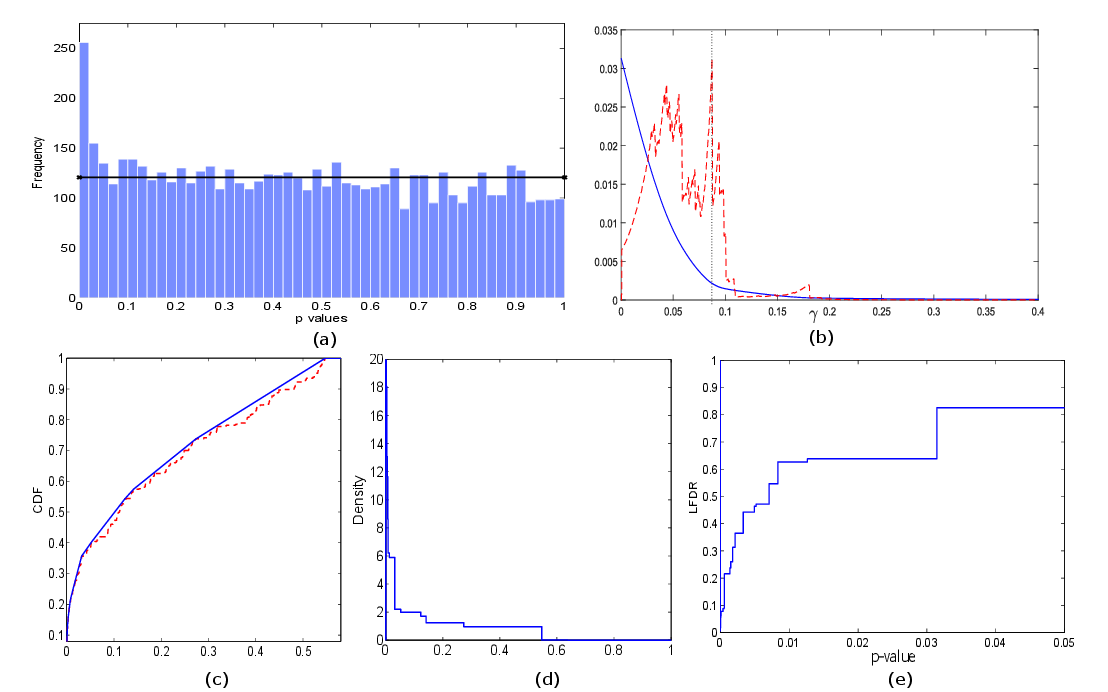}\\ [-2ex]
\caption{ Plots for the prostate data:  (a) Histogram of the $p$-values. The horizontal line (in solid black) indicates the Uniform$(0,1)$ distribution. (b) Plot of $\gamma d_n(\hat{F}_{s,n}^{\gamma},\check{F}_{s,n}^{\gamma})$ (in solid blue) overlaid with its (scaled) second derivative (in dashed red). The vertical line (in dotted black) indicates the point of maximum curvature $\tilde{\alpha}_0 = 0.088$. (c) $\check F_{s,n}^{\tilde{\alpha}_0}$ (in dotted red) and $F_{s,n}^{\dagger}$  (in solid blue); (d) $f_{s,n}^{\dagger}$; (e) estimated LFDR $\hat l$ for $p$-values less than 0.05.}
\label{fig:Prostatedens_dist} 

\end{figure}
\subsubsection{Our recommendation} In this paper we study two estimators for $\alpha_0$. For $\hat{\alpha}_0^{c_n}$, a proper choice of $c_n$ is important for good finite sample performance. We suggest using cross-validation to find the optimal tuning parameter $c_n$. However, cross-validation can be computationally expensive. An attractive alternative in this situation is to use $\tilde{\alpha}_0$, which is easy to implement and has very good finite sample performance in most scenarios, especially with large sample sizes.  We feel that a visual analysis of the plot of $\gamma d_n(\hat{F}_{s,n}^{\gamma},\check{F}_{s,n}^{\gamma})$ can be useful in checking the validity of $\tilde{\alpha}_0$ as an estimator of the ``elbow'', and thus for ${\alpha}_0$. 


\section{Real data analysis}
\label{sec:RealData}		
\subsection{Prostate data}
\label{sec:prostate}

Genetic expression levels for $n= 6033$ genes were obtained for $m = 102$ men, $m_1 = 50$ normal control subjects and $m_2 = 52$ prostate cancer patients. Without going into the biology involved, the principal goal of the study was to discover a small number of  ``interesting'' genes, that is, genes whose expression levels differ between the cancer and control patients. Such genes, once identified, might be further investigated for a causal link to prostate cancer development.
The prostate data is a 6033 $\times$ 102 matrix $\mathbb{X}$ having entries $x_{ij}=\mbox{expression level for gene } i \mbox{ on patient }j$, $i=1,2,\ldots,n$, and $j=1,2, \ldots,m,$ with $j=1,2,\ldots,50$, for the normal controls, and $j=51,52,\ldots,102$, for the cancer patients. Let $\bar{x}_i(1)$ and $\bar{x}_i(2)$ be the averages of $x_{ij}$ for the normal controls and for the cancer patients, respectively, for gene $i$. The two-sample $t$-statistic for testing significance of gene $i$ is $t_i=\{\bar{x}_i(1)-\bar{x}_i(2)\}/s_i,$ where $s_i $ is an estimate of the standard error of  $\bar{x}_i(1)-\bar{x}_i(2)$, i.e.,
$s_i^2 = (1/50 + 1/52) [\sum_{j=1}^{50} \{x_{ij}-\bar{x}_i(1)\}^2 + \sum_{j=51}^{102} \{x_{ij}-\bar{x}_i(2)\}^2 ]/100.$

We work with the $p$-values obtained from the 6033 two-sided $t$-tests instead of the ``$t$-values" as then the distribution under the alternative will have a non-increasing density which we can estimate using the method developed in Section~\ref{sec:EstConcaveFs}.   Note that in our analysis we ignore the dependence of the $p$-values, which is only a moderately risky assumption for the prostate data; see Chapters 2 and 8 of \cite{EfronLargeScaleInf10} for further analysis and justification. 
Fig.~\ref{fig:Prostatedens_dist} show the plots of various quantities of interest, found using the methodology developed in Section~\ref{sec:EstConcaveFs} and Section~\ref{sec:MultTest}, for the prostate data example. The 95\% lower confidence bound $\hat{\alpha}_L$ for this data is found to be $0.05$.
In Table~\ref{tab:realdata}, we display estimates of $\alpha_0$ based on the methods considered in this paper for the prostate data and the Carina data (described below). 
\subsection{Carina data -- an application in astronomy} \label{sec:astro}

In this sub-section we analyse the radial velocity (RV) distribution of stars in Carina, a dwarf spheroidal (dSph) galaxy. The dSph galaxies are low luminosity galaxies that are companions of the Milky Way. The data have been obtained by Magellan and MMT telescopes (see \cite{Mattetal07}) and consist of radial (line of sight) velocity measurements of $n = 1266$ stars from Carina, contaminated with Milky Way stars in the field of view. We would like to understand the distribution of the RV of stars in Carina. For the contaminating stars from the Milky Way in the field of view we assume a non-Gaussian velocity distribution $F_b$ that is known from the Besancon Milky Way model (\cite{RobinEtAl03}), calculated along the line of sight to Carina.

\begin{table}
\caption{ \label{tab:realdata}Estimates of $\alpha_0$ for the two data sets.} 
\centering
\begin{tabular}{*{12}{l}}
\toprule
    Data set &  $\hat{\alpha}_0^{0.1 k_n} $&$\hat{\alpha}_0^{CV}$ & $\tilde{\alpha}_0$ & $\hat{\alpha}_0^{GW}$ & $\hat{\alpha}_0^{MR}$ & $\hat{\alpha}_0^{S,B}$ &  $\hat{\alpha}_0^J$ & $\hat{\alpha}_0^{CJ}$  &  $\hat{\alpha}_0^{L}$  &  $\hat{\alpha}_0^ {E}$ \\
    \midrule
 Prostate      &  0.08 & 0.10 & 0.09   &  0.04 &  0.01   & 0.19 &        0.10 &   0.02 & 0.11 & 0.02 \\
 Carina  &  0.36 &0.35& 0.36 & 0.31 & 0.30 &		0.45	&	0.61 &		1.00 & 0.38	& NA \\
 \bottomrule
\end{tabular}

\end{table}
\begin{figure}[h!]
\centering

\captionsetup[subfigure]{labelformat=empty}
\subfloat[]{
\includegraphics[width=1.8in, height=1.6in]{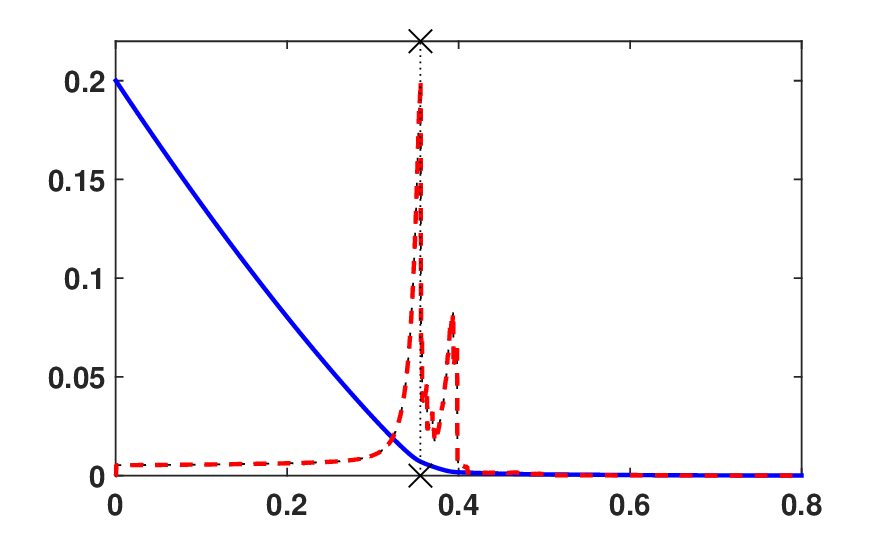}}
\subfloat[]{
\includegraphics[width=1.8in, height=1.6in]{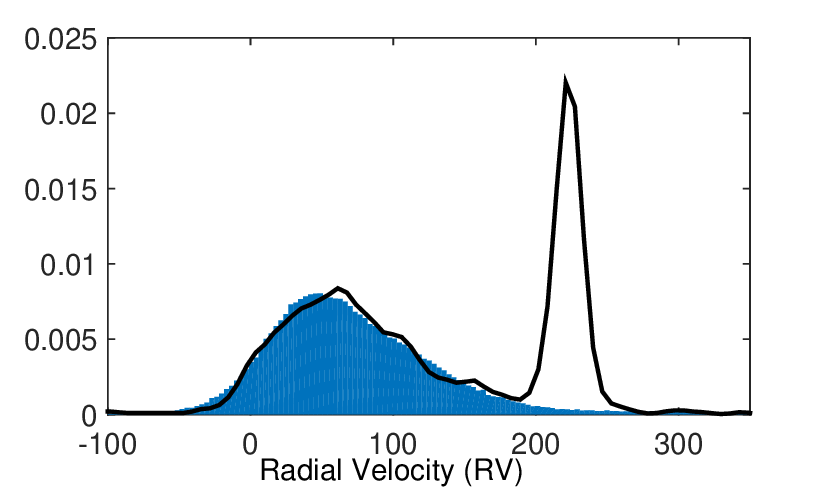}}
\subfloat[]{
\includegraphics[width=1.8in, height=1.6in]{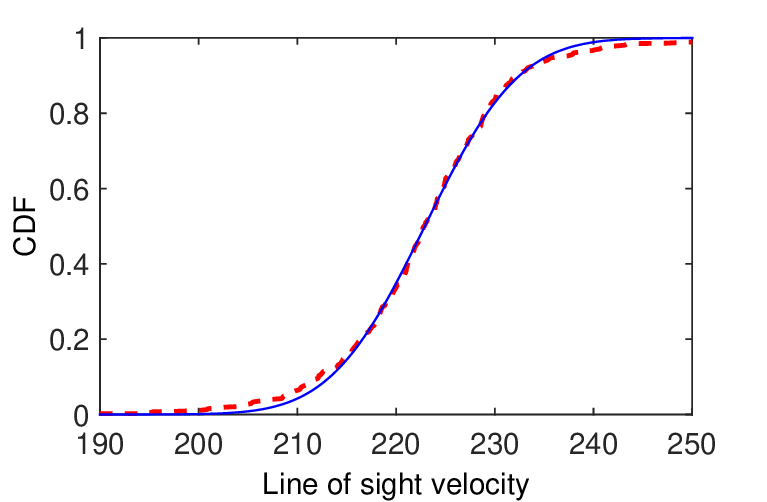}}\\[-5ex]
\caption{Plots for RV data in Carina dSph; left panel: $\gamma d_n(\hat{F}_{s,n}^{\gamma},\check{F}_{s,n}^{\gamma})$ (in solid blue) overlaid with its (scaled) second derivative (in dashed red); middle panel: density of the RV distribution of the contaminating stars overlaid with the (scaled) kernel density estimator of the observed sample; right panel:  $\check{F}_{s,n}^{\tilde{\alpha}_0}$  (in dashed red) overlaid with its closest Gaussian distribution (in solid blue). }%
\label{fig:carinaplot} 

\end{figure}
The 95\% lower confidence bound  for $\alpha_0$ is found to be $0.323$. 
The right panel of Fig.~\ref{fig:carinaplot} shows the estimate of $F_s$ and the closest (in terms of minimising the $L_2(\check{F}_{s,n}^{\tilde{\alpha}_0})$ distance) fitting Gaussian distribution. Astronomers usually assume the distribution of the RVs for these dSph galaxies to be Gaussian. Indeed we see that the estimated $F_s$ is close to a normal distribution (with mean $222.9$ and standard deviation $7.51$), although a formal test of this hypothesis is beyond the scope of the present paper.
The estimate due to \cite{CaiJin10}, $\hat \alpha_0^{CJ}$, is greater than one, while Efron's method (see \cite{Efron07}),  implemented using the ``locfdr" package in R, fails to estimate $\alpha_0.$ 

\section{Concluding remarks}
\label{sec:Conclu}
In this paper we  develop procedures for estimating the mixing proportion and the unknown distribution in a two component mixture model using ideas from shape restricted function estimation. We discuss the identifiability of the model and introduce an identifiable parameter $\alpha_0$, under minimal assumptions on the model. We propose an honest finite sample lower confidence bound of $\alpha_0$ that is distribution-free. Two point estimators of $\alpha_0$, $\hat{\alpha}_0^{c_n}$ and $\tilde \alpha_0$, are studied. We prove that $\hat{\alpha}_0^{c_n}$ is a consistent estimator of $\alpha_0$ and show that the rate of convergence of $\hat{\alpha}_0^{c_n}$ can be arbitrarily close to $\sqrt{n}$, for proper choices of $c_n$. These proposed estimators crucially rely on $\gamma d_n(\hat{F}_{s,n}^{\gamma},\check{F}_{s,n}^{\gamma})$, as a function of $\gamma$, whose plot provides useful insights about the nature of the problem and performance of the estimators.

We observe that the estimators of $\alpha_0$ proposed in this paper have superior finite sample performance than most competing methods. In contrast to most previous work on this topic the  results discussed in this paper hold true even when \eqref{eq:MixMod} is not identifiable. Under the assumption that  \eqref{eq:MixMod} is identifiable, we can find an estimator of $F_s$ which is uniformly consistent. Furthermore, if $F_s$ is known to have a non-increasing density $f_s$ we can find a consistent estimator of $f_s$. All these estimators are tuning parameter free and easily implementable.

We conclude this section by outlining some possible future research directions.  Construction of two-sided confidence intervals for $\alpha_0$ remains a hard problem as the asymptotic distribution of $\hat{\alpha}_0^{c_n}$ depends on the unknown $F$. We are currently developing estimators of $\alpha_0$ when we do not exactly know $F_b$ but  only have an estimator of $F_b$ (e.g., we observe a second i.i.d.~sample from $F_b$). Investigating consistent alternative ways of detecting the ``elbow'' of  the function $\gamma d_n(\hat{F}_{s,n}^{\gamma},\check{F}_{s,n}^{\gamma})$, as an estimator of 
$\tilde{\alpha}_0$, is an interesting future research direction. As we have observed in the astronomy application, formal goodness-of-fit tests for $F_s$ are important  -- they can guide the practitioner to use appropriate parametric models for further analysis --  but are presently unknown. The $p$-values in the prostate data example, considered in Section \ref{sec:prostate}, can have slight dependence. Therefore, investigating the performance and properties of the methods introduced in this paper under appropriate dependence assumptions on $X_1,\ldots, X_n$ is another important direction for future research.

\section*{Acknowledgements}
We thank the Joint Editor, the Associate Editor, and five anonymous referees for their careful reading and constructive comments that lead to an improved version of the paper.
\appendix

\section{Identifiability of $F_s$}\label{sec:Ident_Cont}
In this section we continue the discussion on the identifiability of $F_s$. First, we give some remarks to illustrate Lemmas \ref{lemma:Identifiability for discrete} and \ref{lemma:Identifiability for absolutely continuous}.
\begin{remark}
We consider mixtures of Poisson and binomial distributions to illustrate  Lemma \ref{lemma:Identifiability for discrete}. If $F_s$ is Poisson$(\lambda_s)$ and $F_b$ is Poisson$(\lambda_b)$, then
\begin{equation*}
 \inf_{x\in d(F_b)} \frac{J_{F_s(x)}}{J_{F_b(x)}} = \inf_{k \in \mathbb{N} \cup \{0 \}} {\frac{\lambda_s^k \exp(-\lambda_s)}{\lambda_b^k \exp(-\lambda_b)}}  = \exp(\lambda_b-\lambda_s) \inf_{k \in \mathbb{N}\cup \{0 \}} \left(\frac{\lambda_s}{\lambda_b} \right)^k. 
\end{equation*}
 By an application of Lemma \ref{lemma:Identifiability for discrete}, we have   if $\lambda_s <\lambda_b$ then $\alpha_0=\alpha$; otherwise $\alpha_0=\alpha(1-\exp(\lambda_b-\lambda_s)).$
 
 In the case of a binomial mixture, i.e., $F_s=\text{Bin}(n,p_s)$ and $F_b= \text{Bin}(n,p_b)$,
\begin{equation*}
 \alpha_0= \left\{ \begin{array}{l} \alpha \left[ 1-(\frac{1-p_s}{1-p_b})^n \right],~~~~p_s \ge p_b,~~\\ \alpha \left[1-(\frac{p_s}{p_b})^n \right],~~~~~~p_s < p_b. \end{array} \right.
\end{equation*} 

\end{remark}
\begin{remark}
If $F_s$ is $N(\mu_s, \sigma_s^2)$ and $F_b$ $(\ne F_s)$ is $N(\mu_b,\sigma_b^2)$ then it can be easily shown that the problem is identifiable if and only if $\sigma_s \leq \sigma_b$. When $\sigma_s > \sigma_b,$ the model is not identifiable, an application of Lemma  \ref{lemma:Identifiability for absolutely continuous} gives $\alpha_0= \alpha \big[1- ({\sigma_b}/{\sigma_s})\exp\big(-\sigma_s\sigma_b (\mu_b-\mu_s)^2/2\big)\big]$. Thus,   $\alpha_0$ increases to $\alpha$ as $|\mu_s-\mu_b|$ tends to infinity. It should be noted that the problem is actually identifiable if we  restrict ourselves to the parametric family of a two-component Gaussian mixture model. 
\end{remark}
\begin{remark}
Now consider a mixture of exponential random variables, i.e., $F_s$ is $E(a_s,\sigma_s)$ and $F_b$ $(\ne F_s)$ is $E(a_b,\sigma_b)$, where $E(a,\sigma)$ is the distribution that has the density $(1/\sigma) \exp(-(x-a)/\sigma) \mathbf{1}_{(a,\infty)}(x)$. In this case, the problem is identifiable if $a_s > a_b$, as this implies the support of $F_s$ is a proper subset of the support of $F_b$. But when $a_s \le a_b$, the problem is identifiable if and only if $\sigma_s \leq \sigma_b$.
\end{remark}
\begin{remark}
It is also worth pointing out that even in cases where the problem is not identifiable the difference between the true mixing proportion $\alpha$ and the estimand $\alpha_0$ may be very small. Consider the  hypothesis test $H_0: \theta=0$ versus $H_1: \theta \ne 0$ for the model $N(\theta,1)$ with test statistic $\bar{X}$. The density of the p-values under $\theta$ is $$f_\theta(p) = \frac{1}{2} e^{-m \theta^2/2}[e^{-\sqrt{m}\theta^2 \Phi^{-1}(1-p/2)}+e^{\sqrt{m}\theta^2 \Phi^{-1}(1-p/2)}],$$ where $m$ is the sample size. Here $f_\theta(1)=e^{-m\theta^2/2} >0$, so the model is not identifiable. As $F_b$ is uniform, it can be easily verified that $\alpha_0=\alpha-\alpha \inf_p f_\theta(p)$. However, as the value of $f_\theta$ decreases exponentially with $m$, in many practical situations, where $m$ is not too small, the difference between $\alpha$ and $\alpha_0$ will be negligible. 
\end{remark}

In the following lemma, we try to find the relationship between $\alpha$ and $\alpha_0$ when $F$ is a general CDF.
\begin{lemma}\label{lemma:Identifiabilty_cont}
Suppose that
\be \label{eq:Mixture of Disc and Cont}
F=\kappa F^{(a)}+ (1-\kappa) F^{(d)},
\ee where $F^{(a)}$ is an absolutely continuous CDF and $F^{(d)}$ is a piecewise constant CDF, for some $\kappa \in (0,1)$. Then $$\alpha_0 =\alpha -  \min \left\{ \frac{\alpha \kappa_s -\alpha^{(a)}_0 \kappa}{	\kappa_b} ,  \frac{\alpha (1-\kappa_s)-\alpha^{(d)}_0 ({1-\kappa})}{(1-\kappa_b)}  \right\},$$
 where $\alpha_0^{(a)}$ and $\alpha_0^{(d)}$ are defined as in \eqref{eq:alpha.est}, but with $\{F^{(a)}, F_b^{(a)}\}$ and  $\{F^{(d)}, F_b^{(d)}\}$, respectively (instead of $\{F, F_b\}$). Similarly, $\kappa_s$ and $\kappa_b$ are defined as in \eqref{eq:Mixture of Disc and Cont}, but for $F_s$ and $F_b$, respectively. 
\end{lemma} 

\begin{proof}
From the definition of $\kappa_s$ and $\kappa_b$, we have $ F_s =\kappa_s F_s^{(a)} +(1-\kappa_s) F_s^{(d)},$ and $F_b=\kappa_b F_b^{(a)} +(1-\kappa_b) F_b^{(d)}.$ Thus from \eqref{eq:MixMod}, we get
\begin{equation*}
 F= \alpha \kappa_s F_s^{(a)} +(1-\alpha) \kappa_b F_b^{(a)} +\alpha (1-\kappa_s) F_s^{(d)} + (1-\alpha)(1-\kappa_s) F_b^{(d)}.
\end{equation*}
Now using the definition of $\kappa$, we see that
$\kappa = \alpha \kappa_s+(1-\alpha) \kappa_b,\  1-\kappa = \alpha (1-\kappa_s)+(1-\alpha) (1-\kappa_b).$ 
If we write 
\begin{eqnarray}
F^{(a)}=\alpha^{(a)} F_s^{(a)} +(1-\alpha^{(a)}) F_b^{(a)},\label{eq:submixmodelabs} \nonumber
\end{eqnarray}
it can easily seen that $\alpha^{(a)}= \frac{\alpha \kappa_s}{\kappa}$; and similarly,  $\alpha^{(d)}=\frac{\alpha (1-\kappa_s)}{1-\kappa}$. Then, we can find $\alpha_0^{(d)}$ and $\alpha_0^{(a)}$ as in Lemmas \ref{lemma:Identifiability for discrete} and \ref{lemma:Identifiability for absolutely continuous}, respectively.
Note that  
\begin{eqnarray*}
&&\sup \left\{ 0 \leq \epsilon \leq 1  : \alpha F_s-\epsilon F_b \mbox{  is a sub-CDF} \right\}\\
&=&\sup \left\{ 0 \leq \epsilon \leq 1  : \alpha (\kappa_s F_s^{(a)} +(1-\kappa_s) F_s^{(d)}) -\epsilon (\kappa_b F_b^{(a)} +(1-\kappa_b) F_b^{(d)})\mbox{  is a sub-CDF} \right\}\\
&=&\sup \left\{ 0 \leq \epsilon \leq 1  :  \text{both } \alpha \kappa_s F_s^{(a)}- \epsilon \kappa_b F_b^{(a)}, \alpha (1-\kappa_s) F_s^{(d)}  - \epsilon (1-\kappa_b) F_b^{(d)} \mbox{  are sub-CDFs} \right\}\\
&=& \min \left( \sup \left\{ 0 \leq \epsilon \leq 1  :  \alpha \kappa_s F_s^{(a)}- \epsilon \kappa_b F_b^{(a)} \mbox{ is a sub-CDF}  \right\} \right.,\\ 
	& & \hspace{1in} \left. \sup \left\{ 0 \leq \epsilon \leq 1  :  \alpha (1-\kappa_s) F_s^{(d)} - \epsilon (1-\kappa_b) F_b^{(d)} \mbox{ is a sub-CDF}  \right\} \right )\\ 
&=& \min \left( \frac{\alpha \kappa_s}{	\kappa_b} \essinf \frac{f^{(a)}_s}{f^{(a)}_b}, \frac{\alpha (1-\kappa_s)}{(1-\kappa_b)} \inf_{x \in d(F^{(d)}_b)} \frac{J_{F^{(d)}_s}(x)}{J_{F^{(d)}_b}(x)}       \right)\\
&=&\min \left( \frac{(\alpha \kappa_s -\alpha^{(a)}_0 \kappa)}{	\kappa_b} ,  \frac{(\alpha (1-\kappa_s)-\alpha^{(d)}_0 ({1-\kappa}))}{(1-\kappa_b)}  \right),
\end{eqnarray*}
where $J_{G}$ and $d(J_{G})$ are defined before Lemma \ref{lemma:Identifiability for discrete} and we use the notion that $\frac{0}{0}  =1$. 
Hence, by \eqref{eq:AltDefAlpha} the result follows.
\end{proof}
Lemma \ref{lemma:Identifiabilty} is now a corollary of this result.

\section{Performance comparison of $\hat{\alpha}_0^{c_n},$ $\hat{\alpha}_0^{CV},$  and $\tilde{\alpha}_0$} \label{sec:perfor_cont}
In Figs.~7 and~8 we present further simulation experiments to investigate  the finite sample performance of $\hat{\alpha}_0^{c_n},$ $\hat{\alpha}_0^{CV},$  and $\tilde{\alpha}_0$ across different simulation scenarios. 
In each setting we also include the performance of the best performing competing estimators discussed in Section \ref{sec:PerfEst}.
\begin{figure} \label{fig:cmp_1}
\includegraphics[width=6in,height=5in]{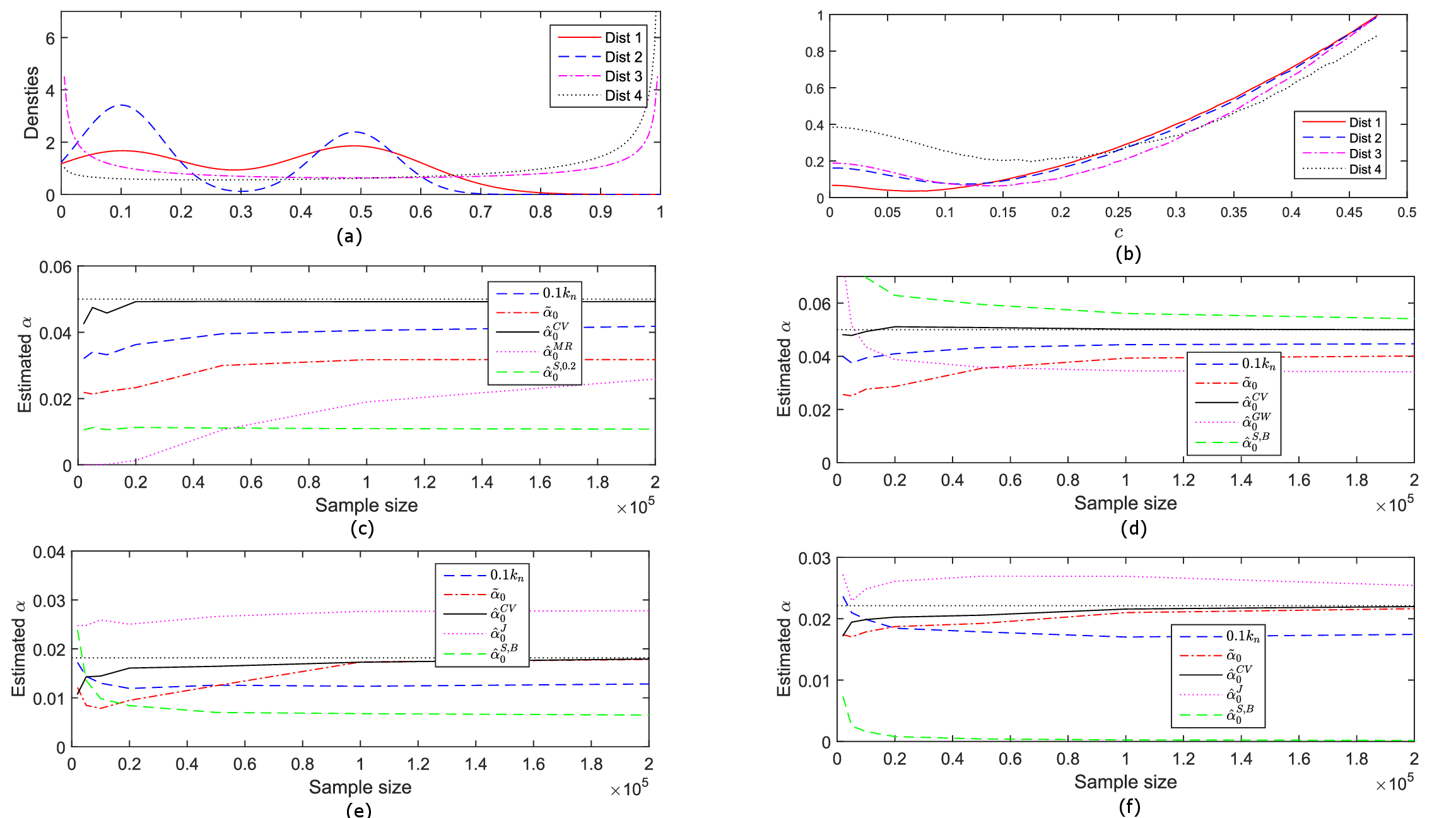}
\caption{Plots comparing the performance of $\hat{\alpha}_0^{c_n},$ $\hat{\alpha}_0^{CV},$  and $\tilde{\alpha}_0$; (a) density functions for four different choices of $F_s$;  (b) plot of the average of $\sum_{k=1}^K \int (\mathbb{F}_n^{k}- \hat{F}^{k})^2 d\mathbb{F}_n^{k}$ (see \eqref{eq:cv} of the main paper), as a function of $c$, computed over 500 independent samples of size 50000  corresponding to Dist 1-4; (c)-(f) gives the means of different competing estimators of $\alpha_0$, computed over 500 independent samples  for Dist 1-4 respectively (in each figure the horizontal dotted black line denotes the true $\alpha_0$).} 
\end{figure}

\begin{figure} \label{fig:cmp_2}
\includegraphics[width=6in,height=5in]{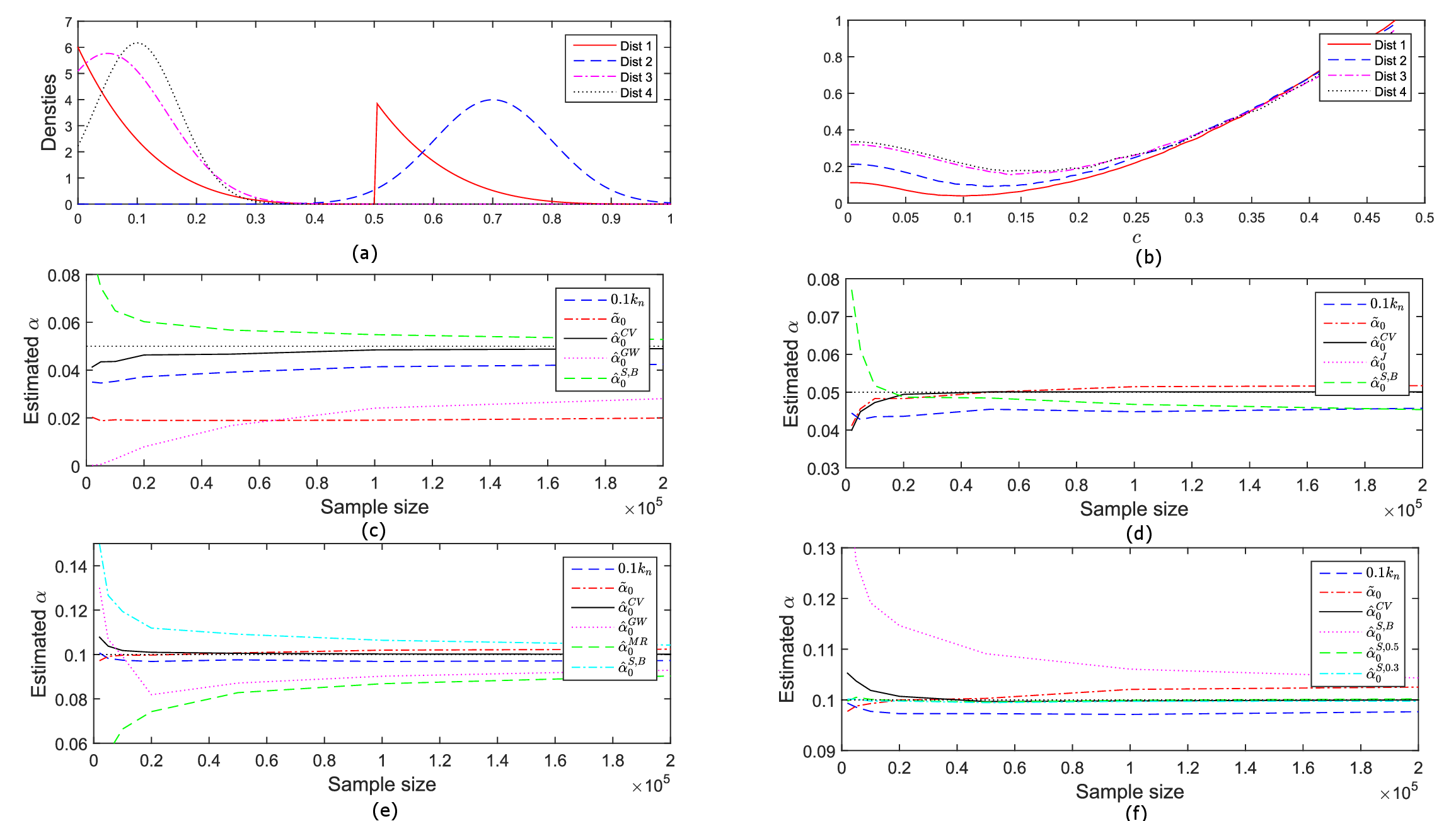}
\caption{Plots comparing the performance of $\hat{\alpha}_0^{c_n},$ $\hat{\alpha}_0^{CV},$  and $\tilde{\alpha}_0$; (a) density functions for four different choices of $F_s$;  (b) plot of the average of $\sum_{k=1}^K \int (\mathbb{F}_n^{k}- \hat{F}^{k})^2 d\mathbb{F}_n^{k}$ (see \eqref{eq:cv} of the main paper), as a function of $c$, computed over 500 independent samples of size 50000  corresponding to Dist 1-4; (c)-(f) gives the means of different competing estimators of $\alpha_0$, computed over 500 independent samples  for Dist 1-4 respectively (in each figure the horizontal dotted black line denotes the true $\alpha_0$).}
\end{figure}

\section{Detection of sparse heterogeneous mixtures} \label{sec:Donhojin}
In this section we draw a connection between the lower confidence bound developed in Section~\ref{sec:Lowrbnd} and the {\it Higher Criticism} method of  \cite{DonohoJin04} for detection of sparse heterogeneous mixtures. The detection of heterogeneity in sparse models arises in many applications, e.g., detection of a disease outbreak (see \cite{KulldorffEtAl05}) or early detection of  bioweapons use (see \cite{DonohoJin04}). Generally, in large scale multiple testing problems, when the non-null effect is sparse it is important to detect the existence of non-null effects (see \cite{CaiEtAl07}).

 \cite{DonohoJin04} consider $n$ i.i.d.~data from one of the two possible situations:
 \begin{eqnarray*}
&&H_0:X_i \sim F_b, \quad 1 \leq i \leq n ,\\
&&H_1^{(n)}:  X_i \sim F^{n}:= \alpha_n F_{n,s}+(1-\alpha_n)F_b,\quad  1 \leq i \leq n,
\end{eqnarray*}
where $\alpha_n \sim n^{-\lambda}$ and $ F_{n,s}$ is such that $d(F_{n,s},F_b)$ is bounded away from 0. In \cite{DonohoJin04} the main focus is on testing $H_0$, i.e., $\alpha_n = 0$. We can test this hypothesis by rejecting $H_0$ when $\hat{\alpha}_L>0$. The following lemma shows that indeed this yields a valid testing procedure for $\lambda < 1/2$.

\begin{thm}\label{lemma:HC}
If $\alpha_n \sim n^{-\lambda}$, for $\lambda < 1/2$, then $P_{H_0}(\mbox{Reject } H_0)  = \beta$  and $P_{H_1^{(n)}}(\hat \alpha_L >0) \rightarrow 1$  as $n \rightarrow \infty$.
\end{thm}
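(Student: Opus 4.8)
The plan is to reduce both assertions to a single threshold event for the Cram\'er--von Mises-type statistic $\sqrt n\, d_n(\mathbb{F}_n,F_b)$, exploiting the shape of the criterion function. First I would establish the equivalence
$$\{\hat\alpha_L>0\} = \{\sqrt n\, d_n(\mathbb{F}_n,F_b)>c_n\}.$$
Indeed, by Lemma~\ref{lemma:DecCritFn} the map $\gamma\mapsto \gamma d_n(\hat{F}_{s,n}^{\gamma},\check{F}_{s,n}^{\gamma})$ is non-increasing and convex (hence continuous) on $(0,1)$, and by Lemma~\ref{lemma2} the sublevel set $A_n$ equals the interval $[\hat\alpha_L,1]$. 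Using the convention \eqref{eq:DistNull} that the criterion function tends to $d_n(\mathbb{F}_n,F_b)$ as $\gamma\to 0+$, monotonicity gives $\hat\alpha_L=0$ if and only if $0\in A_n$, i.e.\ if and only if $\sqrt n\, d_n(\mathbb{F}_n,F_b)\le c_n$; taking complements yields the displayed identity. This is the same reduction used implicitly in the proofs of Theorem~\ref{thm:BoundAlpha} and Lemma~\ref{lemma:LimDistnull}.

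With this in hand, the size statement is immediate. Under $H_0$ we have $F=F_b$, hence $\alpha_0=0$ by Lemma~\ref{lemma:Non-Identifiability}, and Theorem~\ref{thm:BoundAlpha} gives the exact identity $P(\hat\alpha_L=0)=1-\beta$. Therefore $P_{H_0}(\mbox{Reject } H_0)=P_{H_0}(\hat\alpha_L>0)=1-(1-\beta)=\beta$, proving the first claim.

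For the power statement I would first record that $c_n$ stays bounded: since $c_n=H_n^{-1}(1-\beta)$ and $H_n$ is distribution-free, Theorem~\ref{thm:ConvDist} together with the convergence of $G_n$ to the limiting Cram\'er--von Mises law shows that $c_n$ converges to the finite $(1-\beta)$-quantile of that limit, so $c_n=O(1)$. It then suffices to show $\sqrt n\, d_n(\mathbb{F}_n,F_b)\stackrel{P}{\rightarrow}\infty$ under $H_1^{(n)}$. By the triangle inequality in $L_2(\mathbb{F}_n)$,
$$\sqrt n\, d_n(\mathbb{F}_n,F_b)\ge \sqrt n\, d_n(F^{n},F_b)-\sqrt n\, d_n(\mathbb{F}_n,F^{n}) = \sqrt n\,\alpha_n\, d_n(F_{n,s},F_b)-\sqrt n\, d_n(\mathbb{F}_n,F^{n}).$$
The subtracted term is $O_P(1)$ because $\sqrt n\, d_n(\mathbb{F}_n,F^{n})\le \sqrt n\sup_x|\mathbb{F}_n(x)-F^{n}(x)|=O_P(1)$ by the DKW inequality (uniformly over $F^{n}$). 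Since $\sqrt n\,\alpha_n=n^{1/2-\lambda}\to\infty$ as $\lambda<1/2$, the conclusion follows once $d_n(F_{n,s},F_b)$ is shown to be bounded away from $0$ with probability tending to one.

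The one genuinely technical point, which I expect to be the main obstacle, is transferring the population lower bound on $d(F_{n,s},F_b)$ to the \emph{empirical} distance $d_n(F_{n,s},F_b)$ for the triangular array $F_{n,s}$. Here I would use that the integrand $\phi_n:=(F_{n,s}-F_b)^2$ takes values in $[0,1]$, so that $d_n^2(F_{n,s},F_b)=n^{-1}\sum_i\phi_n(X_i)$ has mean $\int\phi_n\,dF^{n}=d^2(F_{n,s},F_b)$, with $d$ the $L_2(F^{n})$ distance, and variance at most $1/(4n)$; a Chebyshev bound then gives $d_n^2(F_{n,s},F_b)=d^2(F_{n,s},F_b)+o_P(1)$. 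Since $d(F_{n,s},F_b)$ is bounded away from $0$ by hypothesis, $d_n(F_{n,s},F_b)$ exceeds a fixed positive constant with probability tending to one. Combining this with the previous display shows $\sqrt n\, d_n(\mathbb{F}_n,F_b)\stackrel{P}{\rightarrow}\infty$, whence $P_{H_1^{(n)}}(\hat\alpha_L>0)=P_{H_1^{(n)}}(\sqrt n\, d_n(\mathbb{F}_n,F_b)>c_n)\to 1$, completing the proof.
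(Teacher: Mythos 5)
Your proposal is correct and takes essentially the same route as the paper: reduce both claims to the event $\{\sqrt{n}\,d_n(\mathbb{F}_n,F_b)>c_n\}$, get exact size from Theorem~\ref{thm:BoundAlpha}, and under $H_1^{(n)}$ split the statistic via the triangle inequality into an $O_P(1)$ empirical-process term plus the diverging term $\alpha_n\sqrt{n}\,d_n(F_{n,s},F_b)$, with $c_n$ bounded. Your write-up is in fact tighter than the paper's at the points where it is terse: you state the triangle inequality in the reverse direction actually needed to lower-bound $\sqrt{n}\,d_n(\mathbb{F}_n,F_b)$ (the paper's displayed chain reads as an implication out of the event $\{\hat\alpha_L>0\}$, which by itself does not give the power claim), you justify $c_n=O(1)$ via Theorem~\ref{thm:ConvDist}, and your Chebyshev argument supplies the step the paper merely asserts, namely that the hypothesis on the population distance $d(F_{n,s},F_b)$ transfers to the empirical distance $d_n(F_{n,s},F_b)$.
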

\begin{proof}
Note that $\{\hat \alpha_L>0\}$ is equivalent to $\{c_n \leq \sqrt{n} d_n(\mathbb{F}_n, F_b)\}$ which shows that
\begin{eqnarray*}
c_n & \leq & \sqrt{n} d_n(\mathbb{F}_n, (1-\alpha_n)F_b+ \alpha_n F_{n,s})+ \sqrt{n} d_n(\alpha_n F_b, \alpha_n F_{n,s})\\
& = & \sqrt{n} d_n(\mathbb{F}_n, F^{n}) + \alpha_n \sqrt{n} d_n( F_{n,s},F_b),
\end{eqnarray*}
where $c_n$ is chosen as in Theorem \ref{thm:BoundAlpha}. It is easy to see that  $\sqrt{n} d_n(\mathbb{F}_n, F^{n})$ is $O_P(1)$ and $\alpha_n \sqrt{n} d_n(F_{n,s},F_b) \rightarrow\infty$, for $\lambda <1/2$, which shows that $P_{H_1^{(n)}}(\hat{\alpha}_L >0) \rightarrow 1$.  It can be easily seen that $P_{H_0}(\hat{\alpha}_L >0)= P_{H_0}(\mbox{Reject } H_0)  = \beta$.
\end{proof}

\section{Proofs of theorems and lemmas in the main paper} \label{sec:proofs}

\subsection{Proof of Lemma \ref{lemma:Non-Identifiability}}
From the definition of $\alpha_0$, we have
\begin{eqnarray*}
\alpha_0 &=& \inf \left\{ 0 \leq \gamma \leq \alpha  : [F-(1-\gamma)F_b]/{\gamma} \mbox{  is a valid CDF} \right\} \notag \\
&=& \inf \left\{ 0 \leq \gamma \leq \alpha  : [\alpha F_s + (1-\alpha) F_b-(1-\gamma)F_b]/{\gamma} \mbox{  is a valid CDF} \right\} \notag \\
&=& \inf \left\{ 0 \leq \gamma \leq \alpha  : [\alpha F_s -(\alpha-\gamma)F_b]/{\gamma} \mbox{  is a valid CDF} \right\} \notag \\
&=& \alpha -\sup \left\{ 0 \leq \epsilon \leq \alpha  : \alpha F_s-\epsilon F_b \mbox{  is a sub-CDF} \right\}\notag \\
&=& \alpha -\sup \left\{ 0 \leq \epsilon \leq 1  : \alpha F_s-\epsilon F_b \mbox{  is a sub-CDF} \right\},
\end{eqnarray*}
where  the final equality follows from the fact that if $\epsilon> \alpha$, then $\alpha F_s- \epsilon F_b$ will not be a sub-CDF.  

To show that $\alpha_0 = 0$ if and only if $F= F_b$ let us define $\delta=\alpha- \epsilon$. 
Note that $\alpha_0=0$, if and only if 
\baa 
&&\sup \left\{ 0 \leq \epsilon \leq 1  : \alpha F_s-\epsilon F_b \mbox{  is a sub-CDF} \right\}=\alpha\\
&\Leftrightarrow& \inf \left\{ 0 \leq \delta \leq 1  : \alpha( F_s- F_b) +\delta F_b \mbox{  is a sub-CDF} \right\}=0.
\eaa
However, it is easy to see that the last equality is true if and only if $F_s-F_b \equiv 0.$ 

\subsection{Proof of Lemma \ref{lemma:Identifiability for discrete}}

When $d(F_b)\not \subset d(F_s)$, there exists a $x\in d(F_b) - d(F_s)$, i.e., there exists a $x$ which satisfies $F_b(x)-F_b(x-) >0$ and $F_s(x)-F_s(x-)=0$. Then for all $\epsilon>0$, $F_s(x-) - \epsilon F_b(x-) > F_s(x)- \epsilon F_b(x)$. This shows that $F_s - \epsilon F_b$ cannot be a sub-CDF, and hence by Lemma~\ref{lemma:Non-Identifiability} the model is identifiable. Now let us assume that $d(F_b)\subset d(F_s)$. 
\begin{eqnarray*}
\left\{ 0 \leq \epsilon \leq 1  : \alpha F_s-\epsilon F_b \mbox{  is a sub-CDF} \right\} &=&\left\{ 0 \leq \epsilon \leq 1 : \alpha J_{F_s}(x)- \epsilon J_{F_b}(x) \geq 0, ~\forall x\in d(J_{F_b})\right \}\\
&=& \left\{ 0 \leq \epsilon \leq 1 :  \frac{ J_{F_s}(x)} {J_{F_b}(x)} \geq \frac{\epsilon}{\alpha},~\forall x\in d(J_{F_b})\right \}\\
&=& \left\{ 0 \leq \epsilon \leq 1 :\inf_{x\in d({F_b})} \frac{ J_{F_s}(x)} {J_{F_b}(x)} \geq \frac{\epsilon}{\alpha}\right \}.
\end{eqnarray*}
Therefore, using \eqref{eq:AltDefAlpha}, we get the desired result.  

\subsection{Proof of Lemma \ref{lemma:Identifiability for absolutely continuous}}
From \eqref{eq:AltDefAlpha}, we have 
\begin{eqnarray*}
\alpha_0 &=&  \alpha -\sup \left\{ 0 \leq \epsilon \leq 1  : \alpha F_s-\epsilon F_b \mbox{  is a sub-CDF} \right\}\\
&=&  \alpha - \sup \left\{ 0 \leq \epsilon \leq 1  : \alpha f_s(x) - \epsilon f_b(x) \ge 0  \mbox{ almost every } x \right\}\\
&=& \alpha - \sup \left\{ 0 \leq \epsilon \leq 1  : \alpha \frac{f_s}{f_b}(x) \ge \epsilon    
\mbox{ almost every } x \right\}\\
&=& \alpha \left\{1- \essinf \frac{f_s}{f_b}\right\}.
\end{eqnarray*}

\subsection{Proof of Theorem \ref{thm:Distribution_free}}
Without loss of generality, we can assume that $F_b$ is the uniform distribution on $(0,1)$ and, for clarity, in the following we write $U$ instead of $F_b$. Let us define 
\baa
 A&:=&\left\{ \gamma \in (0,1]: \frac{F-(1-\gamma)U}{\gamma} \mbox{  is a valid CDF} \right\},\\
A^Y &: =&\left\{ \gamma \in (0,1]: \frac{G-(1-\gamma) U\circ \Psi}{\gamma} \mbox{  is a valid CDF} \right\}.
 \eaa
Since  $\alpha_0 = \inf  A,$ and $\alpha_0^Y = \inf A^Y$ for  the first part of the theorem it is enough to show that $A= A^Y.$ Let us first show that $A^Y \subset A$. Suppose $\eta \in A^Y$.  We first show that ${(F - (1-\eta) U)}/{\eta}$ is a non-decreasing function. For all $t_1 \leq t_2$, we have that 
\bee
\frac{G(t_1)-(1-\eta) U(\Psi(t_1))}{\eta} \leq \frac{G(t_2)-(1-\eta) U(\Psi(t_2))}{\eta}.
\eee
Let  $ y_1\leq y_2$. Then,
\bee
 \frac{G(\Psi^{-1}(y_1))-(1-\eta) U(\Psi(\Psi^{-1}(y_1)))}{\eta} \leq \frac{G(\Psi^{-1}(y_2))-(1-\eta) U(\Psi(\Psi^{-1}(y_2)))}{\eta}, 
\eee
since $y_1\leq y_2\Rightarrow \Psi^{-1}(y_1)\leq \Psi^{-1}(y_2)$.
However, as $\Psi$ is continuous, $\Psi(\Psi^{-1}(y))=y$ and $G(\Psi^{-1}(y))= \alpha F_s (y) + (1-\alpha) U(y) =F(y)$. Hence, we have 
$$ \frac{F(y_1)-(1-\eta)U(y_1)}{\eta} \leq \frac{F(y_2)-(1-\eta)U(y_2)}{\eta}. $$
As $F$ and $U$ are CDFs, it is easy to see that  $\lim_{x \to -\infty} {(F(x)-(1-\eta)U(x))}/{\eta}=0$, \\ $\lim_{x \to \infty} {(F(x)-(1-\eta)U(x))}/{\eta}=1$ and ${(F-(1-\eta)U)}/{\eta}$ is a right continuous function. Hence, for $\eta \in A^Y$, ${(F-(1-\eta)U)}/{\eta}$ is a CDF and thus, $\eta \in A$. We can similarly prove $A \subset A^Y$. Therefore, $A=A^Y$ and $\alpha_0=\alpha_0^Y$.
%
%
%

Note that $$\gamma   d_n(\hat{F}_{s,n}^{\gamma},\check{F}_{s,n}^\gamma) = \min_{W \in \mathcal{F}}\frac{1}{n} \sum_{i=1}^n \{W(X_i) - \hat{F}_{s,n}^{\gamma}(X_i)\}^2,$$ where  $\mathcal{F}$ is the class of all CDFs. For the second part of theorem it is enough to show  that 
$$\min_{W \in \mathcal{F}}\frac{1}{n} \sum_{i=1}^n \{W(X_i) - \hat{F}_{s,n}^{\gamma}(X_i)\}^2=\min_{B \in \mathcal{F}}\frac{1}{n} \sum_{i=1}^n \{B(Y_i) - \hat{G}_{s,n}^{\gamma}(Y_i)\}^2.$$
First note that
\baa
\mathbb{G}_n(y) &=& \frac{1}{n} \sum_{i=1}^n \mathbf{1}\{ \Psi^{-1} (X_i)\le y\}\\
 &=& \frac{1}{n} \sum_{i=1}^n \mathbf{1}\{  X_i\le\Psi(y)\} \\
 &=& \mathbb{F}_n(\Psi (y)).
 \eaa
 Thus, from the definition of $\hat{G}_{s,n}^{\gamma},$ we have 
  \baa
\hat{G}_{s,n}^{\gamma}(Y_i)&=&\frac{\mathbb{F}_n(\Psi (Y_i)) -(1-\gamma) U(\Psi(Y_i))}{\gamma}\\
&=& \frac{\mathbb{F}_n(X_i) -(1-\gamma)U(X_i)}{\gamma} = \hat{F}_{s,n}^{\gamma} (X_i).
\eaa
Therefore,
\baa
\frac{1}{n} \sum_{i=1}^n \{B(Y_i) - \hat{G}_{s,n}^{\gamma}(Y_i)\}^2&=&\frac{1}{n} \sum_{i=1}^n \{B(Y_i) - \hat{F}_{s,n}^{\gamma} (X_i)\}^2\\
&=& \frac{1}{n} \sum_{i=1}^n \{B(\Psi^{-1}(X_i)) - \hat{F}_{s,n}^{\gamma} (X_i)\}^2 \\
&=& \frac{1}{n} \sum_{i=1}^n \{W(X_i) - \hat{F}_{s,n}^{\gamma} (X_i)\}^2,
\eaa
where $W(x) := B(\Psi^{-1}(x) )$. $W$ is a valid CDF as $\Psi^{-1}$ is non-decreasing.

\subsection{Proof of Lemma \ref{lemma1}}
Letting ${F}_{s}^{\gamma} = (F - (1-\gamma)F_b)/\gamma$,
observe that $$\gamma d_n(\hat{F}_{s,n}^\gamma,F_s^\gamma)=d_n(F,\mathbb{F}_n).$$ Also note that ${F}_{s}^{\gamma}$ is a valid CDF for $\gamma \ge \alpha_0$.
As $\check{F}_{s,n}^\gamma $ is defined as the function that minimises the $L_2(\mathbb{F}_n)$ distance of $\hat{F}_{s,n}^\gamma$ over all CDFs,
$$\gamma d_n(\check{F}_{s,n}^\gamma,\hat{F}_{s,n}^\gamma) \leq \gamma d_n(\hat{F}_{s,n}^\gamma, F^\gamma_s) = d_n(F,\mathbb{F}_n).$$

To prove the second part of the lemma, notice that for $\gamma \geq \alpha_0$ the result follows from above and the fact that $d_n(F,\mathbb{F}_n) \stackrel{a.s.}{\rightarrow} 0$ as $n \rightarrow \infty$.
		
For $\gamma < \alpha_0$, $F_s^\gamma$ is not a valid CDF, by the definition of $\alpha_0$. Note that as $n \rightarrow \infty$, $\hat{F}^{\gamma}_{s,n} \stackrel{a.s.}{\rightarrow} F^{\gamma}_s$ point-wise. So, for large enough $n$, $\hat{F}^{\gamma}_{s,n}$ is not a valid CDF, whereas $\check{F}_{s,n}^\gamma$ is always a CDF. Thus, $d_n(\hat{F}_{s,n}^{\gamma},\check{F}_{s,n}^\gamma)$ converges to something positive.

\subsection{Proof of Lemma \ref{lemma2}}
Assume that $\gamma_1\leq\gamma_2$ and $\gamma_1,\gamma_2 \in A_n$. If $\gamma_3=\eta \gamma_1+(1-\eta)\gamma_2$, for $0\leq \eta \leq 1$, it is easy to observe from \eqref{eq:naive} that  
\be
\eta (\gamma_1 \hat{F}^{\gamma_1}_{s,n}) +(1-\eta)(\gamma_2\hat{F}^{\gamma_2}_{s,n}) =\gamma_3\hat{F}^{\gamma_3}_{s,n}. \nonumber
\ee
Note that $[\eta (\gamma_1 \check{F}^{\gamma_1}_{s,n}) +(1-\eta)(\gamma_2\check{F}^{\gamma_2}_{s,n})]/\gamma_3$ is a valid CDF, and thus from the definition of  $\check{F}^{\gamma_3}_{s,n}$, we have
\begin{eqnarray} \label{eq:ConvexCrit}
d_n(\hat{F}^{\gamma_3}_{s,n}, \check{F}^{\gamma_3}_{s,n} ) & \leq & d_n \left( \hat{F}^{\gamma_3}_{s,n}, [\eta (\gamma_1 \check{F}^{\gamma_1}_{s,n}) +(1-\eta)(\gamma_2\check{F}^{\gamma_2}_{s,n})]/{\gamma_3} \right) \nonumber \\
& = &  d_n \Big( \frac{\eta (\gamma_1 \hat{F}^{\gamma_1}_{s,n}) +(1-\eta)(\gamma_2\hat{F}^{\gamma_2}_{s,n})}{\gamma_3}, \frac{\eta (\gamma_1 \check{F}^{\gamma_1}_{s,n}) +(1-\eta)(\gamma_2\check{F}^{\gamma_2}_{s,n}) }{\gamma_3} \Big) \nonumber \\
& \leq &  \frac{\eta \gamma_1}{\gamma_3} d_n(\hat{F}^{\gamma_1}_{s,n}, \check{F}^{\gamma_1}_{s,n} ) + \frac{(1-\eta) \gamma_2}{\gamma_3} d_n(\hat{F}^{\gamma_2}_{s,n}, \check{F}^{\gamma_2}_{s,n} ) 
\end{eqnarray}
where the last step follows from the triangle inequality. But as $\gamma_1, \gamma_2 \in A_n$, the above inequality yields
\begin{equation*}
d_n(\hat{F}^{\gamma_3}_{s,n}, \check{F}^{\gamma_3}_{s,n} ) \leq \frac{\eta \gamma_1}{\gamma_3}\frac{c_n}{\sqrt{n}\gamma_1} + \frac{(1 - \eta) \gamma_2}{\gamma_3}\frac{c_n}{\sqrt{n}\gamma_2}  =\frac{c_n} {\sqrt{n}\gamma_3}.
\end{equation*}
Thus $\gamma_3 \in A_n$.
\subsection{Proof of Theorem \ref{thm:ConsAlpha}}
We need to show that $P(\left|  \hat{\alpha}_0^{c_n}- \alpha_0 \right| > \epsilon ) \rightarrow 0 $ for any $\epsilon >0$. Let us first show that $$P(  \hat{\alpha}_0^{c_n}- \alpha_0  < -\epsilon ) \rightarrow 0. $$ The statement is obviously true if $\alpha_0 \le \epsilon$. So let us assume that $\alpha_0 > \epsilon$. Suppose $\hat{\alpha}_0^{c_n} -\alpha_0 < -\epsilon$, i.e., $\hat{\alpha}_0^{c_n}  < \alpha_0 -\epsilon$. Then by the definition of $\hat{\alpha}_0^{c_n}$ and the convexity of $A_n$, we have $(\alpha_0-\epsilon) \in A_n$ (as $A_n$ is a convex set in $[0,1]$ with $1 \in A_n$ and $\hat{\alpha}_0^{c_n}  \in A_n$), and thus
\be d_n(\hat{F}_{s,n}^{\alpha_0-\epsilon},\check{F}_{s,n}^{\alpha_0-\epsilon}) \leq \frac{c_n }{\sqrt{n}(\alpha_0 - \epsilon)}.
\label{proof1}
\ee But by \eqref{eq:claim3} the left-hand side of \eqref{proof1} goes to a non-zero constant in probability. Hence, if $c_n/\sqrt{n}  \rightarrow 0$, $$ P(  \hat{\alpha}_0^{c_n}- \alpha_0  < -\epsilon ) \le P \left( d_n(\hat{F}_{s,n}^{\alpha_0-\epsilon},\check{F}_{s,n}^{\alpha_0-\epsilon} ) \leq \frac{c_n}{\sqrt{n} (\alpha_0-\epsilon)} \right) \rightarrow 0.$$ This completes the proof of the first part of the claim.

Now suppose that $\hat{\alpha}_0^{c_n}- \alpha_0 >\epsilon.$ Then,
\begin{eqnarray*}
\hat{\alpha}_0^{c_n}- \alpha_0  >\epsilon  &\Rightarrow& \sqrt{n}d_n(\hat{F}_{s,n}^{\alpha_0+\epsilon },\check{F}_{s,n}^{\alpha_0 +\epsilon}) \geq  \frac{c_n}{\alpha_0+\epsilon}  \\
&\Rightarrow& \sqrt{n} d_n( \mathbb{F}_n,F) \geq c_n.
\end{eqnarray*}
The first implication follows from the definition of $\hat{\alpha}_0^{c_n}$, while the second implication is true by Lemma \ref{lemma1}. The right-hand side of the last inequality is (asymptotically similar to) the Cram\'{e}r--von Mises statistic for which the asymptotic distribution is well-known and thus if $c_n \rightarrow \infty$ the result follows.
\subsection{Proof of Lemma \ref{lemma:LimDistnull}}
As $\alpha_0=0$, 
\be
P(\hat{\alpha}_0^{c_n} =0) = 1- P(\hat{\alpha}_0^{c_n} > 0)= 1- P(\sqrt{n} d_n(\F_n,F) > c_n ) \rightarrow 1,
\ee
since $\sqrt{n} d_n(\mathbb{F}_n,F) = O_P(1)$ by Theorem \ref{thm:ConvDist}.

\subsection{Proof of Theorem~\ref{lemma:RateLeftSide}}\label{sec:proof_lemma:RateLeftSide}
As the proof of this result is slightly involved we break it into a number of lemmas (whose proofs are provided later in this sub-section) and give the main arguments below.

We need to show that given any $\epsilon>0$, we can find an $M > 0$ and $n_0 \in \mathbb{N}$ (depending on $\epsilon$) for which $\sup_{n >n_0} P(r_n|\hat{\alpha}_0^{c_n}-\alpha_0| > M) \le \epsilon.$
\begin{lemma} \label{lemma:RateRightSide}
If $c_n\rightarrow \infty$, then for any $M >0$, $\sup_{n >n_0} P\left(r_n(\hat{\alpha}_0^{c_n}-\alpha_0)>M\right) < \epsilon$, for large enough $n_0 \in \mathbb{N}$.
\end{lemma}

Finding an $r_n$ such that $P\left(r_n(\hat{\alpha}_0^{c_n}-\alpha_0)<- M\right) < \epsilon$ for large enough $n$ is more complicated.  We start with some notation. Let $\mathcal{F}$ be the class of all CDFs and $\mathbb{H}$ be the Hilbert space $L_2(F) := \{ f: \R \to \R | \int f^2 dF < \infty\}$.  For a closed convex subset $\mathcal{K}$ of $\mathbb{H}$ and $ h \in \mathbb{H}$, we define the projection of $h$ onto $\mathcal{K}$ as 
\be \label{eq:DefProj}
\Pi(h|\mathcal{K}) := \argmin_{f \in \mathcal{K}} d(f,h),   
\ee
where $d$ stands for the $L_2(F)$ distance, i.e., if $g,h \in \mathbb{H}$, then $d^2(g,h) = \int (g - h)^2 dF. $ We define the tangent cone of $\mathcal{F}$ at $f_0 \in \mathcal{F}$, as 
\be \label{eq:DefTancone}
T_\mathcal{F}(f_0) := { \left\{ \lambda(f - f_0): \lambda \ge 0, f \in \mathcal{F}\right\}}.
\ee
For any $H \in \mathcal{F}$ and $\gamma >0$, let us define
\begin{equation*}
\hat H^{\gamma} :=  \frac{H - (1 - \gamma) F_b}{\gamma}, \quad 	\check{H}_n^{\gamma}  :=  \argmin_{G \in \mathcal{F}} \gamma d_n(\hat H^{\gamma}, G),\quad \text{and}\quad 	\bar{H}_n^{\gamma}  :=  \argmin_{G \in \mathcal{F}} \gamma d(\hat H^{\gamma}, G).
\end{equation*} 
For $H = \mathbb{F}_n$ and $\gamma = \alpha_0$ we define the  three quantities above and call them $\hat F_{s,n}^{\alpha_0}$, $\check{F}_{s,n}^{\alpha_0}$, and $\bar{F}_{s,n}^{\alpha_0}$ respectively. Note that  \begin{equation}\label{eq:Step1}
  P \left(r_n(\hat{\alpha}_0^{c_n}-\alpha_0)<- M \right) = P(\sqrt{n} \gamma_n\ d_n(\hat F_{s,n}^{\gamma_n}, \check{F}_{s,n}^{\gamma_n}) <c_n),
\end{equation} 
where $\gamma_n= \alpha_0 - M/r_n$. To study the limiting behavior of $d_n(\hat F_{s,n}^{\gamma_n}, \check{F}_{s,n}^{\gamma_n})$ we break it as the sum of $ d_n(\hat F_{s,n}^{\gamma_n}, \check{F}_{s,n}^{\gamma_n}) - d(\hat F_{s,n}^{\gamma_n}, \bar F_{s,n}^{\gamma_n})$ and $d(\hat F_{s,n}^{\gamma_n}, \bar F_{s,n}^{\gamma_n})$. The following two lemmas (proved in Sections~\ref{sec:proof_emma:Diff1} and \ref{sec:proof_lemma:Diffpart2} respectivley) give the asymptotic behavior of the two terms. The proof of Lemma~\ref{lemma:Diffpart2} uses the functional delta method (cf.~Theorem 20.8 of \cite{vanderVaart98}) for the projection operator; see Theorem 1 of  \cite{Fils-VilletardEtAl08}. 

\begin{lemma}\label{lemma:Diff1}
If $\sqrt{n}/r_n^2 \rightarrow 0,$ then $
U_n := \sqrt{n} \gamma_n d_n(\hat F_{s,n}^{\gamma_n}, \check{F}_{s,n}^{\gamma_n}) - \sqrt{n} \gamma_n d(\hat F_{s,n}^{\gamma_n}, \bar F_{s,n}^{\gamma_n}) \stackrel{P}{\rightarrow} 0.$
\end{lemma}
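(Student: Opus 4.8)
The plan is to sandwich $U_n$ between two quantities of the same analytic form and then reduce everything to a single empirical-process estimate. Writing $\Lambda_n(G):=\sqrt n\,\gamma_n\{d_n(\hat F_{s,n}^{\gamma_n},G)-d(\hat F_{s,n}^{\gamma_n},G)\}$ for a CDF $G$, the optimality of the two projections gives $d_n(\hat F_{s,n}^{\gamma_n},\check F_{s,n}^{\gamma_n})\le d_n(\hat F_{s,n}^{\gamma_n},\bar F_{s,n}^{\gamma_n})$ (as $\check F_{s,n}^{\gamma_n}$ minimises $d_n$) and $d(\hat F_{s,n}^{\gamma_n},\bar F_{s,n}^{\gamma_n})\le d(\hat F_{s,n}^{\gamma_n},\check F_{s,n}^{\gamma_n})$ (as $\bar F_{s,n}^{\gamma_n}$ minimises $d$), whence $\Lambda_n(\check F_{s,n}^{\gamma_n})\le U_n\le \Lambda_n(\bar F_{s,n}^{\gamma_n})$. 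So it suffices to prove $\Lambda_n(G)\stackrel{P}{\rightarrow}0$ for $G\in\{\check F_{s,n}^{\gamma_n},\bar F_{s,n}^{\gamma_n}\}$. For any CDF $G$, put $f:=\gamma_n(\hat F_{s,n}^{\gamma_n}-G)$, so that $\|f\|_\infty\le 2$, $f$ is of bounded variation, and $f\to 0$ at $\pm\infty$. Using $d_n-d=(d_n^2-d^2)/(d_n+d)$ together with $\gamma_n^2\{d_n^2-d^2\}=\int f^2\,d(\mathbb{F}_n-F)$, I would write
\[
\Lambda_n(G)=\frac{\sqrt n\int f^2\,d(\mathbb{F}_n-F)}{\gamma_n\{d_n(\hat F_{s,n}^{\gamma_n},G)+d(\hat F_{s,n}^{\gamma_n},G)\}},
\]
and bound the denominator below by $\gamma_n\,d(\hat F_{s,n}^{\gamma_n},G)=\|f\|_{L_2(F)}$. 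Hence $|\Lambda_n(G)|\le \big|\sqrt n\int f^2\,d(\mathbb{F}_n-F)\big|/\|f\|_{L_2(F)}$, and the lemma reduces to showing $\sqrt n\int f^2\,d(\mathbb{F}_n-F)=o_P(\|f\|_{L_2(F)})$.

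The heart of the argument is this last estimate, and here I would exploit the \emph{variance}, not merely the size, of the integrand. Since $\alpha_0$ is the infimum in \eqref{eq:alpha.est}, $F_0:=\{F-(1-\alpha_0)F_b\}/\alpha_0$ is a genuine CDF, and the identity $\hat F_{s,n}^{\gamma_n}=F_0+\tfrac{\alpha_0-\gamma_n}{\gamma_n}(F_0-F_b)+\tfrac1{\gamma_n}(\mathbb{F}_n-F)$ shows $\hat F_{s,n}^{\gamma_n}\to F_0$ uniformly, as $\gamma_n\to\alpha_0$ and $\|\mathbb{F}_n-F\|_\infty\to0$ a.s. Because $F_0\in\mathcal F$ and projections are continuous, $G$ also converges to $F_0$ uniformly, so $\|f\|_\infty=\gamma_n\|\hat F_{s,n}^{\gamma_n}-G\|_\infty\to0$. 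I would then split $f=\bar f_n+\zeta_n$, where $\bar f_n$ is the deterministic proxy obtained by replacing $\mathbb{F}_n$ by $F$ (and $G$ by its population projection) and $\zeta_n$ collects the stochastic fluctuations, with $\|\zeta_n\|_{L_2(F)}=O_P(n^{-1/2})$ by the non-expansiveness of the $L_2(F)$ projection and the rate of $\mathbb{F}_n-F$. Expanding $f^2=\bar f_n^2+2\bar f_n\zeta_n+\zeta_n^2$, the leading term is handled by Chebyshev's inequality: $\bar f_n$ being deterministic, $\mathrm{Var}_F(\bar f_n^2)\le \|\bar f_n\|_\infty^2\,\|\bar f_n\|_{L_2(F)}^2$, so that $\sqrt n\int\bar f_n^2\,d(\mathbb{F}_n-F)=O_P(\|\bar f_n\|_\infty\|\bar f_n\|_{L_2(F)})$, and dividing by $\|f\|_{L_2(F)}$ (comparable to $\|\bar f_n\|_{L_2(F)}$) leaves the crucial factor $O_P(\|\bar f_n\|_\infty)=o_P(1)$.

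The remaining cross and quadratic terms are lower order, and this is where the hypotheses enter. Integration by parts (legitimate since $f^2$ vanishes at $\pm\infty$) turns $\int\bar f_n(\mathbb{F}_n-F)\,d(\mathbb{F}_n-F)$ into $-\tfrac12\int(\mathbb{F}_n-F)^2\,d\bar f_n=O_P(n^{-1})$; after the factor $\sqrt n$ this is $O_P(n^{-1/2})$, which is $o_P(\|f\|_{L_2(F)})$ because $c_n\to\infty$ keeps the residual $\|f\|_{L_2(F)}$ from decaying as fast as $n^{-1/2}$, while the condition $\sqrt n/r_n^2\to0$ is what controls the interaction of the stochastic projection fluctuation $\zeta_n$ with the deterministic residual $\bar f_n$. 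The main obstacle is precisely securing the \emph{variance-scale} bound $\sqrt n\int f^2\,d(\mathbb{F}_n-F)=O_P(\|f\|_\infty\|f\|_{L_2(F)})$ for the \emph{data-dependent} integrand $f^2$: a naive uniform bound over the (Donsker) class of squared bounded-variation functions yields only $O_P(1)$ via a total-variation/entropy estimate, which is hopeless once $\|f\|_{L_2(F)}\to0$. The improvement rests entirely on replacing $f$ by its deterministic proxy so that Chebyshev supplies the extra factor $\|f\|_\infty\to0$, and the two delicate technical points are verifying the uniform convergence of $\check F_{s,n}^{\gamma_n},\bar F_{s,n}^{\gamma_n}$ to $F_0$ (so that $\|f\|_\infty\to0$) and the stability estimate $\|\zeta_n\|_{L_2(F)}=O_P(n^{-1/2})$, together with checking that all fluctuation remainders are negligible relative to $\|f\|_{L_2(F)}$.
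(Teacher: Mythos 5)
Your opening reduction is correct, and it is in fact sharper than the paper's: the two-sided optimality sandwich $\Lambda_n(\check F_{s,n}^{\gamma_n})\le U_n\le \Lambda_n(\bar F_{s,n}^{\gamma_n})$ is the same trick the paper uses at the level of $d^2$ (its bound on $W_n:=n\gamma_n^2\{d_n^2(\hat F_{s,n}^{\gamma_n},\check F_{s,n}^{\gamma_n})-d^2(\hat F_{s,n}^{\gamma_n},\bar F_{s,n}^{\gamma_n})\}$, combined with $U_n^2\le|W_n|$), and your normalisation by $\gamma_n d(\hat F_{s,n}^{\gamma_n},G)\gtrsim M/r_n$ means you only need $\nu_n(f^2):=\sqrt n\int f^2\,d(\mathbb F_n-F)=o_P(M/r_n)$, a weaker requirement than the $o_P(1)$ bound on $\nu_n(\sqrt n f^2)$ that the paper establishes. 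The problem is that your proof of this estimate does not go through. The most serious gap is the stability claim $\|\zeta_n\|_{L_2(F)}=O_P(n^{-1/2})$ in the case $G=\check F_{s,n}^{\gamma_n}$: that estimator is the projection of $\hat F_{s,n}^{\gamma_n}$ in $L_2(\mathbb F_n)$, not in $L_2(F)$, and non-expansiveness only compares two projections under the \emph{same} inner product. It says nothing about the distance between an $L_2(\mathbb F_n)$-projection and the $L_2(F)$-projection of a deterministic proxy; controlling precisely that empirical-versus-population discrepancy is what the lemma asserts, so invoking it as an ingredient is circular. The paper never needs such a comparison: it traps $\gamma_n\check F_{s,n}^{\gamma_n}$ between two monotone envelopes at sup-distance $O(M/r_n)+\|\mathbb F_n-F\|_\infty$ from $\gamma_n F_s^{\alpha_0}$ using the order-preservation property of isotonic projections (Theorem 1.3.4 of \cite{RWD88}), and then applies Pollard's maximal inequality over a manageable class whose (constant) envelope is of order $c^2/\sqrt n+\sqrt n\,M^2/r_n^2$; the hypothesis $\sqrt n/r_n^2\to0$ enters exactly there, to make the envelope vanish.

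The second gap is that, even in the $\bar F_{s,n}^{\gamma_n}$ case where non-expansiveness legitimately gives $\|\zeta_n\|_{L_2(F)}=O_P(n^{-1/2})$, your remainder terms are not ``lower order'' by the tools you cite. Writing $\zeta_n=(\mathbb F_n-F)-\gamma_n\rho_n$ with $\rho_n:=\bar F_{s,n}^{\gamma_n}-\Pi\left(\left.\hat F_s^{\gamma_n}\right|\mathcal F\right)$ the projection fluctuation, the terms $\sqrt n\int \bar f_n\rho_n\,d(\mathbb F_n-F)$ and $\sqrt n\gamma_n^2\int\rho_n^2\,d(\mathbb F_n-F)$ involve a random integrand correlated with $\mathbb F_n$: Chebyshev does not apply (the integrand is not deterministic); integration by parts requires sup-norm or total-variation control of $\rho_n$ that you do not have (only its $L_2(F)$ norm is small); and crude Cauchy--Schwarz gives $\sqrt n\,\|\bar f_n\|_\infty\|\rho_n\|_{L_2(F)}=O_P(M/r_n)$ for the cross term, which is the \emph{same} order as your denominator $\|f\|_{L_2(F)}$, hence not negligible, while the quadratic term involves the empirical norm $\|\rho_n\|_{L_2(\mathbb F_n)}^2$ whose control is again a uniform empirical-process statement. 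So the entire difficulty of the lemma is hidden inside the sentence ``the remaining cross and quadratic terms are lower order,'' and the hypothesis $\sqrt n/r_n^2\to0$, which you say ``controls the interaction,'' is never actually used anywhere in your argument. If you close these holes by bounding $\sup|\nu_n(\cdot)|$ over a suitable class of squared differences with a vanishing envelope --- i.e., by importing the paper's maximal-inequality machinery --- then your normalised reduction does yield the lemma (indeed under weaker conditions on $c_n$), but at that point the proof is the paper's; Chebyshev plus integration by parts alone cannot carry it.
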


\begin{lemma} \label{lemma:Diffpart2}
If $c_n \rightarrow \infty$, then
\bee  \label{eq:Diffpart2}
\frac{\sqrt{n}\gamma_n}{c_n M} d(\hat F_{s,n}^{\gamma_n}, \bar F_{s,n}^{\gamma_n}) \stackrel{P}{\rightarrow} \left\{\int V^2 dF  \right\}^{1/2} > 0 
\eee
where $$V := (F_s^{\alpha_0}-F_b) -\Pi \left( \left. F_s^{\alpha_0}-F_b \right\vert T_\mathcal{F}(F_s^{\alpha_0})\right) \ne 0$$ and
\begin{equation}\label{eq:F_s_alp_0}
 F_s^{\alpha_0} := \frac{F-(1-\alpha_0)F_b}{\alpha_0}.
\end{equation}
\end{lemma} 
Using \eqref{eq:Step1}, and the notation introduced in the above two lemmas we see that 
\begin{eqnarray}
P \left(r_n(\hat{\alpha}_0^{c_n}-\alpha_0)<-M  \right) &=& P\left( \frac{1}{c_n} U_n+ \frac{\sqrt{n}\gamma_n}{c_n} d(\hat F_{s,n}^{\gamma_n}, \bar F_{s,n}^{\gamma_n}) <1\right). \label{eq:ratebound2}
\end{eqnarray}
However, $U_n \stackrel{P}{\rightarrow} 0$ (by Lemma \ref{lemma:Diff1}) and $\frac{\sqrt{n}\gamma_n}{c_n M} d(\hat F_{s,n}^{\gamma_n}, \bar F_{s,n}^{\gamma_n}) \stackrel{P}{\rightarrow} \int V^2 dF$ (by Lemma \ref{lemma:Diffpart2}).  The result now follows from \eqref{eq:ratebound2}, by taking a large enough $M.$

\subsubsection{Proof of Lemma \ref{lemma:RateRightSide}} \label{sec:proof_lemma:RateRightSide}
Note that
\begin{eqnarray*}
P ( r_n(\hat{\alpha}_0^{c_n}-\alpha_0)>M) \leq P\left(\hat{\alpha}_0^{c_n}>\alpha_0 \right)&=&P \left( \sqrt{n}\alpha_0  d_n(\hat{F}_{s,n}^{\alpha_0},\check{F}_{s,n}^{\alpha_0})>c_n  \right)\\
&\leq& P\left(\sqrt{n} \alpha_0 d_n(\hat{F}_{s,n}^{\alpha_0}, F_s^{\alpha_0}) >c_n\right)\\
&=&P\left( \sqrt{n} d_n(\mathbb{F}_n,F) >c_n \right) \rightarrow 0,
\end{eqnarray*}
as $c_n\rightarrow \infty$, since $\sqrt{n} d_n(\mathbb{F}_n,F) = O_P(1)$. Therefore, the result holds for sufficiently large $n$.

\subsubsection{Proof of Lemma \ref{lemma:Diff1}}\label{sec:proof_emma:Diff1} 
It is enough to show that
\begin{eqnarray}
W_n := n \gamma_n^2 d_n^2(\hat F_{s,n}^{\gamma_n}, \check{F}_{s,n}^{\gamma_n}) - n \gamma_n^2 d^2(\hat F_{s,n}^{\gamma_n}, \bar F_{s,n}^{\gamma_n}) & \stackrel{P}{\rightarrow}& 0, \label{eq:Diff2}
\end{eqnarray}
since $U_n^2 \le |W_n|$. Note that 
\begin{eqnarray*}
\check{F}_{s,n}^{\gamma_n} & = & \argmin_{G \in \mathcal{F}} d_n(\mathbb{F}_n, \gamma_n G + (1 - \gamma_n) F_b),\\
\bar{F}_{s,n}^{\gamma_n} & = & \argmin_{G \in \mathcal{F}} d(\mathbb{F}_n, \gamma_n G + (1 - \gamma_n) F_b).
\end{eqnarray*}
For each positive integer $n$ and $c >0$, we introduce the following classes of functions:
\begin{eqnarray*}
\mathcal{G}_c(n) = \left\{ \sqrt{n}(G - (1 - \gamma_n) F_b - \gamma_n \check G_{n}^{\gamma_n})^2 : G \in \mathcal{F}, \  \|G - F\| <\frac{c}{\sqrt{n}}\right\}, \\
\mathcal{H}_c(n) = \left\{ \sqrt{n}(H - (1 - \gamma_n) F_b - \gamma_n \bar H_{n}^{\gamma_n})^2 : H \in \mathcal{F}, \  \|H - F\| <\frac{c}{\sqrt{n}}\right\}.
\end{eqnarray*}
Let us also define $$D_n :=\sup_{t \in \mathbb{R}} \sqrt{n}|\F_n(t)-F(t)| = \|\F_n-F\|.$$ From the definition of the minimisers $\check{F}_{s,n}^{\gamma_n}$ and $\bar{F}_{s,n}^{\gamma_n}$, we see that
\begin{eqnarray}\label{eq:BoundMin}
\gamma_n^2 \ |d_n^2(\hat F_{s,n}^{\gamma_n}, \check{F}_{s,n}^{\gamma_n}) -  d^2(\hat F_{s,n}^{\gamma_n}, \bar F_{s,n}^{\gamma_n})| \le \max \left \{ |(d_n^2 - d^2)(\mathbb{F}_n, \gamma_n \check{F}_{s,n}^{\gamma_n} + (1 - \gamma_n) F_b) | \right., \nonumber \\
\left.|(d_n^2 - d^2)(\mathbb{F}_n, \gamma_n \bar{F}_{s,n}^{\gamma_n} + (1 - \gamma_n) F_b) | \right\}.\;\;\;\;\;\;\;\;\;
\end{eqnarray}
Observe that
\begin{eqnarray*}
n \gamma_n^2 \ [(d_n^2 - d^2)(\mathbb{F}_n, \gamma_n \check{F}_{s,n}^{\gamma_n} + (1 - \gamma_n) F_b)] = \sqrt{n} (\mathbb{P}_n - P)[g_n] = \nu_n(g_n),
\end{eqnarray*}
where $g_n := \sqrt{n} \{ \F_n - \gamma_n \check{F}_{s,n}^{\gamma_n} - (1 - \gamma_n) F_b\}^2$, $\mathbb{P}_n$ denotes the empirical measure of the data, and $\nu_n := \sqrt{n} (\mathbb{P}_n - P)$ denotes the usual empirical process. Similarly, $$n \gamma_n^2 \ [(d_n^2 - d^2)(\mathbb{F}_n, \gamma_n \bar{F}_{s,n}^{\gamma_n} + (1 - \gamma_n) F_b)] = \sqrt{n} (\mathbb{P}_n - P)[h_n] = \nu_n(h_n),$$
where $h_n :=\sqrt{n} \{ \F_n - \gamma_n \bar{F}_{s,n}^{\gamma_n} - (1 - \gamma_n) F_b\}^2$. Thus, combining \eqref{eq:Diff2}, \eqref{eq:BoundMin} and the above two displays, we get, for any $\delta >0$,
\begin{eqnarray}
 P(|W_n| > \delta) \le P \left( |\nu_n(g_n)| > \delta \right) +  P \left( |\nu_n(h_n)| > \delta \right). \label{eq:EPbound}
\end{eqnarray}
The first term in the right hand side of \eqref{eq:EPbound} can be bounded above as
\begin{eqnarray}
P (|\nu_n({g}_n)|>\delta) & = & P (|\nu_n({g}_n)|>\delta, {g}_n \in \mathcal{G}_c(n))+ P(|\nu_n({g}_n)|>\delta, {g}_n \notin \mathcal{G}_c(n)) \nonumber \\
&\leq & P(|\nu_n({g}_n)|>\delta, {g}_n \in \mathcal{G}_c(n)) + P({g}_n \notin \mathcal{G}_c(n)) \nonumber \\
&\leq & P \left(\sup_{g \in \mathcal{G}_c(n)}|\nu_n({g})|>\delta \right) + P({g}_n \notin \mathcal{G}_c(n)) \nonumber \\
& \leq & \frac{1}{\delta}E \left( \sup_{g \in \mathcal{G}_c(n)}|\nu_n({g})| \right)+ P({g}_n \notin \mathcal{G}_c(n)) \nonumber \\
&\leq & J_{[\;]} \frac{P[G^2_{c,n}]}{\delta} + P({g}_n \notin \mathcal{G}_c(n)), \label{eq:Boundv_n0}
\end{eqnarray}
where $G_{c,n} :=6 c^2/\sqrt{n} + 16\sqrt{n}   \frac{M^2}{r_n^2 } \|F_s^{\alpha_0} -F_b\|^2$ is an envelope for $\mathcal{G}_c(n)$ and $J_{[\;]}$ is a constant. Note that to derive the last inequality, we have used the maximal inequality in Corollary (4.3) of \cite{Pollard89}; the class $\mathcal{G}_c(n)$ is ``manageable'' in the sense of \cite{Pollard89} (as a consequence of equation (2.5) of \cite{VandeGeer00}).

To see that $G_{c,n}$ is an envelope for $\mathcal{G}_c(n),$ observe that for any $G \in \mathcal{F}$, 
\begin{eqnarray*}
G - (1 - \gamma_n) F_b &=& G - F + \frac{M}{r_n} (F_s^{\alpha_0} -F_b) +\gamma_n F_s^{\alpha_0}.
\end{eqnarray*}
Hence,
 $$ F_s^{\alpha_0}- \frac{M}{r_n \gamma_n} \|F_s^{\alpha_0} -F_b\|  -\frac{\|G - F\|}{\gamma_n} \le  \frac{G - (1 - \gamma_n) F_b}{\gamma_n} \le F_s^{\alpha_0}+ \frac{M}{r_n \gamma_n}\|F_s^{\alpha_0} -F_b\|  +\frac{\|G - F\|}{\gamma_n}.$$
As the two bounds are monotone, from the  properties of isotonic estimators (see e.g., Theorem 1.3.4 of \cite{RWD88}), we can always find a version of $\check{G}_s^{\gamma_n}$ such that 
$$ F_s^{\alpha_0}- \frac{M}{r_n \gamma_n} \|F_s^{\alpha_0} -F_b\|  -\frac{\|G - F\|}{\gamma_n} \le  \check{G}_s^{\gamma_n} \le F_s^{\alpha_0}+ \frac{M}{r_n \gamma_n}\|F_s^{\alpha_0} -F_b\|  +\frac{\|G - F\|}{\gamma_n}.$$
Therefore,
\begin{equation}\label{eq:MontoneBnd}
-2 \frac{M}{r_n} \|F_s^{\alpha_0} -F_b\|  -\|G - F\| \le  \gamma_n \check{G}_s^{\gamma_n}- \gamma_n F_s^{\alpha_0} -\frac{M}{r_n }(F_s^{\alpha_0} -F_b)  \le  2\frac{M}{r_n }\|F_s^{\alpha_0} -F_b\|  +\|G - F\|.
\end{equation} 
Thus, for $\sqrt{n}(G - (1 - \gamma_n) F_b - \gamma_n \check G_{s}^{\gamma_n})^2  \in \mathcal{G}_c(n)$, 
\begin{eqnarray*}
(G - (1 - \gamma_n) F_b - \gamma_n \check G_{s}^{\gamma_n})^2 &=& \left[ (G - F)  + \left( \gamma_n \check G_{s}^{\gamma_n} -\gamma_n F_s^{\alpha_0} -  \frac{M}{r_n} (F_b-F_s^{\alpha_0}) \right) \right]^2 \\
& \le & 2  (G - F)^2 + 2 \left( \gamma_n \check G_{s}^{\gamma_n} -\gamma_n F_s^{\alpha_0} -  \frac{M}{r_n} (F_b-F_s^{\alpha_0})\right) ^2 \\
& \le & 2 \| G - F \|^2 + 2 \left( 2\frac{M}{r_n }\| F_s^{\alpha_0} -F_b\|  +\|G - F\| \right) ^2 \\
& \le & 6  \| G - F \|^2 + 16 \frac{M^2}{r_n^2 } \|F_s^{\alpha_0} -F_b\|^2 \\
& \le & 6 c^2 + 16 \frac{M^2}{r_n^2 } \|F_s^{\alpha_0} -F_b\|^2 = \frac{G_{c,n}}{\sqrt{n}},
\end{eqnarray*}
where the second inequality follows from \eqref{eq:MontoneBnd}.
From the definition of ${g}_n$ and $D_n^2$, we have  $|{g}_n(t)| \leq \frac{6}{\sqrt{n}} D_n^2 +16\sqrt{n}   \frac{M^2}{r_n^2 } \|F_s^{\alpha_0} -F_b\|^2$, for all $t \in \mathbb{R}$. As $D_n  = O_P(1)$, for any given $\epsilon>0$, there exists $c>0$ (depending on $\epsilon$) such that
\begin{eqnarray}
\label{eq:BoundG_c0}
P({g}_n \notin \mathcal{G}_c(n)) &=& P\left( \|\F_n -F \|\geq \frac{c}{\sqrt{n}} \right) = P(D_n\geq c) \leq \epsilon,
\end{eqnarray}
for all sufficiently large $n$. 

Therefore, for any given $\delta >0$ and $\epsilon >0$, we can make both $J \{6 \frac{c^2}{\sqrt{n}} + 16\sqrt{n}   \frac{M^2}{r_n^2 } \|F_s^{\alpha_0} -F_b\|^2\}^2$ and $P({g}_n \notin \mathcal{G}_c(n))$ less than $\epsilon$ for large enough $n$ and $c(>0)$, using the fact that $\sqrt{n}/r_n^2 \rightarrow 0$ and (\ref{eq:BoundG_c0}). Thus, $P(|\nu_n({g}_n)|>\delta) \leq 2\epsilon$ by \eqref{eq:Boundv_n0}.

A similar analysis can be done for the second term of \eqref{eq:EPbound}. The result now follows.

\subsubsection{Proof of Lemma \ref{lemma:Diffpart2}}\label{sec:proof_lemma:Diffpart2}
Note that
\begin{equation} \label{eq:splitDiff2}
\frac{\sqrt{n}\gamma_n}{c_n} (\hat F_{s,n}^{\gamma_n}- \bar F_{s,n}^{\gamma_n}) = \frac{\sqrt{n}\gamma_n}{c_n}  (\hat F_{s,n}^{\gamma_n}- F^{\alpha_0}_s) -\frac{\sqrt{n}\gamma_n}{c_n} (\bar F_{s,n}^{\gamma_n}- F^{\alpha_0}_s). \nonumber
\end{equation}
However, a simplification yields
\begin{eqnarray*}
\frac{\sqrt{n}\gamma_n}{c_n}  (\hat F_{s,n}^{\gamma_n}- F^{\alpha_0}_s) 
=\frac{1}{c_n}  \sqrt{n}(\mathbb{F}_n -F)  +\frac{\sqrt{n} M}{c_n r_n \alpha_0}(F- F_b ).
\end{eqnarray*}
Since $\sqrt{n}(\mathbb{F}_n -F)/{c_n}$ is $o_P(1)$, $\sqrt{n}=c_n r_n$, and $F - F_b = \alpha_0 (F^{\alpha_0}_s - F_b)$,  we have
\begin{equation} \label{eq:NaiveConv}
\frac{\sqrt{n}\gamma_n}{c_n M}  (\hat F_{s,n}^{\gamma_n}- F^{\alpha_0}_s)  \stackrel{P}{\rightarrow}  F_s^{\alpha_0}-F_b \quad \mbox{ in } \mathbb{H}.
\end{equation}
By applying the functional delta method (see Theorem 20.8 of \cite{vanderVaart98}) for the projection operator (see Theorem 1 of  \cite{Fils-VilletardEtAl08}) to \eqref{eq:NaiveConv}, we have 
\begin{equation}\label{eq:ProjNaiveConv}
\frac{\sqrt{n}\gamma_n}{c_n M}  (\bar{ F}_{s,n}^{\gamma_n}- F^{\alpha_0}_s)   \stackrel{P}{\rightarrow} \Pi \left( \left. F_s^{\alpha_0} - F_b \right\vert T_\mathcal{F}(F_s^{\alpha_0})\right) \quad \mbox{ in } \mathbb{H}.
\end{equation}
By combining \eqref{eq:NaiveConv} and \eqref{eq:ProjNaiveConv}, we have
\begin{equation}\label{eq:ProjNaiveSum}
\frac{\sqrt{n}\gamma_n}{c_n M} (\hat F_{s,n}^{\gamma_n}- \bar F_{s,n}^{\gamma_n})  \stackrel{P}{\rightarrow}  (F_s^{\alpha_0} - F_b) -\Pi \left( \left. F_s^{\alpha_0} - F_b \right\vert T_\mathcal{F}(F_s^{\alpha_0})\right) \quad \mbox{ in } \mathbb{H}.
\end{equation}
The result now follows by applying the continuous mapping theorem to \eqref{eq:ProjNaiveSum}.
We prove $V\ne 0$ by contradiction. Suppose that $V=0$, i.e., $(F_s^{\alpha_0} - F_b) \in T_\mathcal{F}(F_s^{\alpha_0})$. Therefore, for some distribution function $G$ and $\eta >0$, we have $V=(\eta+1) F_s^{\alpha_0} -  F_b - \eta G,$ by the definition of $T_\mathcal{F}(F_s^{\alpha_0})$. By the discussion leading to \eqref{eq:AltDefAlpha}, it can be easily seen that $\eta G$ is a sub-CDF, while $(\eta+1) F_s^{\alpha_0}-F_b$ is not (as that would contradict \eqref{eq:AltDefAlpha}). Therefore, $V \ne 0$ and thus $\int V^2 dF >0$.

\subsection{Proof of Theorem \ref{lemma:LimDist}} \label{sec:proof_lemma:LimDist}
The constant $c$ defined in the statement of the theorem can be explicitly expressed as $$c =-\left\{ \int V^2 dF \right\}^{-\frac{1}{2}} ,$$ where $$V= (F_s-F_b)- \Pi(F_s-F_b| T_\mathcal{F} (F_s)),$$ and $\Pi$ and $T_\mathcal{F} (\cdot)$ are defined in~\eqref{eq:DefProj} and~\eqref{eq:DefTancone}, respectively.

Let $x>0$. Obviously, 
\bee
P(r_n (\hat{\alpha}_0^{c_n} -\alpha_0) \le x) = 1-P(r_n (\hat{\alpha}_0^{c_n} -\alpha_0) > x).
\eee
By Lemma \ref{lemma:RateRightSide}, we have that  $P(r_n (\hat{\alpha}_0^{c_n} -\alpha_0) > x) \rightarrow 0$ if $c_n \rightarrow \infty$. Now let $x \le 0$. In this case the left hand side of the above display equals $P( \sqrt{n} \gamma_n d_n(\hat{F}^{\gamma_n}_{s,n}, \check{F}^{\gamma_n}_{s,n})  \le c_n),$
where $ \gamma_n= \alpha_0 + x/r_n$.  A simplification yields
\be
\frac{\sqrt{n}}{c_n}\gamma_n  (\hat F_{s,n}^{\gamma_n}-F^{\alpha_0}_s) 
 \stackrel{P}{\rightarrow}  -x(F_s^{\alpha_0}-F_b), \mbox{ in } \mathbb{H},  \label{eq:splitDiff3Part1and2} 
\ee
since$\sqrt{n}(\mathbb{F}_n -F)/{c_n}$ is $o_P(1)$; see the proof of Lemma \ref{lemma:Diffpart2} (Section \ref{sec:proof_lemma:Diffpart2}) for the details. By applying the functional delta method (cf.~Theorem 20.8 of \cite{vanderVaart98}) for the projection operator (see Theorem 1 of  \cite{Fils-VilletardEtAl08}) to \eqref{eq:splitDiff3Part1and2}, we have
\be
\frac{\sqrt{n}}{c_n}\gamma_n (\bar F_{s,n}^{\gamma_n}- F^{\alpha_0}_s) \stackrel{d}{\rightarrow} \Pi\left(\left.-x(F_s^{\alpha_0}-F_b)\right\vert T_\mathcal{F} (F_s^{\alpha_0}) \right) \quad \mbox{ in } \mathbb{H}. \label{eq:splitDiff3Part3}
\ee
Adding \eqref{eq:splitDiff3Part1and2} and \eqref{eq:splitDiff3Part3}, we get
\begin{equation*}
\frac{\sqrt{n}}{c_n}\gamma_n (\hat F_{s,n}^{\gamma_n}- \bar F_{s,n}^{\gamma_n}) \rightarrow -x(F_s^{\alpha_0}-F_b)- \Pi\left(\left.-x(F_s^{\alpha_0}-F_b)\right\vert T_\mathcal{F} (F_s^{\alpha_0}) \right) \quad \mbox{ in } \mathbb{H}.
\end{equation*}
By the continuous mapping theorem, we get $\sqrt{n}/c_n \gamma_n d(\hat F_{s,n}^{\gamma_n}, \bar F_{s,n}^{\gamma_n}) \stackrel{P}{\rightarrow} |x| \left\{\int V^2 dF\right\}^{1/2}.$
Hence, by Lemma \ref{lemma:Diff1}, 
\begin{equation*}
P(r_n (\hat{\alpha}_0^{c_n} -\alpha_0) \le x) \rightarrow
\begin{dcases}
    1 ,& \text{if } x > 0,\\
    1, & \text{if $x \le0$ and }  |x| \leq \left\{\int V^2 dF\right\}^{-1/2},\\
    0,              & \text{otherwise}.
\end{dcases}
\end{equation*}

\subsection{Proof of Theorem \ref{thm:BoundAlpha}}
Letting $c_n =H_n^{-1} (1-\beta),$ we have
\begin{eqnarray*}
	P(\alpha_0  \ge \hat \alpha_L) & = & P \left(\sqrt{n} \alpha_0 \ d_n(\hat{F}_{s,n}^{\alpha_0} ,\check{F}_{s,n}^{\alpha_0}) \le  c_n \right) \\
	& \ge& P \left( \sqrt{n} \alpha_0 \ d_n(\hat{F}_{s,n}^{\alpha_0} ,{F}_{s}^{\alpha_0}) \le  c_n \right)  =   H_n(c_n) = 1-\beta,
\end{eqnarray*}
where we have used the fact that $\alpha_0 d_n(\hat{F}_{s,n}^{\alpha_0},{F}_{s}^{\alpha_0})= d_n(\mathbb{F}_{n},F)$. Note that, when $\alpha_0=0$, $F=F_b$, and using \eqref{eq:DistNull} we get $$P(\alpha_0  \ge \hat \alpha_L)= P \left(\sqrt{n}  \ d_n(\F_n,F_b) \le  c_n \right) =P \left(\sqrt{n}  \ d_n(\F_n,F) \le  c_n \right)=1-\beta.$$
\subsection{Proof of Theorem \ref{thm:ConvDist}}
It is enough to show that $\sup_x |H_n(x) - G(x)| \stackrel{}{\rightarrow} 0$, where $G$ is the limiting distribution of the Cram\'{e}r-von Mises statistic, a continuous distribution. As $\sup_x |G_n(x) - G(x)| \stackrel{}{\rightarrow} 0$,  it is enough to show that
\begin{equation}
\sqrt{n} d_n(\mathbb{F}_n,F)-\sqrt{n} d(\mathbb{F}_n,F) \stackrel{P}{\rightarrow} 0.
\label{eq:lemma4.1}
\end{equation}
We now prove \eqref{eq:lemma4.1}. Observe that
\begin{eqnarray}\label{eq:ConvProb}
n (d_n^2 - d^2) (\mathbb{F}_n,F) = \sqrt{n} (\mathbb{P}_n - P)[\hat{g}_n] = \nu_n(\hat g_n),
\end{eqnarray}
where $\hat{g}_n =\sqrt{n}( \F_n-F)^2$, $\mathbb{P}_n$ denotes the empirical measure of the data, and $\nu_n := \sqrt{n} (\mathbb{P}_n - P)$ denotes the usual empirical process. We will show that $\nu_n(\hat g_n) \stackrel{P}{\rightarrow} 0$, which will prove \eqref{eq:ConvProb}.

For each positive integer $n$, we introduce the following class of functions $$\mathcal{G}_c(n) = \left\{ \sqrt{n}(H-F)^2 : H \in \mathcal{F} \mbox{ and }\sup_{t \in \mathbb{R}}|H(t)-F(t)|<\frac{c}{\sqrt{n}}\right\}.$$
Let us also define $$D_n :=\sup_{t \in \mathbb{R}} \sqrt{n}|\F_n(t)-F(t)|.$$ From the definition of $\hat{g}_n$ and $D_n^2$, we have  $\hat{g}_n(t) \leq \frac{1}{\sqrt{n}} D_n^2$, for all $t \in \mathbb{R}$. As $D_n  = O_P(1)$, for any given $\epsilon>0$, there exists $c>0$ (depending on $\epsilon$) such that
\begin{equation}\label{eq:BoundG_c}
P( \hat{g}_n \notin \mathcal{G}_c(n)) = P(\sqrt{n}\sup_t|\hat{g}_n(t)| \geq c^2) = P(D_n^2 \geq c^2)\leq \epsilon,
\end{equation}
for all sufficiently large $n$. Therefore, for any $\delta >0$, using the same sequence of steps as in \eqref{eq:Boundv_n0},
\begin{eqnarray}
P(|\nu_n(\hat{g}_n)|>\delta) 
&\leq & J_{[\;]} \frac{E[G^2_c(n)]}{\delta} + P(\hat{g}_n \notin \mathcal{G}_c(n)), \label{eq:Boundv_n}
\end{eqnarray}
where $G_c(n) := \frac{c^2}{\sqrt{n}}$ is an envelope for $\mathcal{G}_c(n)$ and $J_{[\;]}$ is a constant. Note that to derive the last inequality we have used the maximal inequality in Corollary (4.3) of Pollard (1989); the class $\mathcal{G}_c(n)$ is ``manageable'' in the sense of \cite{Pollard89} (as a consequence of equation (2.5) of \cite{VandeGeer00}).

Therefore, for any given $\delta >0$ and $\epsilon >0$, for large enough $n$ and $c>0$ we can make both $J_{[\;]}c^4/(\delta n)$ and $P(\hat{g}_n \notin \mathcal{G}_c(n))$ less than $\epsilon$, using \eqref{eq:BoundG_c} and \eqref{eq:Boundv_n}, and thus, $P(|\nu_n(\hat{g}_n)|>\delta) \leq 2\epsilon$. The result now follows.

\subsection{Proof of Theorem \ref{lemma:LwrBndAsym}} \label{sec:LwrBndAsym_proof}
The random variable $U$ defined in the statement of the theorem can be explicitly expressed as 
$$U:=\bigg[\int \left\{\mathbb{G}_F- \Pi( \mathbb{G}_F| T_\mathcal{F}(F_s^{\alpha_0})\right\}^2  dF\bigg]^{1/2},$$ where $\mathbb{G}_F$ is the $F$-Brownian bridge.

By the same line of arguments as in the proof of Lemma \ref{lemma:Diff1} (see Section~\ref{sec:proof_emma:Diff1}), it can be easily seen that $ \sqrt{n} \alpha_0\  d_n(\hat F_{s,n}^{\alpha_0}, \check{F}_{s,n}^{\alpha_0}) - \sqrt{n} \alpha_0 \ d(\hat F_{s,n}^{\alpha_0}, \bar F_{s,n}^{\alpha_0}) \stackrel{P}{\rightarrow} 0.$ Moreover, by Donsker's theorem,  $$ \sqrt{n} \alpha_0 (\hat F_{s,n}^{\alpha_0}-  F_s^{\alpha_0}) \stackrel{d}{\rightarrow} \mathbb{G}_F.$$ By applying the functional delta method for the projection operator, in conjunction with the continuous mapping theorem to the previous display, we have $$ \sqrt{n} \alpha_0 (\bar{F}_{s,n}^{\alpha_0}-  F_s^{\alpha_0}) \stackrel{d}{\rightarrow} \Pi (\mathbb{G}_F| T_\mathcal{F}(F_s^{\alpha_0})) \quad \text{in} \quad  \mathbb{H},$$  where $\Pi$, $T_\mathcal{F} (\cdot),$ and $F_s^{\alpha_0}$  are defined in~\eqref{eq:DefProj}, \eqref{eq:DefTancone}, and~\eqref{eq:F_s_alp_0}, respectively.  Hence, by an application of the continuous mapping theorem, we have $\sqrt{n} \alpha_0 d(\hat F_{s,n}^{\alpha_0}, \bar F_{s,n}^{\alpha_0}) \stackrel{d}{\rightarrow} U$. The result now follows.
\subsection{Proof of Lemma \ref{lemma:DecCritFn}}
Let $0 < \gamma_1 < \gamma_2 < 1$. Then,
\begin{eqnarray*}
\gamma_2  d_n(\hat{F}_{s,n}^{\gamma_2},\check{F}_{s,n}^{\gamma_2}) & \le & \gamma_2  d_n(\hat{F}_{s,n}^{\gamma_2},(\gamma_1/ \gamma_2) \check{F}_{s,n}^{\gamma_1} + (1 - \gamma_1/ \gamma_2) {F}_{b}) \\
& = & d_n(\gamma_1 \hat{F}_{s,n}^{\gamma_1} + (\gamma_2 - \gamma_1) F_b, \gamma_1 \check{F}_{s,n}^{\gamma_1} + (\gamma_2 - \gamma_1) {F}_{b}) \\
& \le & \gamma_1  d_n(\hat{F}_{s,n}^{\gamma_1},\check{F}_{s,n}^{\gamma_1}),
\end{eqnarray*}
which shows that $\gamma d_n(\hat{F}_{s,n}^{\gamma},\check{F}_{s,n}^\gamma) $ is a non-increasing function. To show that $\gamma d_n(\hat{F}_{s,n}^{\gamma},\check{F}_{s,n}^\gamma) $ is convex, let $0 < \gamma_1 < \gamma_2 < 1$ and $\gamma_3 = \eta \gamma_1 + (1 -\eta) \gamma_2$, for $0 \le \eta \le 1$. Then, by \eqref{eq:ConvexCrit} we have the desired result.

\subsection{Proof of Theorem \ref{thm:ConsF_sn}}
The constant $c$ and the function $Q$  defined in the statement of the theorem can be explicitly expressed  as $$c =d(Q,\Pi \left(Q| T_\mathcal{F}(F_s)\right)),$$ and  $$Q:=  (F_s-F_b) \left\{\alpha_0^2\int V^2 dF\right\}^ {-1/2},$$ where $$r_n= \sqrt{n}/c_n, \;\; \;V= (F_s-F_b)- \Pi(F_s-F_b| T_\mathcal{F} (F_s)),$$ and $\Pi$ and $T_\mathcal{F} (\cdot)$ are defined in~\eqref{eq:DefProj} and~\eqref{eq:DefTancone}, respectively. 

Recall the notation of Section~\ref{sec:proof_lemma:RateLeftSide}. Note that from (\ref{eq:naive}), $$\hat F_{s,n}^{\check \alpha_n}(x) = \frac{\alpha_0}{\check \alpha_n} F_s(x) + \frac{\check \alpha_n - \alpha_0}{\check \alpha_n} F_b(x) + \frac{(\F_n - F)(x)}{\check \alpha_n},$$
for all $x \in \R$. Thus we can bound $\hat F_{s,n}^{\check \alpha_n}(x)$ as follows:
\begin{eqnarray*}
\frac{\alpha_0}{\check \alpha_n} F_s(x) - \frac{|\check \alpha_n - \alpha_0|}{\check \alpha_n} - \frac{D'_n}{\check \alpha_n} \le \hat F_{s,n}^{\check \alpha_n}(x) \le \frac{\alpha_0}{\check \alpha_n} F_s(x) + \frac{|\check \alpha_n - \alpha_0|}{\check \alpha_n} + \frac{D'_n}{\check \alpha_n},
\end{eqnarray*}
where $D'_n = \sup_{x \in \R} |\F_n(x) - F(x)|.$  As both the upper and lower bounds are monotone, we can always find a version of $\check F_{s,n}^{\check \alpha_n}$ such that 
\begin{eqnarray*}
\frac{\alpha_0}{\check \alpha_n} F_s- \frac{|\check \alpha_n - \alpha_0|}{\check \alpha_n} - \frac{D'_n}{\check \alpha_n} \le \check F_{s,n}^{\check \alpha_n} \le \frac{\alpha_0}{\check \alpha_n} F_s+ \frac{|\check \alpha_n - \alpha_0|}{\check \alpha_n} + \frac{D'_n}{\check \alpha_n}.
\end{eqnarray*}
Therefore, \begin{eqnarray*}
|\check F_{s,n}^{\check \alpha_n} - F_s| & \le & \frac{|\alpha_0 - \check \alpha_n|}{\check \alpha_n} F_s + \frac{|\check \alpha_n - \alpha_0|}{\check \alpha_n} + \frac{D'_n}{\check \alpha_n} \\
& \le & 2 \frac{|\alpha_0 - \check \alpha_n|}{\check \alpha_n} + \frac{D'_n}{\check \alpha_n} \stackrel{P}{\rightarrow} 0,
\end{eqnarray*}
as $n \rightarrow \infty$, using the fact $\check \alpha_n \stackrel{P}{\rightarrow} \alpha_0 \in (0,1)$. 
Furthermore, if $q_n(\check{\alpha}_n-\alpha_0) =O_P(1)$, where $q_n/ \sqrt{n}\rightarrow0$, it is easy to see that $q_n |\check F_{s,n}^{\check \alpha_n} - F_s|=O_P(1)$, as $q_n  D'_n =o_P(1)$.
Note that 
\[
r_n \hat{\alpha}_0^{c_n} (\hat{F}_{s,n}^{\hat{\alpha}_0^{c_n}} -F_s) = r_n(\mathbb{F}_n-F) + r_n (\alpha_0-\hat{\alpha}_0^{c_n}) \ (F_s-F_b)
\]
Thus\[ 
 \sup_{x \in \mathbb{R}}    | r_n (\hat{F}_{s,n}^{\hat{\alpha}_0^{c_n}} -F_s)(x)  - Q(x)| \stackrel{P}{\rightarrow} 0.\]
 
 Hence by an application of  functional delta method for the projection operator, in conjunction with the continuous mapping theorem, we have
\[
 r_n d(\check{F}_{s,n}^{\hat{\alpha}_0^{c_n}} ,F_s)\stackrel{P}{\rightarrow} d(Q,\Pi(Q| T_\mathcal{F}(F_s))).  \]

\subsection{Proof of Theorem \ref{thm:Consden_sn}}
Let $\epsilon_n := \sup_{x \in \R} | \check F_{s,n}^{\check \alpha_n}(x) - F_s(x) |$. Then the function $F_s + \epsilon_n$ is concave on $[0,\infty)$ and majorises $\check F_{s,n}^{\check \alpha_n}$. Hence, for all $x \in [0,\infty)$, $\check F_{s,n}^{\check \alpha_n}(x) \le F_{s,n}^{\dagger}(x) \le F_s(x) + \epsilon_n$, as $F_{s,n}^{\dagger}$ is the LCM of $\check F_{s,n}^{\check \alpha_n}$. Thus,
$$ -\epsilon_n \le \check F_{s,n}^{\check \alpha_n}(x) - F_s(x) \le F_{s,n}^{\dagger}(x) - F_s(x) \le \epsilon_n,$$ and therefore, $$ \sup_{x \in \R} |F_{s,n}^{\dagger}(x) - F_s(x)| \le \epsilon_n.$$ By Theorem \ref{thm:ConsF_sn}, as $\epsilon_n \stackrel{P}{\rightarrow} 0$, we must also have (\ref{eq:SupConvF_snDagger}).

The second part of the result follows immediately from the lemma is page 330 of \cite{RWD88}, and is similar to the result in Theorem 7.2.2 of that book.

\bibliographystyle{chicago}
\bibliography{SigNoise}
\end{document}